\newcommand{\norm}[1]{\left|\left|#1\right|\right|}
\newtheorem{theorem}{Theorem}
\newtheorem{corollary}{Corollary}
\newtheorem{lemma}{Lemma}
\newtheorem{definition}{Definition}
\begin{document}
\title{Rigidity of quantum steering and one-sided device-independent verifiable quantum computation}

\author[1]{Alexandru Gheorghiu\footnote{a.gheorghiu@sms.ed.ac.uk}}
\author[1]{Petros Wallden}
\author[1,2]{Elham Kashefi}
\affil[1]{School of Informatics, University of Edinburgh,}
\affil[ ]{10 Crichton Street, Edinburgh EH8 9AB, UK}
\affil[2]{CNRS LTCI, Departement Informatique et Reseaux,}
\affil[ ]{Telecom ParisTech, Paris CEDEX 13, France}

\date{}
\maketitle

\begin{abstract}
The relationship between correlations and entanglement has played a major role in understanding quantum theory since the work of Einstein, Podolsky and Rosen \cite{epr}. Tsirelson proved that Bell states, shared among two parties, when measured suitably, achieve the maximum non-local correlations allowed by quantum mechanics \cite{tsirelson}. Conversely, Reichardt, Unger and Vazirani showed that observing  the maximal correlation value over a sequence of repeated measurements, implies
that the underlying quantum state is close to a tensor product of maximally entangled states and, moreover, that it is
measured according to an ideal strategy \cite{ruv2}. However, this strong rigidity result comes at a high price, requiring
a large number of entangled pairs to be tested.
In this paper, we present a significant improvement in terms of the overhead by instead considering quantum steering where the device of the one side is trusted.
We first demonstrate a robust one-sided device-independent version of self-testing, which characterises the shared state and measurement operators of two parties up to a certain bound. 
We show that this bound is optimal up to constant factors and we generalise the results for the most general attacks. This leads us to a rigidity theorem for maximal steering correlations.
As a key application we give
a one-sided device-independent protocol for verifiable delegated quantum computation, and compare it to other existing protocols, to highlight the cost of trust assumptions. Finally, we show that under reasonable assumptions, the states shared in order to run a certain type of verification protocol must be unitarily equivalent to perfect Bell states.
\end{abstract}

\vspace{2pc}
\noindent{\it Keywords}: Quantum Steering, Rigidity, Verifiable Delegated Quantum Computation, Device Independence

\section{Introduction}
Quantum \emph{steering correlations} first appeared in the seminal paper of Einstein, Podolsky and Rosen \cite{epr} to  support their argument that 
quantum mechanics is incomplete. It was  
later formally introduced by Schr{\"o}dinger \cite{sch}. 
The observation made, was that measurements performed on one half of a bipartite entangled state can steer the state of the other half. This means that the reduced state of one side can be correlated with the classical outcome of the other party in a way that is possible only if the two parties shared entanglement (assuming the correctness of quantum mechanics).
A similar effect occurs when examining purely classical correlations, which led Bell to derive his inequalities and reveal the non-local character of quantum mechanics \cite{bell}. 
The study of correlations that characterise quantum systems has developed much since then, and now includes the already mentioned non-local and steering correlations and quantum discord correlations \cite{discord1, discord2, discord3}. Observing these correlations, given suitable assumptions, is an indication that a particular system behaves quantum mechanically 
and is used to verify the quantumness of that system. Since these correlations are specific to quantum systems, it is also anticipated that they could be
the source of certain new practical applications. 

In particular, the existence of
\emph{non-local correlations}, apart from revealing a counter intuitive feature of nature, has led to the development of \emph{device-independent} protocols for quantum key distribution (QKD), quantum random number generation (QRNG) and verified delegated quantum computation (VDQC) \cite{ruv2, ruv, gkw, hpf, mckague, deviceindependentqkd, diqkdsecurity, Pironio2010, diqrng}.
In these applications two parties, Alice and Bob, do not trust their devices throughout the run of the protocol. Instead, by obtaining non-local correlations between the classical outputs of their devices, they can
test their devices and obtain the correct functionality.
This is achieved by confirming that the states they were sharing were entangled and measured in such a way that the resulting correlations could not have come from any classical system. Thus, the two parties 
either obtain a correct and secure result
or detect that the devices are working incorrectly (thus being insecure) and abort.
Such protocols are highly desired for practical implementation as they provide a higher level of security, unachievable by classical systems. 
However, there are certain practical issues 
that hinder their development such as the need for high detection thresholds, high fidelity transmission channels, space-like separation and a high overhead \cite{PABGMS, deviceindependentqkd}.

The practical limitations of device-independent protocols has motivated the revival of research into quantum steering, which has simpler trust assumptions, where the state of one (trusted) side is steered by operations on the other (untrusted) side. The existing research involves the characterisation of steering correlations both  
 analytically and geometrically \cite{steering1, steering2, steering3}, their relationship to other types of correlations \cite{steering4, steering5} and their application to cryptographic tasks such as QKD and QRNG \cite{BCWSW12, QRNGsteering}. Experiments testing quantum steering inequalities \cite{loophole_steering} (loophole-free) and testing local but steerable states \cite{local_steering} have also been performed.
In the case of QKD, Branciard et al. showed in \cite{BCWSW12}, that there is a natural correspondence between the trust assumptions of the protocol and the types of correlations between the two parties. Using this correspondence, Branciard et al. introduced \emph{one-sided device-independent} QKD, which uses steering correlations in order to distil a shared secret key. In the cryptographic setting, such correlations allow only one device to be untrusted leading to a reduction in the overall experimental requirements of the protocol. 
To be precise, they showed that in typical device-independent settings, the detection efficiency of Alice and Bob should be above $91.1\%$, whereas their one-sided protocol lowers that to $65.9\%$.
A similar relation between trust assumptions and correlations is exploited for QRNG as well \cite{QRNGsteering}. In this case it was shown that a detection efficiency of $50\%$ is sufficient for random number generation in the steering setting, versus $70.7\%$ in the device-independent setting.

For VDQC, we will show that using steering correlations leads to a reduction in communication, compared to the device independent setting. However, the VDQC case presents a complication.
Firstly, the setting of VDQC is slightly different than that of QKD and QRNG. In the latter two, Alice and Bob are two parties that are working together towards a common objective (obtaining a shared key or certified randomness), using possibly untrusted quantum devices. In VDQC, Alice is a client who is delegating a difficult computation to Bob, an untrusted quantum server. She does not have the resources to perform the quantum computation herself, and while Bob does, he cannot be trusted to do so. So Alice needs a way to verify that Bob is performing the correct quantum computation. To do this, she utilizes a quantum device in her local lab which she may or may not trust. Regardless of this, Bob is always assumed to be unstrusted. This is an important distinction from the collaborative setting of QKD and QRNG.
Moreover, approaches to QKD and QRNG rely on using a bound on the correlations of the parties' devices in order to derive a bound for a quantity of interest, such as key rate, mutual information, entropy etc \cite{Pironio2010, boundEnt, HHHO, boundsQKD, randomnessExpansion, Colbeck2012}. 
In contrast to this, existing protocols for device-independent VDQC use the bound on correlations to recover the underlying quantum state used in the protocol, as well as the operations being performed on this quantum state \cite{ruv, gkw, mckague, hpf}. This allows for the correctness certification (referred to here as verification) of an arbitrary universal quantum computation strictly from the non-local correlations. To achieve this one needs, at the same time, to obtain close to \emph{maximal} non-local correlations \emph{and} not only recover a characterisation of one Bell pair but of a tensor product of Bell pairs. Such a result is possible due to the rigidity of repeated CHSH games, manifesting non-local correlations, as shown in \cite{ruv}.
In this context, rigidity means that if two non-communicating devices playing CHSH games are achieving the optimal win rate, then they share a state which is close to a tensor product of maximally entangled states and, moreover, their strategy is fixed and uniquely determined. In general, we define rigidity as the ability to derive a robust bound on the distance between a real state (the state shared by Alice and Bob) and many copies of some target state (such as a tensor product of Bell pairs) as well as between target measurements and real measurements, from a bound on correlations. This is similar to the concept of \emph{self-testing}, except that for self-testing one only wants to obtain a single copy of the target state from observed correlations. For instance, it is possible to self-test a Bell state and its associated measurements using the CHSH game. So, one way to arrive at rigidity is to consider multiple CHSH games and combine the self-testing results\footnote{This needs to be done in a non-trivial way since the games might not be independent from each other, in general.}.
However, in general, it is necessary to play a large number of games in order to certify few Bell states. For this reason
the existing device-independent VDQC protocols have impractically large communication complexity \cite{ruv, gkw, mckague}. \\
There is an additional aspect to be mentioned. As stated, in the VDQC setting we have a \emph{trusted} client, also referred to as \emph{verifier}, and an \emph{untrusted} server, also referred to as \emph{prover}.
This is the standard cryptographic scenario when considering verification of computation, whether it is quantum or classical \cite{abe, cver}.
The asymmetry in trust is precisely the same as in the steering scenario which is why VDQC is the most natural application for steering correlations. While it is definitely possible to introduce such an asymmetry in QKD, for example, the typical setting is to have the parties involved be identical in all respects.

In this paper, we relax the trust assumptions and derive a rigidity result for quantum steering correlations, showing that it leads to a reduced overhead compared to the setting of non-local correlations. The result proves that observing steering correlations close to their maximal value, implies a tensor product structure of Bell pairs and fixed measurement operators for the untrusted party, up to local isometry. 
The only assumptions made in deriving this result are the correctness of quantum mechanics and the fact that one party is completely trusted, having a complete characterisation of her Hilbert space and measurement operators.
In particular, the rigidity result makes no i.i.d. (independent and identically distributed) assumption regarding the shared state or strategy of the untrusted party.
In analogy to one-sided device-independent QKD and QRNG, this leads us to a one-sided device-independent VDQC protocol, having improved round complexity over the device-independent versions. More generally, the rigidity result is relevant in its own right since it is applicable to any protocol that uses steering correlations.

The structure of this paper is organized as follows. In section~\ref{sect:steering} we give our main result, which is that using maximal steering correlations we determine,
up to a local isometry, a tensor product of Bell pairs and the operations of the untrusted party. 
To derive this result, we first give a procedure to characterise a single Bell pair from observed correlations (single-shot rigidity), in the i.i.d. setting in subsection~\ref{subsect:selftesting}. This is done by making a protocol for self-testing from steering correlations which gives us a bound on the distance of one shared state from a perfect Bell pair. We also prove the optimality of this bound. We then remove the i.i.d. assumption in subsection~\ref{subsect:non-iid}, thus showing that one can extract a Bell pair from the observed statistics even in the fully adversarial (one-sided) setting.
Then, in subsection~\ref{subsect:rigidity} we use the previous result in order to determine a tensor product structure of Bell pairs and the measurement operators of the untrusted party. 
These steps are shown in Figure~\ref{fig:steps}.
Note that rigidity does not follow directly from repeatedly applying the single-shot result as this would require independence. Instead, we use an approach similar to that of \cite{ruv}, by defining a quantum steering game and showing that high win rates in this game determine the states and strategies of the players, up to local isometry. 

\begin{figure}[htbp!]
\centering
\includegraphics[scale=0.38]{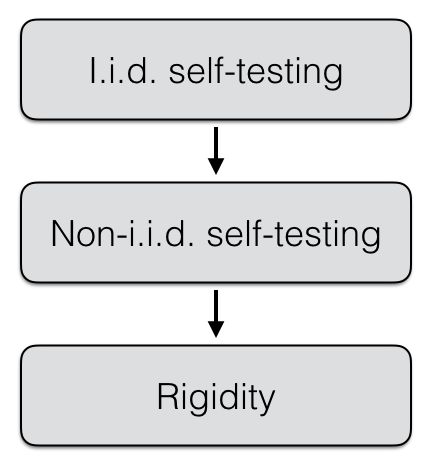}
\caption{Steps towards rigidity}
\label{fig:steps}
\end{figure}

Lastly, in section~\ref{sect:verification} we briefly discuss existing approaches to VDQC (subsection~\ref{subsect:old_VDQC}) and then use the rigidity of quantum steering  to construct  a one-sided device-independent verification protocol (subsection~\ref{subsect:new_VDQC}).  We also show that for the types of protocols we have considered, the required entangled states should be close to Bell pairs (subsection \ref{subsect:totally_steerable}).

\section{State and strategy certification via steering}\label{sect:steering}

While quantum steering has been studied extensively in the context of verifying entanglement, it is important to elaborate on the subtle difference between verifying entanglement and verifying maximal entanglement and how this relates to the verification of quantum computation. It has already been shown that it is possible to verify, from steering correlations, that a state shared by two parties is entangled. In fact, this type of verification can be done in a fully device-independent way (under certain assumptions), and has been tested experimentally \cite{untrustedsteering, disteering}. However, it should be noted that these results use 
steering correlations as a witness for quantum entanglement. The purpose of a witness is to separate between entangled and non-entangled states and its existence is proven through the violation of a steering inequality, in analogy to a Bell inequality. 
In our setting, we do not just require correlations that violate a steering inequality, rather we require the correlations to  
saturate the maximum achievable value. This enables us to certify the state and strategies used in producing these correlations, which in turn, can be used for verifying quantum computations. To our knowledge, there are no other results which treat the case of saturated steering correlations.

The setting that we will consider is similar to the one used in \cite{steering4, BCWSW12}. This involves two parties, Alice and Bob, where Alice has a trusted measurement device, while Bob has an untrusted measurement device. They share an unknown joint quantum state, $\Ket{\psi}$, which without loss of generality can be assumed to be pure. Alice instructs Bob to perform a measurement on their joint state.
For example, if the shared state is $\Ket{\phi_+} = (\Ket{00} + \Ket{11})/\sqrt{2}$, Alice can instruct Bob to measure the $Z$ observable on his qubit and report the outcome. The measurement steers Alice's qubit to a particular quantum state. She can then measure her state to confirm that her qubit was indeed steered to the expected state.
We shall refer to a collection of such measurements as a \emph{steering game} consisting of rounds of single measurements performed by Alice and Bob. They win a round of the game if Bob's reported outcome matches that of Alice. 
We elaborate more on this in subsection~\ref{subsect:rigidity}.

The correlations between their outcomes are called steering correlations if they cannot be explained by a \emph{local hidden state model}.
This happens when the expected value for the correlation of their outcomes obeys certain steering inequalities. However, whenever a steering inequality violation is observed, Alice can conclude that her state was indeed steered by Bob via their shared entanglement.
In the rest of this section we prove that violating the steering inequalities maximally (up to order $O(\epsilon)$) leads to recovering a tensor product of Bell pairs with measurement operators close to ideal and we quantify this ``closeness''.
Proving this rigidity result is done as follows.
First, building on the work regarding self-testing the singlet \cite{selftesting} by McKague, Yang and Scarani, we derive a robust self-testing protocol of a single Bell state, where one side is trusted, while the other is not. 
This is achieved by using maximal steering correlations in order to fully characterise, up to a bound of order $O(\sqrt{\epsilon})$, the 
quantum state shared between the two parties. 
Importantly, in \cite{selftesting} it is assumed that the quantum states are i.i.d. We also make this assumption in our self-testing result (subsection~\ref{subsect:selftesting}), however we remove it later on (subsection~\ref{subsect:non-iid}).
We show that the bound for self-testing
is tight up to constant factors. Then, we remove the i.i.d. assumption, arriving at a new bound for characterising the shared state in the completely adversarial setting, from the observed correlations. The way in which we remove the i.i.d. assumption is not specific to steering and can be applied to the non-local setting as well, thus complementing the work from \cite{selftesting}.
Using this result, and a game-based argument we show that saturating the steering inequalities enables us to recover the quantum state of a tensor products of Bell pairs and characterise the untrusted measurement operators acting on these states, thus proving the rigidity of quantum steering (subsection~\ref{subsect:rigidity}).
Throughout this paper we use $||\Ket{\psi}|| = \sqrt{\Braket{\psi | \psi}}$ as the $l^2$-norm and $TD(\rho, \sigma) = \frac{1}{2} Tr(\sqrt{(\rho - \sigma)(\rho - \sigma)^{\dagger}})$ as trace distance. Additionally, for the trace distance of pure states we will write $TD(\Ket{\psi}, \Ket{\phi}) = TD(\Ket{\psi}\Bra{\psi}, \Ket{\phi}\Bra{\phi})$.

\subsection{One-sided device-independent self-testing (i.i.d.)} \label{subsect:selftesting}
We start by proving a one-sided device-independent version of robust self-testing, whereby the measurement statistics allow us to determine the existence of a single Bell pair. Specifically, as in \cite{selftesting}, successfully self-testing a maximally entangled state between Alice and Bob means:
\begin{itemize}
\item There exist local bases (local isometry) in which their shared state can be viewed as a Bell pair, possibly in tensor product with some additional state.
\item We can infer the existence of local (physical) observables on Alice and Bob's side, which act non-trivially on the shared state.
\end{itemize}
This is similar to the works of \cite{hpf, selftesting, mayersyao,  selftesting2, selftesting3, selftesting4, selftesting5, PhysRevA.91.052111} and in fact we adopt a similar notation to that of \cite{selftesting}.
The main difference with respect to those works, is that in our case we trust Alice's measurements.
The specific observables we consider for her are Pauli $X$ and $Y$. We will find that Bob must also measure these observables to saturate the correlations of measurement outcomes. The Bell state under consideration is the $XY$-determined Bell state $\Ket{\psi_+} = (\Ket{01} + \Ket{10})/\sqrt{2}$.
The result can be generalized for any pair of non-commuting observables and any Bell state. 
It should be mentioned that a Bell state has a local hidden variable model for Pauli basis measurements by both parties, but it does not have a local hidden state model. This highlights the difference between non-local and steering correlations and emphasizes the importance of trusting Alice's system in order characterise the shared state and Bob's measurements.
The reason for our choice of observables and state is so that we can easily use this result for a VDQC protocol, presented in the next section, that uses $XY$-plane states, as explained in section~\ref{sect:verification}.
Note that Alice and Bob need to perform multiple measurements in order to approximate the expectation values of their observables. Since we do not trust Bob, we cannot in general assume the independence of his measurements. However, we will make an i.i.d. assumption in the beginning, to prove our main self-testing theorem. We then remove the assumption by modelling the measurement process as a martingale and using the Azuma-Hoeffding inequality \cite{azuma, hoeffding}, as is also done in \cite{Pironio2010, hpf}. A schematic illustration of our setting can be found in Figure~\ref{fig:self_test}.
\begin{figure}[htbp!]
\centering
\includegraphics[scale=0.3]{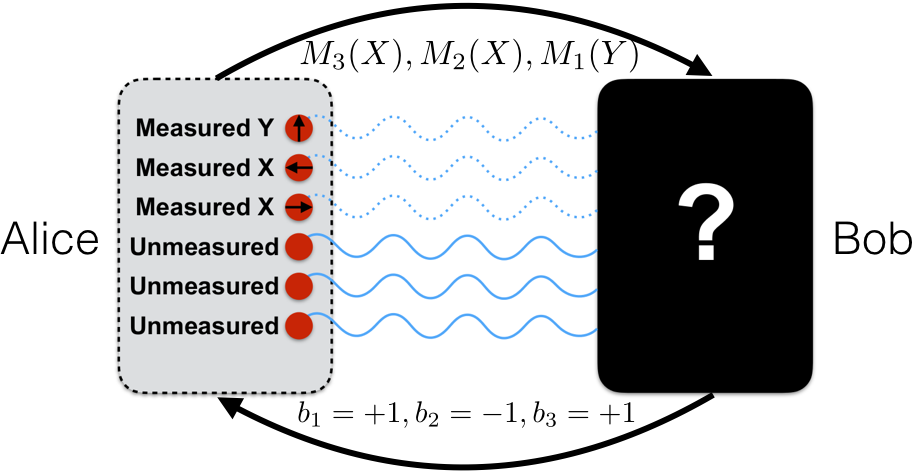}
\caption{Setting for i.i.d. self-testing}
\label{fig:self_test}
\end{figure}

We start by proving a theorem analogous
to Theorem 1 of \cite{selftesting}. Just as in \cite{selftesting}, the primed observables denote untrusted operators. The shared state, which is also assumed to be untrusted is
denoted as $\Ket{\psi}$. The independence (i.i.d.) assumption means that Alice and Bob share the same state $\Ket{\psi}$ in each round of measurements. Furthermore, the untrusted measurement observables of Bob can be assumed to be the same each time. This is because his observables can include action on his private ancilla. Since Alice's side is trusted, we
assume that she has a single qubit measurement device and so, in each round of measurement, her part of $\Ket{\psi}$
is a single qubit state.  Given this setting, the theorem is stated as follows: 
\begin{theorem}\label{theorem:selftesting}
Suppose that from the observed correlations of measurements performed by Alice and Bob and knowing that Alice is measuring the $\{X$, $Y\}$ observables (denoted $X_A$, $Y_A$), one can deduce the existence of local observables $\{X'_B$, $Y'_B\}$ on Bob's side,  with eigenvalues $\pm 1$, which act on a bipartite state $\ket{\psi}$ such that:
\begin{eqnarray}
\label{eq:th1cond1}
||(X_{A} -  X'_B) \ket{\psi}|| &\leq &  \gamma_1, \\
\label{eq:th1cond2}
||(Y_{A} -  Y'_B) \ket{\psi}|| &\leq &  \gamma_1, \\
\label{eq:th1cond3}
||(X'_B Y'_B + Y'_B X'_B) \ket{\psi}|| &\leq &  \gamma_2. 
\end{eqnarray}
Then there exists a local isometry $\Phi = I  \otimes \Phi_{B}$ and a state $\ket{junk}_{B}$ such that
\begin{equation} \label{eq:selftest}
\norm{ \Phi(M_A N'_B\ket{\psi}) - \ket{junk}_{B} M_A N_B\ket{\psi_{+}}_{AB}} \leq \varepsilon 
\end{equation}
with $M_A, N_B \in \{I, X, Y \}$, $N'_B \in \{I, X'_B, Y'_B\}$, $\varepsilon=3 \gamma_1 + \gamma_1^2/4 + 2\gamma_2$, $\ket{\psi_{+}} = (\Ket{01} + \Ket{10})/\sqrt{2}$.
\end{theorem}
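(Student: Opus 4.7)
The plan is to follow the standard self-testing template of \cite{selftesting, mayersyao}, adapted to the one-sided setting in which Alice is trusted: explicitly construct a local isometry $\Phi = I_A \otimes \Phi_B$ whose Bob-side component $\Phi_B$ is a SWAP-like circuit built from the untrusted observables $X'_B$ and $Y'_B$, and then verify by direct calculation that its action on $\ket{\psi}$ is $\varepsilon$-close to $\ket{junk}_B \otimes \ket{\psi_+}_{AB}$, with the same circuit also handling the operator insertions $M_A N'_B$.

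To define $\Phi_B$, introduce a fresh ancilla on Bob's side in state $\ket{0}$ and define the auxiliary operator $Z'_B = -i X'_B Y'_B$. The anticommutation bound (\ref{eq:th1cond3}) guarantees that $Z'_B$ acts, up to an error of order $\gamma_2$, as a Hermitian $\pm 1$-observable on the support of $\ket{\psi}$; combining (\ref{eq:th1cond1}) with (\ref{eq:th1cond2}) further gives $Z'_B \ket{\psi} \approx -i X_A Y_A \ket{\psi}$. The circuit $\Phi_B$ is then the standard SWAP extraction circuit used in \cite{selftesting}, with $X'_B$ playing the role of $\sigma_x$ and $Z'_B$ that of $\sigma_z$: a Hadamard on the ancilla, a controlled-$X'_B$ (ancilla control, Bob register target), another Hadamard, a controlled-$Z'_B$, and a final Hadamard. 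If $X'_B$ and $Z'_B$ were honest Paulis, this would exactly swap Bob's qubit onto the ancilla; since Alice is already trusted, no circuit is needed on her side.

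Applying $\Phi_B$ to $\ket{\psi}$ expands into a sum of four branches indexed by the intermediate ancilla basis states, each branch being a short product of $X'_B$'s, $Y'_B$'s, and $Z'_B$'s acting on $\ket{\psi}$. One then systematically migrates these operators onto Alice's trusted side using (\ref{eq:th1cond1}) and (\ref{eq:th1cond2}), and symmetrises any wrongly ordered $X'_B Y'_B$ products using (\ref{eq:th1cond3}). Alice's Paulis commute freely through all the Bob-side gates, so once every untrusted operator has been migrated, Bob's original register decouples from the ancilla and factorises out as $\ket{junk}_B$, while Alice's Paulis acting on her half of $\ket{\psi}$ together with the ancilla assemble into $\ket{\psi_+}$. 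For the operator-insertion case $M_A N'_B \ket{\psi}$, $M_A$ commutes through $\Phi_B$ trivially, and the same migration strategy shows that any $N'_B$ inserted before the circuit is carried (approximately) onto the corresponding Pauli $N_B$ acting on the extracted Bell qubit, since the isometry implements the pseudo-Pauli-to-Pauli correspondence up to the same error budget.

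The main obstacle is the error bookkeeping, not any conceptual novelty: one must count precisely how many times each of (\ref{eq:th1cond1})--(\ref{eq:th1cond3}) is invoked along each of the four branches of the ancilla expansion and in each of the $N'_B \in \{I, X'_B, Y'_B\}$ cases, and verify that the worst-case sum, collected by the triangle inequality and the fact that all the operators involved are contractions, yields exactly $\varepsilon = 3\gamma_1 + \gamma_1^2/4 + 2\gamma_2$. In particular, the cross term $\gamma_1^2/4$ arises from a second-order contribution when two separate approximate substitutions appear multiplicatively within the same branch, and the coefficient $2$ of $\gamma_2$ reflects the need to invoke the anticommutation bound twice during the reduction of $Z'_B$-terms.
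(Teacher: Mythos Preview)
Your overall strategy---build a SWAP-type local isometry on Bob's side, migrate Bob's untrusted operators to Alice's trusted side via (\ref{eq:th1cond1})--(\ref{eq:th1cond2}), factor out the residual register as $\ket{junk}_B$, then handle the $N'_B$-insertion by commuting it through the circuit---is exactly what the paper does. Two differences are worth noting. First, the paper's isometry works natively in the $XY$-plane: the ancilla is prepared in $\ket{+_y}$, the Hadamard is replaced by $P=\tfrac{1}{\sqrt 2}(X+Y)$, and the two controlled gates are controlled-$Y'_B$ and controlled-$X'_B$ (control on $\ket{-_y}$). This avoids introducing $Z'_B=-iX'_BY'_B$ altogether, which matters because your $Z'_B$ is not even Hermitian off the approximate-anticommutation locus and every appearance of it silently spends part of the $\gamma_2$-budget. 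Your five-gate description also has one Hadamard too many relative to the standard extraction circuit.

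The substantive gap is in your error accounting. The $\gamma_1^2/4$ term does \emph{not} arise from ``two approximate substitutions appearing multiplicatively''; products like $X'_BY'_B\ket\psi$ are handled by a triangle inequality and cost only $2\gamma_1$, with no quadratic term. In the paper the $\gamma_1^2/4$ comes from a \emph{normalisation} step you have omitted: after all Bob operators are migrated to Alice's side, the resulting vector factors exactly as $a\sqrt{2}\,\ket{\alpha}_B\ket{\psi_+}$ where $\ket\psi = a\ket{\alpha}_B\ket{+_y}_A + b\ket{\beta}_B\ket{-_y}_A$, and one must bound $|a\sqrt 2 - 1|$ using (\ref{eq:th1cond1}); this is what produces $\gamma_1^2/4$. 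Likewise, the $2\gamma_2$ does not come from ``reducing $Z'_B$-terms'' in the base case $N'_B=I$ (that case costs no $\gamma_2$ at all in the paper's circuit) but from the commutation step $\Phi(X'_B\ket\psi)\approx X_B\Phi(\ket\psi)$, which requires pushing $X'_B$ past $Y'_B$ twice. If you follow your stated bookkeeping you will not land on $3\gamma_1+\gamma_1^2/4+2\gamma_2$.
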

\begin{proof}[Proof sketch]
\begin{figure}[htbp!]
\centering
\includegraphics[scale=1.0]{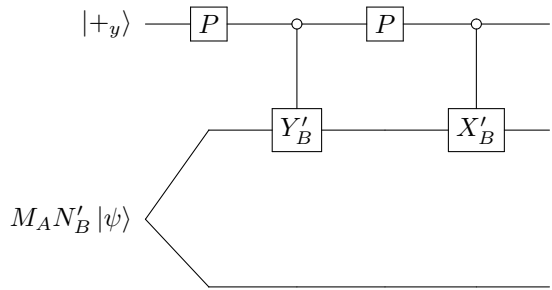}
\caption{Local isometry $\Phi$}
\label{fig:local_isometry}
\end{figure}
The proof relies on finding an isometry which, given conditions~(\ref{eq:th1cond1}-\ref{eq:th1cond3}),
maps $\Ket{\psi}$ to an almost perfect Bell state. 
Similar to \cite{selftesting}, $M_A$ and $N'_B$ are the physical observables of Alice and Bob which act on the shared state.
The isometry we considered is illustrated in Figure~\ref{fig:local_isometry}, where $P = \frac{1}{\sqrt{2}}(X + Y)$ and the control gates act on the target state when the control qubit is in the $\Ket{-_y}$ state instead of the $\Ket{1}$ state, and act as identity when the control is in the $\Ket{+_y}$ state instead of the $\Ket{0}$ state. Here, $\Ket{+_y}$ and $\Ket{-_y}$ are the two eigenstates for the Pauli $Y$ operator, corresponding to the $+1$ and $-1$ eigenvalues, respectively. The fact that we are using these states and the $P$ operator is a consequence of shifting everything to the $XY$-plane of the Bloch sphere instead of the more familiar $XZ$-plane.
It should be noted that $M_A$ is trusted and acts on Alice's part of the shared state, whereas $N'_B$, acting on Bob's part of $\Ket{\psi}$, is untrusted. However, the action of $N'_B$ is equivalent to the honest $N_B$ acting on the ancilla introduced by $\Phi$. Having the isometry, we write out its action on the state $M_A N'_B \Ket{\psi}$ and use inequalities~(\ref{eq:th1cond1}-\ref{eq:th1cond3}) together with the trace preserving properties of the operators and triangle inequalities to prove condition~\ref{eq:selftest}.
The full proof of Theorem~\ref{theorem:selftesting} can be found in Appendix~\ref{sect:selftesting}. 
\end{proof}

Our next result is to show that conditions~(\ref{eq:th1cond1}-\ref{eq:th1cond3}) are satisfied if an almost maximal violation of a particular steering inequality occurs. 
As mentioned before, the requirement for maximal violation is in contrast to previous work on entanglement detection. In that case, one uses the steering inequality as an entanglement witness to separate the Hilbert space of possible shared states into a subspace of entangled states manifesting steering correlations and its complement. Violating the inequality determines that the shared state lies in the subspace of entangled steerable states. For example, similar to the works of \cite{expSteering, cavalcanti1}, assuming Bob measures local observables $X'_B$ and $Y'_B$, one could consider the inequality:
\begin{equation*}\label{eq:steeringineq}
| \Bra{\psi} X_A X'_B + Y_A Y'_B \Ket{\psi} | \leq \sqrt{2}
\end{equation*}
This inequality holds, whenever there is a local hidden state model for Bob's system. If this is not the case, 
then the state is steerable. In our case, we do not simply require a violation of this inequality, but we require a (close to) maximal violation.
For the above inequality,
the maximum achievable violation allowed by quantum mechanics
is $2$, which also corresponds to its mathematical maximum.
Requiring maximal saturation is represented as:
\begin{equation*}
| \Bra{\psi} X_A X'_B + Y_A Y'_B \Ket{\psi}| \geq 2 - \epsilon
\end{equation*}
for some small $\epsilon$. Given that this inequality holds we have:

\begin{theorem}\label{theorem:steeringrobust}
Suppose Alice measures the observables $X_A$, $Y_A$ and that Bob measures the observables $X'_B$ and $Y'_B$ with eigenvalues $\pm 1$, on the state $\ket{\psi}$, such that
\begin{equation*}
	|\bra{\psi}\left(   X_A X'_B + Y_A Y'_B     \right)  \ket{\psi}| \geq 2 -\epsilon \label{eqn:saturate}
\end{equation*}
where $0<\epsilon<1$. Then the conditions of Theorem~\ref{theorem:selftesting} are satisfied with $\gamma_1 = \sqrt{2\epsilon}$ and $\gamma_2 = 4\sqrt{\epsilon}$.
\end{theorem}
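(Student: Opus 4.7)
The three conditions of Theorem~\ref{theorem:selftesting} should follow from a short algebraic calculation that exploits (i) the fact that each of $X_A, Y_A, X'_B, Y'_B$ squares to the identity since they have $\pm 1$ eigenvalues, (ii) the fact that operators on Alice's and Bob's sides commute because they live on different tensor factors, and (iii) the exact Pauli anticommutation $X_A Y_A + Y_A X_A = 0$ on the trusted side. Without loss of generality I may assume $\bra{\psi}(X_A X'_B + Y_A Y'_B)\ket{\psi} \geq 2 - \epsilon$ directly, since otherwise Bob's operators can be replaced by their negations, which preserves their $\pm 1$ spectrum and leaves the anticommutator $\{X'_B, Y'_B\}$ unchanged.

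The first two inequalities should be immediate from the identity $\|(X_A - X'_B)\ket{\psi}\|^2 = \bra{\psi}(X_A^2 + (X'_B)^2 - 2 X_A X'_B)\ket{\psi} = 2 - 2\bra{\psi} X_A X'_B \ket{\psi}$ together with the analogous expression for $Y$. Adding the two gives $\|(X_A - X'_B)\ket{\psi}\|^2 + \|(Y_A - Y'_B)\ket{\psi}\|^2 = 4 - 2\bra{\psi}(X_A X'_B + Y_A Y'_B)\ket{\psi} \leq 2\epsilon$, so each individual norm is at most $\sqrt{2\epsilon}$, giving the claimed $\gamma_1$.

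For the anticommutator I plan to swap each of Bob's primed operators for Alice's trusted counterpart one factor at a time, using that $X'_B$ commutes with $Y_A$ and $Y'_B$ with $X_A$, and then to cancel the two resulting trusted strings via $(X_A Y_A + Y_A X_A)\ket{\psi} = 0$. Concretely, writing $x = \|(X_A - X'_B)\ket{\psi}\|$ and $y = \|(Y_A - Y'_B)\ket{\psi}\|$, the rearrangement $X'_B Y'_B \ket{\psi} - Y_A X_A \ket{\psi} = X'_B (Y'_B - Y_A)\ket{\psi} + Y_A(X'_B - X_A)\ket{\psi}$ and its mirror image for $Y'_B X'_B\ket{\psi} - X_A Y_A\ket{\psi}$ yield, after the Pauli cancellation, $\|(X'_B Y'_B + Y'_B X'_B)\ket{\psi}\| \leq 2(x+y)$.

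The step that needs a little care is the final numerical bound. Plugging in $x, y \leq \sqrt{2\epsilon}$ term-by-term only gives the loose estimate $4\sqrt{2\epsilon}$, so to recover the tighter $\gamma_2 = 4\sqrt{\epsilon}$ one must treat the pair jointly via Cauchy--Schwarz, $x + y \leq \sqrt{2(x^2 + y^2)} \leq 2\sqrt{\epsilon}$, exploiting that the hypothesis controls the sum $x^2 + y^2$ rather than either variable alone. This is the main (modest) obstacle; everything else is bookkeeping.
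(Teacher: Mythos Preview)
Your proof is correct, but it takes a different route from the paper's for the anticommutator bound. The paper proceeds by squaring the steering operator $S = X_A X'_B + Y_A Y'_B$ to obtain $S^2 = 2 + \tfrac{1}{2}[X_A,Y_A][X'_B,Y'_B]$, applies Cauchy--Schwarz in the form $\bra{\psi}S^2\ket{\psi}\geq |\bra{\psi}S\ket{\psi}|^2$ to deduce $\bra{\psi}[X_A,Y_A][X'_B,Y'_B]\ket{\psi}\geq 4-8\epsilon$, then expands this four-term commutator product, bounds each summand separately, and recombines them into estimates on $\|(X_AY_A+X'_BY'_B)\ket{\psi}\|$ and $\|(Y_AX_A+Y'_BX'_B)\ket{\psi}\|$ before using $\{X_A,Y_A\}=0$. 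Your argument is more elementary: you swap one primed operator at a time for its trusted counterpart and invoke the Pauli anticommutation directly, avoiding the $S^2$ detour entirely. The payoff is not just brevity. By first establishing the joint control $x^2+y^2\leq 2\epsilon$ (rather than the paper's termwise $x^2,y^2\leq 2\epsilon$) and then applying Cauchy--Schwarz to $x+y$, you actually obtain $\gamma_2=4\sqrt{\epsilon}$, matching the constant stated in the theorem; the paper's own appendix argument yields only $\gamma_2=8\sqrt{\epsilon}$.
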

\begin{proof}[Proof sketch]
The proof reduces to expanding inequality~\ref{eqn:saturate} and using the properties of the observables, to arrive at the bounds of conditions~(\ref{eq:th1cond1}-\ref{eq:th1cond3}) from the previous theorem. 
Concretely, we see that the correlation of local observables that we consider:
\begin{equation*}
	|\bra{\psi}\left(   X_A X'_B + Y_A Y'_B     \right)  \ket{\psi}|
\end{equation*}
is simply a sum of two expectation values which are upper bounded by unity (because the observables have $\pm 1$ eigenvalues). Hence, to saturate the absolute value of this quantity, it must be the case that both expectation values are saturated i.e. lower bounded by $1 - \epsilon$ or upper bounded by $-1 + \epsilon$. We will only examine the first case since the second is analogous, so we will drop the absolute value of the expression and simply consider:
\begin{equation*}
 \Bra{\psi} X_A X'_B + Y_A Y'_B \Ket{\psi} \geq 2 - \epsilon
\end{equation*}
By expressing each expectation as a trace norm we arrive at conditions~\ref{eq:th1cond1} and~\ref{eq:th1cond2}.
To prove condition~\ref{eq:th1cond3} we use the Cauchy-Schwarz inequality and the commutators $[X_A, Y_A]$ and $[X'_B, Y'_B]$, respectively.
The full proof can be found in Appendix~\ref{sect:steeringrobust}. 
\end{proof}

Using the results of Theorems~\ref{theorem:selftesting} and~\ref{theorem:steeringrobust}, we find that strong correlations between Alice and Bob's measurement outcomes, given that we trust Alice to be measuring the $X$ and $Y$ observables, determine the shared state between Alice and Bob as a Bell state, under local isometry $\Phi = I_{A}  \otimes \Phi_{B}$.
Additionally, notice that if the steering correlations are $O(\epsilon)$ close to ideal (maximal), we can bound the shared state of Alice and Bob as being $O(\sqrt{\epsilon})$ close to the ideal.
The same asymptotic bound is achieved in the case of CHSH games, where both Alice and Bob's outcomes are untrusted. One could expect that the bound we obtained from the steering inequalities is not a tight bound and that it is possible to do better because we trust Alice's measurements. We prove, that in fact this is not the case and the $O(\sqrt{\epsilon})$ bound is actually tight:
\begin{theorem}\label{theorem:tightbound}
Suppose that Bob's observables $X'_B$ and $Y'_B$ with eigenvalues $\pm 1$, acting on a state $\ket{\psi}$, are such that:
\begin{equation*}
	|\bra{\psi}\left(   X_A X'_B + Y_A Y'_B     \right)  \ket{\psi}| \geq 2 -\epsilon \label{eqn:saturate2}
\end{equation*}
where $0<\epsilon<1$. Then, up to constant factors, the bound of Theorem~\ref{theorem:selftesting} (i.e. inequality~\ref{eq:selftest} with $\varepsilon = O(\sqrt{\epsilon})$) is tight.
\end{theorem}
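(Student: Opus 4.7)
The plan is to exhibit a concrete state together with honest Pauli measurements for Bob that saturate the correlation hypothesis while provably lying at $l^2$-distance $\Omega(\sqrt{\epsilon})$ from any admissible target on the right-hand side of inequality~(\ref{eq:selftest}). Setting $\delta = \sqrt{\epsilon}/2$, I would take the two-qubit state
\begin{equation*}
\Ket{\psi} = \sqrt{1-\delta^2}\,\Ket{\psi_+} + \delta\,\Ket{\psi_-},
\end{equation*}
where $\Ket{\psi_-} = (\Ket{01} - \Ket{10})/\sqrt{2}$, and let Bob's untrusted observables be the honest Paulis $X'_B = X$, $Y'_B = Y$. Since $\Ket{\psi_+}$ and $\Ket{\psi_-}$ are orthogonal and are simultaneous eigenstates of $X_A X_B$ and $Y_A Y_B$ with eigenvalues $+1$ and $-1$ respectively, a direct expansion gives $\Bra{\psi}(X_A X'_B + Y_A Y'_B)\Ket{\psi} = 2(1-2\delta^2) = 2-\epsilon$, so the hypothesis is tightly saturated.

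For the distance lower bound, I would focus on Alice's reduced state. Using $Tr_B \Ket{\psi_\pm}\Bra{\psi_\pm} = I/2$ and a direct calculation giving $Tr_B \Ket{\psi_+}\Bra{\psi_-} = Z/2$, one obtains $\rho_A = I/2 + \sqrt{1-\delta^2}\,\delta\, Z$, hence $TD(\rho_A, I/2) = \sqrt{1-\delta^2}\,\delta$. The key observation is that any isometry of the form $\Phi = I_A \otimes \Phi_B$ leaves $\rho_A$ invariant, while for \emph{any} choice of ancilla $\Ket{junk}_B$ and any single-qubit Pauli $M_A$, the reduced state on $A$ of $\Ket{junk}_B M_A N_B\Ket{\psi_+}_{AB}$ equals $M_A (I/2) M_A^\dagger = I/2$. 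Taking $M_A = N_B = N'_B = I$ and applying monotonicity of trace distance under the partial trace yields
\begin{equation*}
TD\bigl(\Phi(\Ket{\psi}),\, \Ket{junk}_B \otimes \Ket{\psi_+}_{AB}\bigr) \;\geq\; TD(\rho_A, I/2) \;=\; \sqrt{1-\delta^2}\,\delta \;=\; \Omega(\sqrt{\epsilon}),
\end{equation*}
uniformly in $\Phi_B$ and $\Ket{junk}_B$.

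To transfer this from trace distance to the $l^2$-norm appearing in inequality~(\ref{eq:selftest}), I would invoke the elementary fact that $\norm{\Ket{\alpha}-\Ket{\beta}} \geq TD(\Ket{\alpha},\Ket{\beta})$ for pure states, which follows from the identity $\norm{\Ket{\alpha}-\Ket{\beta}}^2 - TD(\Ket{\alpha},\Ket{\beta})^2 = (1-|\Braket{\alpha|\beta}|)^2 \geq 0$. Combining the two bounds gives $\norm{\Phi(\Ket{\psi})-\Ket{junk}_B \Ket{\psi_+}_{AB}} = \Omega(\sqrt{\epsilon})$ uniformly over the allowed isometries and ancillas, matching the $O(\sqrt{\epsilon})$ upper bound of Theorem~\ref{theorem:selftesting} up to a constant factor. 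The only subtle point is making the lower bound robust to \emph{every} choice of $\Phi_B$ and $\Ket{junk}_B$; the reduced-state argument handles this cleanly because no operation acting only on $B$ can alter Alice's marginal, so the $\Omega(\sqrt{\epsilon})$ discrepancy between the two marginals is unavoidable. I do not foresee any serious technical obstacles beyond this bookkeeping.
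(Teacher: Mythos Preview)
Your proof is correct and takes a genuinely different route from the paper's. The paper exhibits a state with perturbed Schmidt coefficients, $\Ket{\psi} = \frac{1}{\sqrt{2}}\bigl(\sqrt{1+\sqrt{\epsilon'}}\,\Ket{01} + \sqrt{1-\sqrt{\epsilon'}}\,\Ket{10}\bigr)$ with $\epsilon' = \epsilon/2$, together with \emph{deviated} observables $B_0, B_1$ that tend to $X, Y$ as $\epsilon \to 0$; it then computes $\norm{\Ket{\psi} - \Ket{\psi_+}}$ directly and closes with a one-line remark that isometries are distance-preserving. By contrast, you use the \emph{honest} Pauli observables and the superposition $\sqrt{1-\delta^2}\,\Ket{\psi_+} + \delta\,\Ket{\psi_-}$, and the decisive step --- lower-bounding the distance to \emph{every} state of the form $\Ket{junk}_B\otimes\Ket{\psi_+}$ under \emph{every} isometry $I_A\otimes\Phi_B$ --- is handled via Alice's marginal, which no operation local to Bob can alter. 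This marginal argument is more explicit than the paper's appeal to distance preservation, which by itself only yields $\norm{\Phi(\Ket{\psi}) - \Phi(\Ket{\psi_+})}$ and does not immediately control the distance to $\Ket{junk}_B\otimes\Ket{\psi_+}$ for arbitrary $\Ket{junk}$; the underlying reason the paper's example works is really the Schmidt-coefficient mismatch across the $A|B$ cut, which is exactly what your reduced-state computation makes quantitative. The paper's use of deviated observables does additionally illustrate that the operator part of inequality~(\ref{eq:selftest}) can be off by the same order, but for the state bound your argument is both simpler and logically tighter.

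One minor correction: the identity you quote, $\norm{\Ket{\alpha}-\Ket{\beta}}^2 - TD(\Ket{\alpha},\Ket{\beta})^2 = (1-|\Braket{\alpha|\beta}|)^2$, is not literally correct in general; the right-hand side should read $(1-\mathrm{Re}\,\Braket{\alpha|\beta})^2 + (\mathrm{Im}\,\Braket{\alpha|\beta})^2$. The inequality you actually need, $\norm{\Ket{\alpha}-\Ket{\beta}} \geq TD(\Ket{\alpha},\Ket{\beta})$, follows immediately from this corrected expression being nonnegative, so the argument is unaffected.
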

\begin{proof}[Proof sketch]
Theorem~\ref{theorem:tightbound} essentially shows that the $O(\sqrt{\epsilon})$ bound for the closeness of $\Ket{\psi}$ to the ideal Bell state $\Ket{\psi_+}$ is optimal. The proof relies on finding a state and local observables for Bob such that the steering correlation is saturated up to order $O(\epsilon)$, but the state is $O(\sqrt{\epsilon})$ deviated from the ideal Bell state. We consider such a state which is exactly $\sqrt{\epsilon}$-close to the ideal $\Ket{\psi_+}$ Bell state and which saturates the correlation of observables to $2 - \epsilon$.
Furthermore, we consider local observables for Bob which are deviated, as a function of $\epsilon$, from the ideal observables. Hence, these deviated observables tend to the ideal $X$ and $Y$ observables in the limit where $\epsilon \rightarrow 0$.
The specific state and Bob's local observables are given in Appendix~\ref{sect:tightbound}.
\end{proof}

The exact bound for closeness can be computed by simply inserting the constants $\gamma_1$ and $\gamma_2$ from Theorem~\ref{theorem:steeringrobust} in the calculation for $\varepsilon$ of Theorem~\ref{theorem:selftesting}. This yields a distance of $\varepsilon = (3\sqrt{2} + 8)\sqrt{\epsilon} + \epsilon/2$.

This result has important consequences for the application we consider in the next section, the verification of quantum computation.
It essentially imposes a restriction on how good the fidelity of the entangled state is, as resulting from the observed correlations. In VDQC we require a complete characterisation of the entangled states that are used and in particular, we require them to be close to Bell pairs. This leads to the requirement of a very tight saturation of the steering correlations.
Contrast this to the QKD or QRNG settings where less than maximally entangled states suffice \cite{boundEnt, HHHO, tripartiteQKD, qrngVacuumStates, qrngVacuum}.

It should be noted that we used the same notation for Bob's observables in both Theorem~\ref{theorem:steeringrobust} and Theorem~\ref{theorem:selftesting} since for this particular case they coincide. In general, however, we would have to construct Bob's observables in Theorem~\ref{theorem:selftesting} from the observed correlations, as is done in \cite{selftesting}.  
An important corollary to these three theorems is the following:
\begin{corollary}\label{corr:general}
The results of Theorems~\ref{theorem:selftesting},~\ref{theorem:steeringrobust} and~\ref{theorem:tightbound} hold if instead of $X$ and $Y$, Alice measures the anti-commuting single-qubit observables $A_0$ and $A_1$, having eigenvalues $\pm 1$ and Bob measures observables $B_0$, $B_1$ having eigenvalues $\pm 1$.
\end{corollary}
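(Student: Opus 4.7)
The plan is to reduce the corollary to the three theorems already proved, by absorbing Alice's choice of trusted observables into a local unitary on her side---a standard ``without loss of generality'' maneuver. The key algebraic fact is that any pair of single-qubit Hermitian operators $A_0, A_1$ with eigenvalues $\pm 1$ satisfying $\{A_0, A_1\} = 0$ is unitarily equivalent to the Paulis $X, Y$. Writing $A_i = \vec{n}_i \cdot \vec{\sigma}$ for unit vectors $\vec{n}_i \in \mathbb{R}^3$, the anti-commutation relation forces $\vec{n}_0 \perp \vec{n}_1$. A rotation sending $\vec{n}_0 \to \hat{x}$ and $\vec{n}_1 \to \hat{y}$ always exists (first align $\vec{n}_0$ with $\hat{x}$, then rotate about $\hat{x}$ to align the image of $\vec{n}_1$ with $\hat{y}$); it lifts through the $SU(2) \to SO(3)$ double cover to a single-qubit unitary $U_A$ with $U_A A_0 U_A^\dagger = X$ and $U_A A_1 U_A^\dagger = Y$.

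Given $U_A$, I would define the rotated state $\Ket{\tilde{\psi}} = (U_A \otimes I_B)\Ket{\psi}$. Because $U_A$ acts only on Alice's trusted qubit, it commutes with $B_0, B_1$ on Bob's side, so
\begin{equation*}
|\Bra{\tilde{\psi}}(X \otimes B_0 + Y \otimes B_1)\Ket{\tilde{\psi}}| = |\Bra{\psi}(A_0 \otimes B_0 + A_1 \otimes B_1)\Ket{\psi}| \geq 2 - \epsilon,
\end{equation*}
and the saturation hypothesis transfers verbatim. Theorem~\ref{theorem:steeringrobust} then supplies the three conditions of Theorem~\ref{theorem:selftesting} with $\gamma_1 = \sqrt{2\epsilon}$ and $\gamma_2 = 4\sqrt{\epsilon}$ for the rotated state, and Theorem~\ref{theorem:selftesting} in turn yields an isometry $\tilde{\Phi} = I_A \otimes \Phi_B$ together with a junk state $\Ket{junk}_B$ certifying closeness to the ideal Bell state $\Ket{\psi_+}$.

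To translate the conclusion back to $\Ket{\psi}$, I would substitute $\tilde{M}_A N'_B \Ket{\tilde{\psi}} = (U_A \otimes I)M_A N'_B \Ket{\psi}$ and $\tilde{M}_A N_B \Ket{\psi_+} = (U_A \otimes I) M_A N_B (U_A^\dagger \otimes I)\Ket{\psi_+}$ into the self-testing bound (here $\tilde{M}_A = U_A M_A U_A^\dagger \in \{I, X, Y\}$ corresponds to $M_A \in \{I, A_0, A_1\}$), and exploit the fact that $(U_A \otimes I)$ on Alice's side is unitary and hence norm-preserving. This yields the same bound with isometry $I_A \otimes \Phi_B$ applied to $M_A N'_B \Ket{\psi}$, junk state $\Ket{junk}_B$, and target Bell pair $(U_A^\dagger \otimes I)\Ket{\psi_+}$, which is still a maximally entangled state since local unitaries preserve Schmidt coefficients. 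Theorem~\ref{theorem:tightbound}'s tightness carries over by an identical rotation of the counterexample state and observables from its proof. There is no substantive obstacle; the only real work is bookkeeping how $U_A$ threads through the tensor product structure, which follows immediately from the commutation of Alice-side and Bob-side operators.
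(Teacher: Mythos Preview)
Your proof is correct, but it follows a genuinely different route from the paper's. The paper argues by \emph{inspection}: it observes that the proofs of Theorems~\ref{theorem:selftesting} and~\ref{theorem:steeringrobust} invoke only the anti-commutation of Alice's observables and the $\pm 1$ eigenvalue property, never the specific matrix entries of $X$ and $Y$; hence those proofs go through verbatim after relabelling $X_A \mapsto A_0$, $Y_A \mapsto A_1$, $X'_B \mapsto B_0$, $Y'_B \mapsto B_1$. For Theorem~\ref{theorem:tightbound} it similarly notes that the counterexample can be rebuilt around whichever Bell state is stabilised by $A_0, A_1$, using $O(\epsilon)$-deviated Paulis. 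Your approach is instead a \emph{black-box reduction}: you exploit the fact that any anti-commuting pair of traceless involutions on $\mathbb{C}^2$ is unitarily conjugate to $(X,Y)$, push Alice's trusted unitary $U_A$ through the hypotheses, invoke the three theorems as stated, and pull $U_A$ back out. The paper's argument has the virtue of making explicit exactly which algebraic properties of Alice's observables are load-bearing (anti-commutation and nothing else), whereas your argument has the virtue of not requiring the reader to re-open the earlier proofs at all. Both arrive at the same conclusion, including the observation---which the paper also makes---that the target Bell state may depend on the chosen pair $(A_0,A_1)$; in your language it is $(U_A^\dagger \otimes I)\Ket{\psi_+}$.
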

\begin{proof}
In the proof of Theorem~\ref{theorem:selftesting} we only made use of the anti-commutation properties of the $X$, $Y$ observables on Alice's side as well as the action of the two operators on the eigenstates of $Y$. For general observables, $A_0$ and $A_1$ this translates to using their anti-commutation properties and the action of the two on the eigenstates of $A_1$, for example.
Essentially the proof of Theorem~\ref{theorem:selftesting} only changes by relabelling $X_A$ as $A_0$ and $Y_A$ as $A_1$. On Bob's side, the situation is similar. By relabelling $X'_B$ as $B_0$ and $Y'_B$ as $B_1$ we again have conditions~(\ref{eq:th1cond1}-\ref{eq:th1cond3}) for these observables, which are then used to construct the isometry and prove the result of Theorem~\ref{theorem:selftesting}.

For Theorem~\ref{theorem:steeringrobust}, using the same relabelling we have the inequality:
\begin{equation*}
|\Bra{\psi} A_0 B_0 + A_1 B_1 \Ket{\psi}| \geq 2-\epsilon
\end{equation*}
Since the proof of Theorem~\ref{theorem:steeringrobust}, like Theorem~\ref{theorem:selftesting}, only relied on the anti-commutation properties and the action of the observables on the eigenstates of one of them, the relabelling does not change anything and the results go through as before.

Lastly, for Theorem~\ref{theorem:tightbound} we use that fact that the observables $A_0$ and $A_1$ are linear combinations of Pauli matrices. Depending on which Bell state is stabilized by the actions of these observables we can consider a state that is $O(\sqrt{\epsilon})$-deviated from that Bell state, similar to the state considered for the proof of Theorem~\ref{theorem:tightbound}.
Analogously, we will have $O(\epsilon)$-deviated observables for each Pauli matrix. For example, in the proof of Theorem~\ref{theorem:tightbound} we considered the $O(\epsilon)$-deviated versions of the $X$ and $Y$ observables. Bob's observables will therefore be linear combinations of these $O(\epsilon)$-deviated Paulis so as to be $O(\epsilon)$-close to $A_0$ and $A_1$, respectively.
\end{proof}

The result of Theorem~\ref{theorem:steeringrobust} assumes ideal expectation values for the observables of Alice and Bob. Of course, in practice, after performing a finite number of measurements we obtain an approximation of these expectation values. 
This can be properly taken into account by considering independent random variables (since we are in the i.i.d. setting) associated with the measurement process and using a Chernoff inequality to bound their expectation values. 
We do not give a full derivation of this here, since we will give the more general derivation for the non-i.i.d. case in the next section (for which the proof can be found in Appendix~\ref{sect:azuma}). Instead, we simply state the result of this finite analysis: for a fixed $\epsilon > 0$, we require at least $(1/\epsilon^2) log(1/\epsilon)$ measurements in order to certify that the closeness of each shared state is $O(\sqrt{\epsilon})$ to a perfect Bell pair.
One can also compute the number of measurements as a function of the desired trace distance for the Bell states. If we denote this distance as $D = c \sqrt{\epsilon}$, then the number of measurements must be at least $(2c^4/D^4) log(c/D)$.
In our case $c \approx 12.3$, so that if we wanted the trace distance to be, for example, $D = 0.1$, we would require at least $2.2 \times 10^{9}$ measurements.

\subsection{Removing the independence assumption} \label{subsect:non-iid}
We proceed to remove the i.i.d. assumption from the previous statements. 
The following theorem essentially states that if Alice and Bob are asked to perform a sequence of measurements, and we notice a close to maximal steering inequality violation from their outcomes, we can conclude that the state shared in a typical round of measurement is close to a Bell pair. By ``typical round'' we mean a uniformly randomly chosen round. 
A similar result is obtained in \cite{BNSVY}, with the essential differences that their non-i.i.d. result shows that at least one state is close to an ideal Bell pair while both parties are untrusted.
In our case, Alice is trusted throughout this process and, without loss of generality, we can assume that she chooses the measurement settings for each round. The notation $Tr_{-i}(\cdot)$ indicates that we are tracing out everything apart from the quantum states that are measured in round $i$. We also use the notation $Tr_{-R}(\cdot)$, which generalizes the previous notation for a set, $R$, of rounds (i.e. tracing out all states except those which are used in rounds $i \in R$).
\begin{theorem}\label{thm:azuma}
Suppose Alice and Bob are required to perform $K$ rounds of measurement and also that:
\begin{itemize}
\item Prior to the measurements, the shared state of Alice and Bob is assumed to be $\sigma$, which can be either pure or mixed. The state $\sigma$ is thus the global state which will be used for all $K$ rounds\footnote{In the ideal setting where everything is trusted, $\sigma$ would be a $2K$-qubit state consisting of $K$ Bell pairs.}.
\item Alice chooses a random set of size $K/2$, consisting of distinct indices from $1$ to $K$ and denoted $R_0 = \{ i | i \in_{R} \{1 ... K\} \}$, $|R_0| = K/2$. We also denote $R_1 = \{1 ... K \} \backslash R_0$, to be the complement of $R_0$.
\item We denote $\rho_i = Tr_{-i}(\mathcal{E}^{AB}_{1,i-1}(\sigma))$ the reduced state of Alice and Bob in round $i$, and:
\begin{equation*}
\rho_{avg} =
\frac{1}{K} \sum\limits_{i = 1}^{K} \rho_i
\end{equation*}
 as the averaged state. Here $\mathcal{E}^{AB}_{1,i-1}$ denotes the action (measurements) of Alice and Bob on the state $sigma$ up to round $i$.
\item In round $i$, let $r_i = 0$ iff $i \in R_0$, otherwise $r_i = 1$. Alice measures the observable $A_{r_i}$ on her half of $\rho_i$. $A_0$ and $A_1$ are anti-commuting single-qubit observables having $\pm 1$ eigenvalues.
\item In round $i$, Bob is asked to measure $B_{r_i}$. $B_0$ and $B_1$ have $\pm 1$ eigenvalues.
\item We denote $a_i$ and $b_i$, respectively, as the outcomes of their measurements in round $i$. We also denote $\hat{C}_i = a_i b_i$ as their correlation for round $i$.
\item We denote $\hat{C}^0 = \frac{1}{K/2}\sum\limits_{i \in R_0} \hat{C}_i$ and $\hat{C}^1 = \frac{1}{K/2}\sum\limits_{i \in R_1} \hat{C}_i$ as the averaged correlations for the cases where both Alice and Bob are asked to measure the first observable, or both are asked to measure the second, respectively.
\end{itemize}
If, for some given $\epsilon > 0$ and suitably chosen $K = \Omega((1/\epsilon^2) log(1/\epsilon))$, it is the case that $ \hat{C}^0  +  \hat{C}^1  \geq 2 - \epsilon$ (or, alternatively, $ \hat{C}^0  +  \hat{C}^1  \leq -2 + \epsilon$)  then there exists a local isometry $\Phi$ such that, for a randomly chosen $\rho_i$, with probability at least $1 - O(\epsilon^{1/6})$:
\begin{equation} \label{eq:reduced_state}
TD( Tr_{junk}(\Phi( \mathcal{E}^{AB}(\rho_{i}) )),  \hat{\mathcal{E}}^{AB}(\Ket{\psi_+}\Bra{\psi_+})) \leq O(\epsilon^{1/6})
\end{equation}
Where $\mathcal{E}^{AB}$ is some combination of the $A_0, A_1, B_0, B_1$ operators and $\hat{\mathcal{E}}^{AB}$ is the analogous combination of the ideal operators (i.e. $A_0$, $A_1$, $I$ and the ideal operators for Bob, which for the $XY$-plane case we considered, are $I, X, Y$), as in Theorem~\ref{theorem:selftesting}, and where $junk$ is Bob's private system apart from the ancilla introduced by $\Phi$. Alternatively, we have that there exists some state $\tilde{\rho}_{junk}$ such that:
\begin{equation} \label{eq:reduced_state2}
TD( \Phi(\mathcal{E}^{AB}(\rho_{i})),  \hat{\mathcal{E}}^{AB}(\Ket{\psi_+}\Bra{\psi_+}) \tilde{\rho}_{junk}) \leq O(\epsilon^{1/12})
\end{equation}
\end{theorem}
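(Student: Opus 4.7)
The plan is to reduce the non-i.i.d. statement to the single-shot Theorems~\ref{theorem:selftesting}--\ref{theorem:steeringrobust} via three ingredients: a martingale concentration step that converts empirical correlations into quantum expectations, an averaging step that packages the adaptive sequence into a single ``typical-round'' triple, and a Markov step that transfers the averaged bound to a uniformly random round.

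For the first step I would write, for each round $i$, the effective observables $B_{0,i},B_{1,i}$ that Bob actually implements conditional on the history up to round $i$, and the corresponding conditional reduced state $\rho_i$. In the natural filtration that reveals Alice's choice $r_i$ and the outcomes $(a_i,b_i)$ at round $i$, the random variables
\[
X_i = \hat C_i - Tr\bigl[(A_{r_i}\otimes B_{r_i,i})\rho_i\bigr]
\]
form a martingale difference sequence with $|X_i|\leq 2$. Applying Azuma--Hoeffding to the partial sums over $R_0$ and $R_1$ (and handling the dependence induced by fixing $|R_0|=K/2$ by comparing with independent Bernoulli $r_i$'s and paying only a constant factor), with $K=\Omega((1/\epsilon^2)\log(1/\epsilon))$ one gets that outside a failure event of probability $O(\epsilon)$ both $\hat C^0$ and $\hat C^1$ are within $O(\epsilon)$ of the corresponding averaged conditional expectations. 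Combining with the hypothesis $\hat C^0+\hat C^1\geq 2-\epsilon$ then yields
\[
\frac{1}{K}\sum_{i=1}^{K}Tr\bigl[(A_0\otimes B_{0,i}+A_1\otimes B_{1,i})\rho_i\bigr]\geq 2-O(\epsilon).
\]

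For the second step I would introduce an index-tagged description on Bob's side, setting $\tilde\rho=\frac{1}{K}\sum_i \rho_i\otimes \Ket{i}\Bra{i}$ and $\tilde B_r=\sum_i B_{r,i}\otimes \Ket{i}\Bra{i}$; the $\tilde B_r$ are then genuine $\pm 1$ observables on an enlarged Hilbert space, and the averaged inequality becomes $Tr[(A_0\otimes\tilde B_0+A_1\otimes\tilde B_1)\tilde\rho]\geq 2-O(\epsilon)$. Purifying $\tilde\rho$ on Bob's side, Corollary~\ref{corr:general} together with Theorem~\ref{theorem:selftesting} produces a local isometry $\Phi=I_A\otimes\Phi_B$ under which the purified state lies within trace distance $O(\sqrt\epsilon)$ of $\Ket{\psi_+}\Bra{\psi_+}\otimes\rho_{junk}$, with the index register absorbed into $\rho_{junk}$. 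Dephasing in the $\{\Ket{i}\}$ basis, which is CPTP and therefore non-increasing on trace distance, decomposes both sides block-by-block in $i$ and gives
\[
\frac{1}{K}\sum_{i=1}^{K} TD\bigl(Tr_{junk}(\Phi(\mathcal{E}^{AB}(\rho_i))),\ \hat{\mathcal{E}}^{AB}(\Ket{\psi_+}\Bra{\psi_+})\bigr)\leq O(\sqrt\epsilon).
\]

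Writing $d_i$ for the $i$-th summand, a Markov inequality applied to the uniformly random round choice then yields~(\ref{eq:reduced_state}); the conservative exponent $1/6$, rather than the naive $1/4$ from balancing $\sqrt\epsilon/t=t$, absorbs an extra loss incurred when translating Theorem~\ref{theorem:selftesting}'s vector-norm conclusion into a trace-distance conclusion after the mixed-state purification and the dephasing step above. The alternative bound~(\ref{eq:reduced_state2}), which keeps Bob's private register instead of tracing it out, follows from the same averaged statement by retaining the junk and invoking the gentle measurement lemma to relate the trace distance with and without the partial trace, at the cost of a further square root that is responsible for the weaker $O(\epsilon^{1/12})$ exponent. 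The main obstacle, and the step that needs the most care, is the second one: Bob's adaptive strategy must be lifted to a \emph{fixed} pair of $\pm 1$ observables on a common enlarged Hilbert space in a way that preserves the anti-commutation and operator-norm arguments inside Theorems~\ref{theorem:selftesting}--\ref{theorem:steeringrobust}; once the index-tagged dilation is shown to be compatible with those single-shot proofs, the martingale and Markov steps are routine bookkeeping.
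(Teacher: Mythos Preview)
Your three–step skeleton (martingale concentration $\rightarrow$ self-test applied to an averaged/tagged state $\rightarrow$ average-to-typical) is exactly the paper's architecture: the paper proves an Azuma--Hoeffding lemma for the martingale $X_j=\sum_{i\leq j}(C_i-\hat C_i)$, then purifies $\rho_{avg}$ and invokes Theorems~\ref{theorem:selftesting}--\ref{theorem:steeringrobust} to get $TD(Tr_{junk}(\Phi(\rho_{avg})),|\psi_+\rangle\langle\psi_+|)\leq O(\sqrt\epsilon)$, and finally passes from the average to a typical round. So at the level of strategy you are aligned with the paper.

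The place where your proposal diverges, and where it is not yet a proof, is the average-to-typical step. Your dephasing claim does not yield $\frac{1}{K}\sum_i d_i\leq O(\sqrt\epsilon)$ as written: after dephasing in the index register the comparison state has the form $|\psi_+\rangle\langle\psi_+|\otimes\sum_i p_i\,\rho_{junk,i}\otimes|i\rangle\langle i|$ with \emph{a priori unknown} weights $p_i$, not $1/K$, so the block decomposition gives $\tfrac12\sum_i\bigl\|\tfrac1K\Phi_i(\rho_i)-p_i\,|\psi_+\rangle\langle\psi_+|\otimes\rho_{junk,i}\bigr\|_1$, which is not the average of trace distances you state. (One can repair this by first bounding $\sum_i|p_i-1/K|$ via a further partial trace, but you have not done so.) More importantly, your explanation of the exponent $1/6$ is not correct: there is no ``vector-norm to trace-distance'' loss after purification, since for pure states the $l^2$-bound from Theorem~\ref{theorem:selftesting} directly controls trace distance. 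If your dephasing/Markov route worked as you describe, it would in fact deliver $O(\epsilon^{1/4})$, not $O(\epsilon^{1/6})$.

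The paper obtains the stated $\epsilon^{1/6}$ by a different mechanism: it does \emph{not} attempt to decompose into per-round trace distances. Instead it proves a standalone lemma: if $\rho=\frac1n\sum_i\rho_i$ satisfies $TD(\rho,|\phi\rangle\langle\phi|)\leq\gamma$, then for a uniformly random $i$ one has $TD(\rho_i,|\phi\rangle\langle\phi|)\leq O(\gamma^{1/3})$ with probability $\geq 1-\gamma^{1/3}$. The proof goes through fidelity (using $1-F^2\leq TD$ and convexity of $F^2$ when one argument is pure), then a counting argument on the fraction of $i$'s with low $F^2(\rho_i,|\phi\rangle)$; substituting $\gamma=O(\sqrt\epsilon)$ gives $O(\epsilon^{1/6})$. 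The final $O(\epsilon^{1/12})$ then follows from one application of the Gentle Measurement corollary, as you correctly anticipate.
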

\begin{proof}[Proof sketch]
The proof is broken down into several parts. First, we show that the average observed correlations $\hat{C}^0$ and $\hat{C}^1$ approximates the ideal quantum correlation for the averaged state. The averaged state can be thought of as the state shared by Alice and Bob in each round of measurements, such that the average correlations of outcomes from this state match those observed in the real experiment (i.e. $\hat{C}^0$ and $\hat{C}^1$).
Proving this step is done along similar lines to the approaches of \cite{Pironio2010, hpf}. The measurement process of Alice and Bob can be viewed as a stochastic process with bounded increment, i.e. a martingale. The specific martingale we consider encodes the correlations of their measurement outcomes. While the individual measurements need not be independent, we can still prove that this observed correlation is, with high probability, close to the ideal quantum correlation. To do this, we use the Azuma-Hoeffding inequality for martingales \cite{azuma, hoeffding}.
We then use the result of Theorem~\ref{theorem:steeringrobust} to show the closeness of the averaged state to an ideal Bell state.
Lastly, we prove Equation~\ref{eq:reduced_state} by using an optimization argument together with properties of trace distance and density matrices. Equation~\ref{eq:reduced_state2} follows from this through an application of the Gentle Measurement Lemma \cite{ruv}. The full proof is given in Appendix~\ref{sect:azuma}.
\end{proof}

As in the i.i.d case, for a fixed $\epsilon > 0$, we require $\Omega((1/\epsilon^2) log(1/\epsilon))$ measurements however the closeness of a typical state, to a perfect Bell pair, is of order $O(\epsilon^{1/6})$ in this case. 
For a better comparison we will also compute the number of measurements as a function of the desired trace distance for a typical state. The proofs in Appendix~\ref{sect:azuma} show that the exact number of measurements required is $(8/\epsilon^2) log(1/\epsilon)$, yielding a distance $D = c^{1/3}\epsilon^{1/6}$, where $c$ is the constant from the i.i.d. bound.
Thus, the number of measurements must be at least $(8c^4/D^{12}) log(c^2/D^6)$.
For our case, where $c \approx 12.3$ if we again take $D = 0.1$, we would require at least $3.4 \times 10^{18}$ measurements.
While this is too great for current experimental applications, the current bounds are most likely not tight and can be improved. In fact, better numeric bounds have been obtained for the i.i.d. setting, as we explain in Subsection~\ref{subsect:comparison}. To give an example, the result of \cite{Hoban16}, obtains a bound where $c = 1.19$ leading to $2.2 \times 10^{14}$ measurements.
We show a comparative plot for the number of measurements required in the i.i.d. setting, versus in the non-i.i.d. setting in Figure~\ref{fig:plot}. The graphs are represented as functions of the desired (decreasing) trace distance.
\begin{figure}[htbp!]
\centering
\includegraphics[scale=0.242]{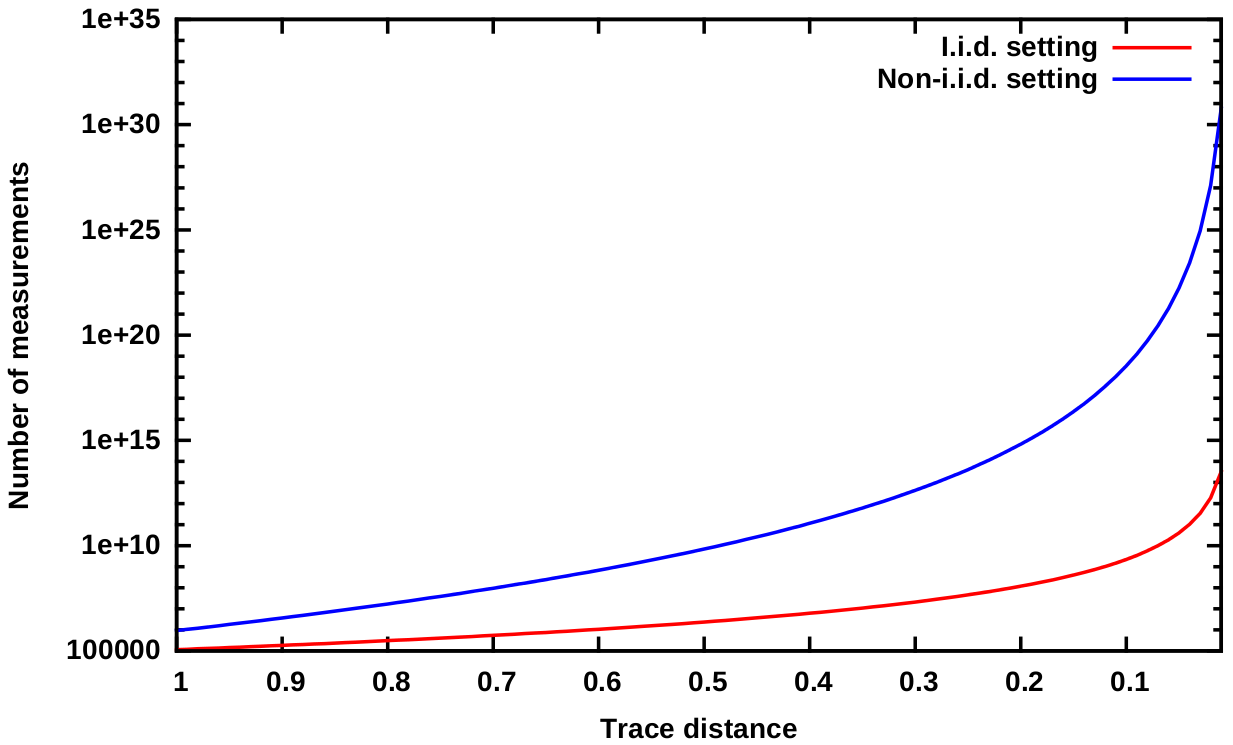}
\caption{Comparison of i.i.d. and non-i.i.d. settings}
\label{fig:plot}
\end{figure}

It should be noted that in the proof of Theorem~\ref{thm:azuma}, we did not use the fact that Alice is trusted except when applying the self-testing result (Theorem~\ref{theorem:steeringrobust}). Thus, a similar theorem can be proven in the case where both Alice and Bob are untrusted. In that case, one could simply use the self-testing results of \cite{mckague, selftesting, selftesting5, PhysRevA.91.052111} for the i.i.d. setting, and then obtain 
a statement about the closeness of a typical state to the ideal one in the non-i.i.d. setting using our techniques. For example, if we were to use Theorem 2 from \cite{selftesting} we could once again establish from the measurement statistics that a typical state shared by Alice and Bob is close to a Bell state. This result
completes the work of \cite{selftesting} for the non-i.i.d. setting.

It should additionally be noted that throughout this section we not only assumed that Alice's device is trusted but that it also measures the ideal Pauli operators. This could seem unreasonable from an experimental perspective, however note that any (fixed) deviation on Alice's measurement operator can be incorporated into $\epsilon$. In other words, assume Alice's ideal operator is $A$ and the deviated one is $\delta A$, such that:
\begin{equation*}
TD(A \otimes B \Ket{\psi}, \delta A \otimes B \Ket{\psi}) < \delta
\end{equation*}
It is thus the case that the action of Alice's operators is $\delta$-close to the action of the ideal operators which produce $\epsilon$ saturation. Hence, $\delta$ can be added to $\epsilon$ and viewed as a contribution to the total variation from maximal correlations. However, if such a deviation exists we should consider what happens when $\delta \leq\epsilon$ and when $\delta > \epsilon$, respectively. If $\delta \leq \epsilon$, then the error on Alice's device is smaller than the precision to which we wish to estimate the saturation of the correlations. Therefore, the saturation can still be considered of order $O(\epsilon)$ and the bounds on the states follow as in the ideal case. However, if $\delta > \epsilon$ then the saturation cannot be estimated within the desired precision. This means that there will be an intrinsic limitation on the determined closeness of the shared states as given by a saturation of order $O(\delta)$.

\subsection{Rigidity of quantum steering} \label{subsect:rigidity}
We now proceed to prove rigidity of quantum steering games in a manner similar to that of \cite{ruv}. 
In this setting, we assume that there is an unknown quantum state shared between Alice and Bob. By asking Alice and Bob to perform repeated measurements we would like to show that this state is close to a tensor product of Bell pairs possibly in tensor product with some additional state, up to local isometry. Additionally, we will show that Bob is essentially performing the correct measurements (recall that we are assuming Alice's measurements are trusted).
Self-testing allows us to certify one Bell state and the local observables of Bob in a one-shot manner.
Intuitively, it seems that we could perform sequential self-tests in order to certify multiple Bell pairs, thus recovering the tensor product structure. For example, according to Theorem~\ref{thm:azuma} we can conclude that after $K$ rounds of measurement
the reduced state in one of the rounds is, with high probability, close to a Bell state. We would then simply repeat this procedure $N$ times in order to recover $N$ Bell pairs, each one selected at random from each set of $K$ rounds.
This intuitive argument does not hold. As is also noted in \cite{ruv}, Bob's strategy and the untrusted states he prepares for a certain set of measurements can overlap with the states and strategy from another set.
Thus, if the reduced state for round $i$ is close to a Bell pair, and the reduced state for round $j$ is close to a Bell pair, we cannot implicitly conclude that the reduced for both rounds $i$ and $j$ is close to two Bell pairs.

We resolve this, in a manner similar to \cite{ruv}, by first defining a steering game akin to the CHSH game. 
We will then use Theorem~\ref{thm:azuma} to prove that the real strategy Alice and Bob use to play the steering game is close to the ideal strategy.
Intuitively, the steering game is one in which we ask Alice and Bob to measure specific observables on their shared state and check to see if their outcomes match. For example, to saturate inequality~\ref{eqn:saturate} we would ask Alice and Bob to both measure the $X$ observable or to both measure the $Y$ observable. 
Since we trust Alice, we know that she is behaving honestly, whereas Bob could deviate from the ideal strategy. 
Note that we are assuming that there is a referee asking Alice and Bob to perform these measurements. Since Alice is trusted, we could have her act as the referee and send the instructions to Bob. Indeed that will be the case when we consider the verification setting. However, the two situations are equivalent. For this reason, in our formal treatment of the game we shall consider Alice and Bob to be the players and that there exists a referee which instructs them on what to do. The definition is as follows:
\begin{definition}
We say that a game consisting of players Alice and Bob is a $K$-round steering game with threshold $T \leq K$ iff the following conditions are satisfied:
\begin{itemize}
\item Alice and Bob share a joint unknown quantum state $\Ket{\psi}$.
\item The game has $K$ rounds.
\item In round $i$, Alice is instructed by the referee to measure her half of $\Ket{\psi}$ with either the $A_0$ or $A_1$ two-outcome, single-qubit, anti-commuting observables, having $\pm 1$ eigenvalues and record her measurement outcome
(keeping it secret from Bob).
\item Alice's measurement device is fully trusted to perform the correct measurement, moreover she has a complete characterisation of the device's Hilbert space.
\item In round $i$, Bob is instructed by the referee to measure his half of $\Ket{\psi}$ with either the $B_0$ or $B_1$ two-outcome observables, having $\pm 1$ eigenvalues and reports his outcome to Alice.
\item Alice and Bob win the current round iff their outcomes are identical.
\item Alice and Bob win the game iff they win at least $T$ rounds.
\end{itemize}
\end{definition}
Note that in the previous definition we match the conditions of Theorem~\ref{thm:azuma}. Unlike Alice, Bob is untrusted and so his observables are unknown to Alice (and the referee). Moreover, the state $\Ket{\psi}$ is unknown and we can assume that it was prepared by Bob.
We now define the correlation value of the game.
\begin{definition}
Let $W$ be the number of rounds that Alice and Bob win in a $K$-round steering game. The correlation value for the game is defined as the fraction $W/K$.
\end{definition}

It is useful to make the following observation: if we assume that Alice and Bob are measuring the same state, $\Ket{\phi}$, in each round, then the correlation value of the game would be:
\begin{equation*}
\frac{1}{2}\Bra{\phi} A_0 B_0 + A_1 B_1 \Ket{\phi}
\end{equation*}
In general, this might not be the case, since Bob is free to use any state in each round. In accordance with Theorem~\ref{thm:azuma}, the correlation value of the game is then an estimate for the correlation of the averaged state. We use this fact to conclude something about the correlation of the reduced state in a randomly chosen round. This fact will be used to prove rigidity.

Alice and Bob will be asked to play multiple steering games. Following the notation of \cite{ruv} we will denote $\rho$
as their shared state for all these games (unlike $\Ket{\psi}$ which is the state for one game) and for a specific game, $j$, we denote the operator associated with Alice's actions (measurements) as $\mathcal{E}_j^A$ and the operator associated with Bob's action as $\mathcal{E}_j^B$.
Thus, for $N$ steering games, the triplet $\mathcal{S} = (\rho, \{ \mathcal{E}_j^A \}, \{ \mathcal{E}_j^B \})$ encodes the \emph{strategy} of Alice and Bob, where $j \leq N$.
The ideal strategy, which we denote as $\mathcal{S}_{id}$, is the one in which $\rho_{id}$ is a tensor product of Bell states and Alice and Bob perform the measurements they are instructed to perform in each round. We need to show that $\mathcal{S} \approx \mathcal{S}_{id}$.
This was shown in \cite{ruv} for CHSH games by extracting a tensor structure in the individual Hilbert spaces of Alice and Bob from the tensor structure of their two spaces. In our case, because we trust Alice and her device, we already have a characterisation of her strategy and Hilbert space. Therefore, we need only use this to characterise Bob's strategy. As in \cite{ruv}, we do this by exploiting the following symmetry property of the Bell state $(M \otimes I) \Ket{\psi_+} = I \otimes (XM^TX) \Ket{\psi_+}$, for any $2 \times 2$ matrix $M$. This essentially allows us to shift Bob's measurements to Alice's side, thus eliminating any dependence of his outcomes on previous qubits and establishing a tensor product structure.
We will use this to create a strategy in which only Alice performs measurements.
In analogy to \cite{ruv}, we start by defining an $\epsilon$-structured steering game.

\begin{definition}
We say that a $K$-round steering game is \emph{$\epsilon$-structured} iff the game correlation value is greater than $1 - \epsilon$.
\end{definition}
\noindent Alternatively, we can set the winning threshold for the game to $T = (1 - \epsilon) K$. Thus, winning a steering game is equivalent to having an $\epsilon$-structured steering game. Similarly, one can define an $\epsilon$-structured strategy.
\begin{definition}
A general strategy $\mathcal{S}$ for $N$ $K$-round steering games is \emph{$\epsilon$-structured} iff for an arbitrary game $j \leq N$ we have that $Pr(\text{game } j \text{ is }\epsilon \text{-structured}) \geq 1 - \epsilon$.
\end{definition}
\noindent Note that this means that the majority of steering games, in a strategy, are $\epsilon-$structured. Lastly, we adapt a definition from \cite{ruv} to characterize the closeness of two strategies.
It should be noted that we will define two strategies as being close if the reduced actions of Alice and Bob on $N$ randomly chosen rounds, in the two strategies, are close to each other. 
This contrasts the definition from \cite{ruv}, where two strategies are considered close if the actions of Alice and Bob for all rounds are close to each other.
The reason for the difference is that we are only interested in establishing a tensor product of $N$ Bell states and that Alice and Bob are performing ideal measurements on those $N$ states. However, since we have $N$ steering games, each consisting of $K$ rounds, we will have $NK$ rounds in total. Showing that all of these are close to ideal would lead to a large overhead which we would like to avoid.
\begin{definition} \label{def:sim}
Let $\mathcal{S} = (\rho, \{ \mathcal{E}_j^A \}, \{ \mathcal{E}_j^B \})$ and $\mathcal{\bar{S}} = (\bar{\rho}, \{ \bar{\mathcal{E}}_j^A \}, \{ \bar{\mathcal{E}}_j^B \})$ be two strategies for playing $N$ sequential $K$-round steering games. For $\epsilon \geq 0$, we say that
strategy $\mathcal{\bar{S}}$ $\epsilon$-simulates strategy $\mathcal{S}$ iff they both use the same Hilbert spaces and for all $j$:
\begin{equation*}
TD(Tr_{-R}(\mathcal{E}^{AB}_{1,j}(\rho)),  Tr_{-R}(\mathcal{\bar{E}}^{AB}_{1,j}(\bar{\rho}))) \leq \epsilon
\end{equation*}
Where $R$ is a set of $N$ round indices, each chosen at random from the $N$ steering games. Additionally:
\begin{equation*}
\mathcal{E}_{1,j}^{AB} = \mathcal{E}_1^{AB} \circ \mathcal{E}_2^{AB} \circ ... \mathcal{E}_j^{AB} 
\end{equation*}
\begin{equation*}
\bar{\mathcal{E}}_{1,j}^{AB} = \bar{\mathcal{E}}_1^{AB} \circ \bar{\mathcal{E}}_2^{AB} \circ ... \bar{\mathcal{E}}_j^{AB} 
\end{equation*}
And:
\begin{equation*}
\mathcal{E}^{AB}_{j} = \mathcal{E}^{A}_{j} \circ \mathcal{E}^{B}_{j} \; \; \; \; \; \; \; \mathcal{\bar{E}}^{AB}_{j} = \mathcal{\bar{E}}^{A}_{j} \circ \mathcal{\bar{E}}^{B}_{j}
\end{equation*}
\end{definition}

\noindent Whenever we have that $\mathcal{S}$ $\epsilon$-simulates $\mathcal{\bar{S}}$, or an isometric extension of $\mathcal{\bar{S}}$, we will write $\mathcal{S} \approx \mathcal{\bar{S}}$. This leads us to the main result:

\begin{theorem}\label{theorem:rigidity}
Let $\mathcal{S} = (\rho, \{ \mathcal{E}_j^A \}, \{ \mathcal{E}_j^B \})$ be Alice and Bob's $\epsilon$-structured strategy for playing $N$ sequential $K$-round steering games. 
Let $\mathcal{S}_{id} = (\rho_{id}, \{ \mathcal{E}_{id \; j}^A \}, \{ \mathcal{E}_{id \; j}^B \})$ be Alice and Bob's ideal strategy for playing $N$ sequential steering games.
We have that $\mathcal{S}$ $O(N\epsilon^{1/6})$-simulates an isometric extension of $\mathcal{S}_{id}$.
\end{theorem}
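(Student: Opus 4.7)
The plan is to iterate Theorem~\ref{thm:azuma} once per steering game, assembling the per-game local isometries on Bob's side into a single global isometry, and to use the Bell-state symmetry identity $(M \otimes I)\Ket{\psi_{+}} = I \otimes (X M^{T} X)\Ket{\psi_{+}}$ to decouple each certified round from the games still to be analyzed. The structure mirrors the rigidity proof of \cite{ruv} for CHSH, but is considerably simpler here because Alice's Hilbert space and measurements are already trusted, so no tensor-product structure needs to be extracted on her side.

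First I would translate the $\epsilon$-structured condition into the correlation bound required by Theorem~\ref{thm:azuma}. Winning at least $(1-\epsilon)K$ of the $K$ rounds means the average of $\hat{C}_{i}=a_{i}b_{i}$ over all rounds is at least $1-2\epsilon$, hence $\hat{C}^{0}+\hat{C}^{1}\geq 2-4\epsilon$ regardless of how Alice's random partition $R_{0},R_{1}$ splits the winning rounds. By the definition of an $\epsilon$-structured strategy this holds for each game with probability at least $1-\epsilon$. Applying Theorem~\ref{thm:azuma} to that game then yields, for a uniformly random round, a local isometry $\Phi_{j}=I_{A}\otimes \Phi_{j,B}$ and a junk state $\tilde{\rho}_{\mathrm{junk},j}$ such that the isometrically extended reduced state is within trace distance $O(\epsilon^{1/6})$ of $\Ket{\psi_{+}}\Bra{\psi_{+}}\otimes \tilde{\rho}_{\mathrm{junk},j}$, and the untrusted observables $B_{0},B_{1}$ act on this reduced state as the ideal measurements on the Bell-pair register.

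Next I would pick one random round $r_{j}$ from each game $j$ to form the set $R$ in Definition~\ref{def:sim}, and proceed by a hybrid argument across the $N$ games. Let $\mathcal{S}^{(k)}$ denote the strategy in which the rounds $r_{1},\dots,r_{k}$ have been replaced by ideal measurements on freshly-prepared Bell pairs after applying the composed isometry $\Phi_{1}\circ\cdots\circ\Phi_{k}$; then $\mathcal{S}^{(0)}=\mathcal{S}$ and $\mathcal{S}^{(N)}$ is the required isometric extension of $\mathcal{S}_{\mathrm{id}}$ restricted to $R$. At step $k$ I apply Theorem~\ref{thm:azuma} to game $k{+}1$ on the residual state, paying a trace-distance cost of $O(\epsilon^{1/6})$ (absorbing the $\epsilon$ failure probability of a single game being non-structured, since $\epsilon \ll \epsilon^{1/6}$). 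Once the Bell pair in round $r_{k+1}$ has been certified, the symmetry identity above lets me rewrite Bob's action on this pair as a trusted operation on Alice's qubit, so that Bell pair no longer interacts with the register on which Bob performs his subsequent measurements. Summing the $N$ hybrid steps by the triangle inequality, and using monotonicity of trace distance under the remaining measurements, yields the bound $O(N\epsilon^{1/6})$ required for \emph{every} prefix $j\leq N$ as in Definition~\ref{def:sim}.

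The main obstacle I expect is verifying that the $\epsilon$-structured hypothesis is preserved from one hybrid to the next. A priori, conditioning on Bob's behaviour in games $1,\dots,k$ could deform the statistics of game $k{+}1$ and spoil the correlation bound needed to apply Theorem~\ref{thm:azuma}. The symmetry rewriting is exactly what rescues this: after the first $k$ Bell pairs have been shifted to Alice's side, Bob's residual operation is independent of those registers, so the marginal correlation value of game $k{+}1$ agrees with the one computed in the original strategy up to an accumulated $O(k\,\epsilon^{1/6})$ perturbation, which is absorbed into the step bound. A careful union bound over the $N$ games (each failing to be $\epsilon$-structured with probability at most $\epsilon$) together with the observation that the individual ancillas adjoined by each $\Phi_{j,B}$ are fresh, so the composed map $I_{A}\otimes(\Phi_{N,B}\circ\cdots\circ\Phi_{1,B})$ is a well-defined global isometry, then delivers the claimed $O(N\epsilon^{1/6})$ simulation distance.
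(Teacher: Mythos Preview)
Your proposal has the same core ingredients as the paper---Theorem~\ref{thm:azuma} applied once per game, the Bell-state symmetry $(M\otimes I)\Ket{\psi_+}=(I\otimes XM^TX)\Ket{\psi_+}$ to neutralise Bob's action, and an induction accumulating $O(\epsilon^{1/6})$ per game---but the paper organises the argument through a different intermediate object. Instead of a direct hybrid $\mathcal{S}^{(0)},\dots,\mathcal{S}^{(N)}$ interpolating between $\mathcal{S}$ and $\mathcal{S}_{id}$, the paper first compares $\mathcal{S}$ to a \emph{guessing strategy} $\mathcal{S}_g=(\rho,\{\mathcal{E}^A_j\},\{\mathcal{G}^B_j\})$ in which Bob's operator on the state is the identity and his reported outcome is simply Alice's own outcome (Lemma~\ref{lemma:prerigid1}); since in the steering game Alice and Bob's ideal outcomes are perfectly correlated, this guessing is exact on the ideal state, and the symmetry identity makes $\hat{\mathcal{E}}^{AB}_j=\hat{\mathcal{G}}^{AB}_j$ on $\Ket{\psi_+}$, so the per-game error from Theorem~\ref{thm:azuma} telescopes directly. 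The paper then shows $\mathcal{S}_g\approx\mathcal{S}_{id}$ (Lemma~\ref{lemma:prerigid2}), which is easy precisely because in $\mathcal{S}_g$ Bob does nothing and hence there is no adaptivity left: the naive sequential self-test is now valid.

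What this buys over your direct hybrid is exactly the point you flag as the main obstacle. You never have to argue that the $\epsilon$-structured hypothesis survives from one hybrid to the next, nor track an accumulated $O(k\,\epsilon^{1/6})$ perturbation feeding back into the correlation bound needed to invoke Theorem~\ref{thm:azuma} at step $k{+}1$; as you have written it, that perturbation would become the effective $\epsilon$ in the next application of the theorem and would \emph{not} be ``absorbed into the step bound'' but would instead degrade the per-step closeness. The guessing-strategy detour removes Bob's adaptive dependence on the already-certified registers in one stroke rather than incrementally, so the induction in Lemma~\ref{lemma:prerigid1} applies Theorem~\ref{thm:azuma} each time to the \emph{original} strategy (which is $\epsilon$-structured by hypothesis) rather than to a perturbed hybrid.
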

\begin{proof}[Proof sketch]
Firstly, it should be noted that for all $j$, $\mathcal{E}_{j}^A
=\mathcal{E}_{id \; j}^A$. This is because Alice is trusted and always playing according to the ideal strategy.
The proof then consists of two steps. First we show that if the real strategy $\mathcal{S}$ is $\epsilon$-structured, then $\mathcal{S} \approx \mathcal{S}_{g}$, where $\mathcal{S}_{g}$ is a strategy in which Alice plays honestly and also guesses Bob's measurement outcomes. The guesses provided by Alice are taken to be Bob's outcomes for each game while the action on his subsystem is taken to be identity.
The proof of this step relies on characterising the evolution of the quantum state $\rho$ in the two strategies and using Theorem~\ref{thm:azuma} together with the $\epsilon$-structured nature of the strategies.
We then show that $\mathcal{S}_{g} \approx \mathcal{S}_{id}$. 
To do this, note that in the guessing strategy we have effectively removed the problem of adaptivity. Since there is no untrusted Bob in $\mathcal{S}_{g}$, the original argument, of sequentially repeating self-testing, goes through. This allows us to show that $\mathcal{S}_{g}$ is close to a similar guessing strategy, denoted $\hat{\mathcal{S}}_{g}$, which uses ideal Bell pairs. 
Lastly, this strategy is trivially close to the ideal one.
We can then combine these results to show that $\mathcal{S} \approx \mathcal{S}_{id}$.
The full proofs are given in Appendix~\ref{sect:rigidity}.
\end{proof}

Since $N$ represents the number of Bell pairs we wish to certify, we see that in order to obtain a decreasing error, we require $\epsilon = O(N^{-6})$. We also know from Theorem~\ref{thm:azuma} that given $\epsilon$, the number of games required is of order $K = \Omega((1/\epsilon^2)log(1/\epsilon))$. Therefore, we must have that $K = \Omega(N^{12} log(N))$.
Since each steering game comprises of $K$ rounds, we have $K N$ rounds in total, or $\Omega(N^{13} log(N))$ rounds of steering games.
This becomes important for use in VDQC.

\subsection{Comparison with other approaches} \label{subsect:comparison}
Before presenting the VDQC application, we briefly compare our approach to similar results.
As mentioned, this paper builds on the work of self-testing the singlet of McKague et al \cite{selftesting}. While we use similar techniques to theirs, we assume that Alice is trusted and thus arrives at an improved bound for the closeness of $\Ket{\psi}$ to an ideal Bell pair ($O(\sqrt{\epsilon})$ vs $O(\epsilon^{1/4})$).
On the other hand, the results of \cite{ruv} and \cite{BNSVY} do arrive at an $O(\sqrt{\epsilon})$ bound for self-testing. However, we know that their techniques cannot improve the asymptotic closeness, since we have shown in Theorem~\ref{theorem:tightbound} that this bound is optimal up to constant factors.

When comparing the exact bounds, we obtained $(3\sqrt{2}+8)\sqrt{\epsilon} + \epsilon/2$, which is smaller compared to that of \cite{ruv}, approximately $270 \sqrt{\epsilon}$. The result of \cite{BNSVY} obtained numerically an even smaller factor of $\sqrt{2.2\epsilon}$  by using a semidefinite program. Their technique could, in principle, be used to improve our approach as well.
We also mention the result of {\v{S}}upi{\'c} and Hoban \cite{Hoban16}, which appeared concurrently with our own. They also consider the case of self-testing from steering correlations, obtaining an analytic bound of $13\sqrt{\epsilon}$ and a numeric bound of $1.19\sqrt{\epsilon}$.

Furthermore, \cite{BNSVY} also considers removing the i.i.d. assumption. Their approach is based on hypothesis testing, however the end result is to show that \emph{at least} one state, out of all measured states, was close to a perfect Bell pair. In our case, Theorem~\ref{thm:azuma} establishes that \emph{a typical} state, out of all measured states, was close to a perfect Bell pair.
This is necessary in order to prove the rigidity result and certify a tensor product of Bell pairs.

For the rigidity of steering correlations we employed similar techniques to those of \cite{ruv} to show that the strategy (consisting of states and measurements) associated with the real scenario is close to that of the ideal scenario. This is done by considering intermediate strategies and showing that they are close to both the real and the ideal one and therefore that the latter strategies are close to each other. The major difference with \cite{ruv} is that because Alice is trusted, in our case, there was no added overhead in proving the closeness of these strategies to each other. This then lead to a reduced closeness bound.

\section{Verified delegated quantum computation}\label{sect:verification}

The idea of VDQC is that a computationally weak verifier wants to delegate a computation to a powerful (quantum) prover, and at the same time be able to verify the correctness of the result received. In characterising VDQC protocols, we use the formalism of \emph{interactive proof systems} \cite{ip, abe}.
For our setting,
the prover is restricted to polynomial time quantum computations. 
Ideally we would like the verifier to be a fully classical computer, however it is still an open problem if this is possible, when there is
a single prover \cite{openproblem}. Instead, it is known that if the verifier has some minimal quantum capabilities, he is able to verify the prover's computation \cite{abe, fk}. 
Alternatively, if there are multiple non-communicating provers sharing entanglement it is possible to do verification with a fully classical verifier \cite{ruv, gkw, mckague}. 
It should be noted that in certain cases not all the provers are quantum computers. For example, in the protocols of \cite{gkw, hpf} there are two provers, one is a quantum computer (or server) and the other is an untrusted measurement device.
Thus, throughout this paper we will refer explicitly to the provers by their role (i.e. quantum server, measurement device etc). \\

\subsection{Existing approaches to VDQC}\label{subsect:old_VDQC}
The first approaches to VDQC relied on having a classical verifier with a minimal quantum device. 
This device could be either a constant size quantum computer \cite{abe}, or a single qubit preparation device \cite{fk}.
In both cases, the verifier has both a classical and a quantum communication channel with the server. The quantum channel is used only in an initial phase to send quantum states to the server. Verification of the computation is then performed via classical communication only. Importantly, the quantum communication is \emph{offline}, meaning it can be performed before the verifier even decides what computation she would like to perform (she must, however, fix the size of the computation to be performed).
We will briefly explain one of these protocols, namely the one by Fitzsimons and Kashefi \cite{fk}, which we shall refer to as the FK protocol. The reason for choosing this protocol is that many other VDQC protocols that were later developed have been based on FK \cite{gkw, hpf, KKD14, Morimae2014, efk, kapourniotis2015optimising, kw}.
Furthermore the FK protocol is currently the optimal protocol from the point of view of the client's quantum capability requirements. Hence it is a good starting point in further reducing the trust assumptions as we intend to do in this section.
We therefore use our results on quantum steering to modify one of the FK-based protocols.

The FK protocol is expressed in the Measurement-Based Quantum Computing (MBQC) model of computation and uses Universal Blind Quantum Computation (UBQC) \cite{bfk} as a basis for verification. We succinctly explain the basic ideas behind these concepts as further details can be found here \cite{mbqc, bfk}.
In MBQC, computation is achieved through a sequence of adaptive single-qubit measurements performed on a highly entangled state known as a \emph{graph state}. It is possible to make this graph state highly regular, by having all qubits prepared in the $\Ket{+} = \frac{1}{\sqrt{2}}(\Ket{0} + \Ket{1})$ state and entangling them using the controlled-$Z$ operation into a \emph{brickwork structure} \cite{bfk, fk}. The qubits are then measured in the basis $\{ \Ket{+_{\theta}}, \Ket{-_{\theta}} \}$, where $\Ket{\pm_{\theta}} = \frac{1}{\sqrt{2}}(\Ket{0} \pm e^{i \theta} \Ket{1})$ and $\theta$ is chosen adaptively from the set $D = \{0, \pi/4, ..., 7\pi/4 \}$.

In the following we give the main
idea behind UBQC. A trusted client sends to an untrusted quantum server rotated qubits of the form $\Ket{+_{\theta}}$ with angles $\theta$ chosen randomly from the set $D$. The server is then supposed to entangle these qubits in a generic graph state structure and then measure them in the basis $\{ \Ket{+_{\delta}}, \Ket{-_{\delta}} \}$, $\delta \in D$, as instructed by the client. The measurement angles are selected and adapted in order to perform a specific computation. Having no knowledge of the initial rotation angles (the $\theta$'s), the measurement angles (the $\delta$'s) will appear random to the server and so he will have no information about the computation being performed, apart from an upper bound on its size (given by the number of qubits).
We illustrate this in Figure~\ref{fig:ubqc}.
This blind computation procedure can be modified in order to perform verification as well, leading to the FK protocol.
\begin{figure}[htbp!]
\centering
\includegraphics[scale=0.33]{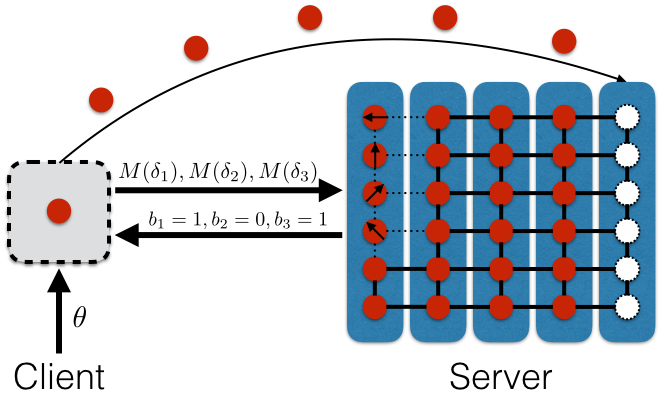}
\caption{UBQC protocol}
\label{fig:ubqc}
\end{figure}

In this case, the client, now known as a verifier, will also send computational basis states $\{ \Ket{0}, \Ket{1} \}$ to the server, interleaved randomly with the rotated qubits. The purpose of these states, called dummy qubits, is to isolate certain rotated qubits from the rest of the graph state qubits.
Isolation is achieved because the controlled-$Z$ operation when used with a dummy and a rotated qubit will not perform entanglement.
The isolated qubits are called \emph{traps} because the verifier will instruct the server to measure these qubits in their preparation basis (i.e. the measurement angle $\delta$ will match the rotation angle $\theta$ for each trap qubit), thus yielding a deterministic outcome. Because of blindness, the position of the traps is hidden from the server and so he is unaware if he is performing a trap measurement or a computation measurement. This allows the verifier to test, on average, the behaviour of the server and abort the protocol if he behaves maliciously. Schematically, this is shown in Figure~\ref{fig:verification}.
\begin{figure}[htbp!]
\centering
\includegraphics[scale=0.33]{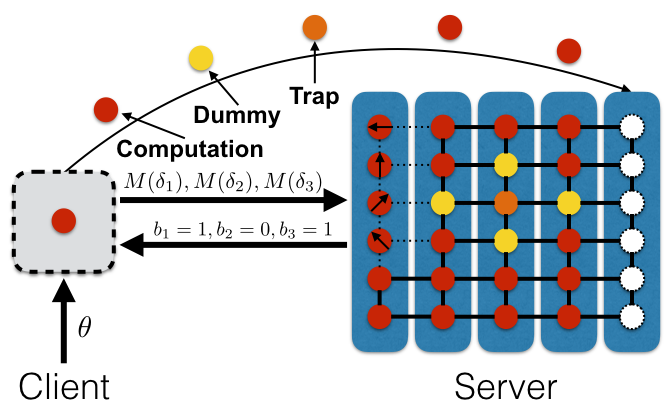}
\caption{FK verification protocol}
\label{fig:verification}
\end{figure}

As mentioned, the FK protocol uses a verifier with a trusted preparation device, while the untrusted server is performing entangling and measuring operations. We refer to this as a \emph{prepare and measure} type of protocol.
There is another class of protocols that
have the server prepare the graph state and send the qubits one by one to the verifier who performs single qubit measurements
\cite{Fujii_Morimae2012, Morimae2014, Hayashi_Morimae2015}. In this case, the verifier has a trusted measurement device instead of a preparation device. 
We refer to the latter class of protocols as \emph{measurement-only} type of protocols.
The main drawback
of this approach is that the quantum communication is online (occurs during the time that the quantum computation is performed).
This means that the verifier must commit to the computation she wishes to perform when the quantum communication commences.
In contrast to this, the FK protocol is offline and therefore the verifier does not need to make such a commitment. Instead, after the quantum communication has occurred, she is free to choose any computation up to a fixed size and communicate only classically with the server.

All of the approaches mentioned so far, relied on having a verifier with a trusted quantum device. It is also possible to have a fully classical verifier, with no quantum device, if we allow for multiple entangled quantum servers. Such is the case with the protocols presented in \cite{ruv, mckague}. Alternatively, the verifier can have an untrusted measurement device and share entanglement with a single quantum server. This scenario is a \emph{device-independent} version of the the prepare and measure protocols.
The two existing approaches are presented in \cite{gkw, hpf}. We will briefly describe the operation of these device-independent protocols, since a slight modification of them will lead to our \emph{one-sided device-independent} protocol.

The setting of device-independent VDQC is the following:
\begin{itemize}
\item The verifier has an untrusted measurement device. The device can measure the observables $\{ X', Y', Z', D', E', F' \}$, where
$D' = \frac{1}{\sqrt{2}}(X' + Z')$, $E = \frac{1}{\sqrt{2}}(X' + Y')$, $F = \frac{1}{\sqrt{2}}(Y' + Z')$. The observables are primed to indicate they are untrusted.
\item The server is instructed to initially prepare Bell states and send half of each state to the verifier.
\item The measurement device and the server are assumed to be non-communicating.
\item The verifier interacts only classically with both of these devices.
\end{itemize}
The verification protocol consists of the following two stages:
\begin{enumerate}
\item \textbf{Verified state preparation} - In this stage the verifier will use the shared entangled states with the server in order to prepare single qubit states on the server's side. These are the states which, in the FK protocol, were sent to the server having been prepared by the trusted quantum device and which will later be used to perform verified computation. In this case, because the measurement device is also untrusted, the verifier will have to interact with the two devices in order to certify the preparation of correct states on the server's side. In the protocol of \cite{gkw}, this is done using the rigidity property of CHSH games, whereas the protocol of \cite{hpf} achieves this using a modified version of the Mayers-Yao self-test.
\item \textbf{Verified computation} - Having prepared the rotated states and dummy qubits on the server's side, the verifier proceeds to run the FK protocol as if she had sent the qubits to the server. It can be shown that the server is still blind following the preparation stage. Since this stage results in the preparation of imperfect states ($\epsilon$-close to the ideal), it was necessary to show that the FK protocol is robust to deviations in the initial state.
\end{enumerate}
It is worth pointing-out here, that any deviation of the server on the correctly prepared input of the FK, is detected by the verification mechanism of the FK protocol. In other words, the two stages can be separated, and the verified preparation can happen any time earlier than the verified computation.
The main disadvantage of the device-independent protocols is the large round complexity in the state preparation stage, leading to an overall large complexity compared to other approaches. 
The main reason behind this blow-up, 
is the number of measurements that need to be performed in order to certify correct state preparation and additionally the measurements that need to be performed to test the honesty of the two devices. A major factor for this increase, is the need to certify the tensor product structure of Bell states shared, before the verifier uses them to prepare the states for the FK on the server's side.
By allowing the verifier's device to be trusted, we can eliminate this last set of measurements and gain an advantage in terms of round complexity, leading to a one-sided device-independent VDQC protocol. In this protocol, similar to the measurement-only protocols, the verifier has a trusted measurement device. However, in our case, the (trusted) measurements can occur offline, since they are only involved in the verified state preparation stage.

\begin{algorithm}[H]
\caption{One-sided device-independent verification protocol}
 \label{prot:entfk}
 \vskip 0.2 cm
\noindent\textbf{Assumptions} \\
The verifier wants to delegate a quantum computation described by the graph $G$ and specific measurement angles from the set $\{0, \pi/4 ... 7\pi/4 \}$ chosen to define a desired computation. 
For the FK protocol, used in stage $2$, the graph $G$ is encoded as a different graph, $\mathcal{G}$, having $M$ qubits. The verifier has a trusted single-qubit measurement device.\\

\noindent\textbf{Stage 1: Verified state preparation} 
\begin{enumerate}
\item Verifier instructs the server to prepare $\Theta(M^{13} log(M))$ Bell pairs
and send half of each pair to her.
\item Verifier measures a random subset of $M$ qubits in the $\{\Ket{+_{\theta}}, \Ket{-_{\theta}}\}$ basis, where $\theta \in \{0, \pi/4 ... 7\pi/4 \}$, or the computational basis, without instructing the server to measure them.\\
-- For the remaining qubits (which are of order $\Theta(M^{13} log(M))$) she again measures in the $\{\Ket{+_{\theta}}, \Ket{-_{\theta}}\}$ or the computational basis, but she also instructs the server to measure the corresponding entangled partner in the same basis.\\ 
-- The server reports the measurement outcomes and if the results are not the same as the verifiers, the protocol is aborted.
\end{enumerate}

\noindent\textbf{Stage 2: Verified computation} (Robust FK)

\begin{enumerate}
\setcounter{enumi}{2}
\item If the protocol is not aborted in the previous stage, the verifier runs the (robust) FK protocol (given in \cite{gkw}) with the server, using graph $\mathcal{G}$ (i.e. the qubits from the first step of Stage 1 are treated as if they had been sent by the verifier to be entangled into the graph $\mathcal{G}$ and run the FK protocol).
\end{enumerate}
\end{algorithm}

\subsection{Verification based on steering correlations}\label{subsect:new_VDQC}
The setting we have, is that the verifier trusts her measuring device, but the server and the shared state, prepared by the server, are not trusted. 
Since our setting is identical, in terms of trust assumptions, to the one-sided device-independent self-testing scenario, we will sometimes refer to the verifier as Alice and the server as Bob.
If Alice knew that the shared state with Bob is a tensor product of perfect Bell states, she could measure her side and collapse the state of the server to the desired input of FK.
But since the state is untrusted, we will use the results of the previous sections to characterise the shared state from steering correlations. If Alice observes a close to maximal saturation of the steering correlations, she can conclude that the shared state is a tensor product of Bell pairs. This follows from the rigidity result of Theorem~\ref{theorem:rigidity}. The result is analogous to that of \cite{ruv, gkw} where the rigidity of CHSH games is used in the same fashion. All of this is encapsulated in Protocol~\ref{prot:entfk}.

During the verified state preparation stage, the  verifier confirms that all her measurement outcomes agree with the measurement outcomes reported by the server. 
The number of measurement rounds, i.e. $\Theta(M^{13} log(M))$, is chosen so that the results of Theorems~\ref{thm:azuma} and~\ref{theorem:rigidity} lead to a decreasing error. As explained, the theorems require us to have $\Theta(M^{12} log(M))$ rounds of measurement per steering game, and since we have $M$ steering games we end up with $\Theta(M^{13} log(M))$ rounds.
Contrast this to the round complexities in the fully device-independent scenarios which are of order $O(M^c)$, where $c > 8192$ for \cite{ruv} and $c > 2048$ for \cite{gkw}, respectively. Even though these do not follow from tight bounds, it seems clear that the added trust of the verifier's measurement device leads to a significant reduction in complexity.

Next we examine the verified computation stage.
As the input obtained from the previous stage is not exact, but close in trace distance to the ideal state, we need to consider the robust version of the FK derived in \cite{gkw}. Moreover, to obtain the optimal complexity, instead of using a dotted-complete graph as was done in \cite{fk} and \cite{gkw}, we will use the optimised resource construction introduced in \cite{kw}, where the number of qubits in the encoded graph, $\mathcal{G}$, is linear in the number of qubits in the computation graph $G$. This leads us to the following:
\begin{lemma}\label{lemma:1S_FK}
Protocol \ref{prot:entfk}, for a computation of size $M$, utilises $\Theta(M^{13} log(M))$ Bell pairs and has $\Theta(M^{13} log(M))$ round complexity
(rounds of interaction between the server and the verifier).
The probability that the verifier accepts the outcome of the protocol, assuming honest behaviour by the server, is unity.
If $\eta$ is the probability that the verifier accepts an incorrect outcome in the FK protocol (verified computation stage), the overall probability of accepting an incorrect outcome for Protocol~\ref{prot:entfk} is $\eta + \lambda^{-1}$, for some constant $\lambda > 1$ fixed by the verifier.
\end{lemma}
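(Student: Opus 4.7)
The plan is to establish the three claims in the order: round and Bell-pair complexity, completeness, soundness, with the bulk of the work going into soundness.

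Complexity and completeness are essentially immediate. By inspection of Stage~1, the verifier requests $\Theta(M^{13}\log M)$ Bell pairs, each involving one round of quantum communication and one of classical reporting; the embedded FK protocol of Stage~2, run on the optimised encoded graph $\mathcal{G}$ of \cite{kw}, contributes only $\Theta(M)$ further rounds, which is subsumed. The exponent $13$ and logarithmic factor are exactly what Theorems~\ref{theorem:rigidity} and~\ref{thm:azuma} demand: certifying $N = M$ Bell pairs requires $\epsilon = O(M^{-6})$, which by Theorem~\ref{thm:azuma} requires $K = \Omega(M^{12}\log M)$ rounds per steering game, and there are $M$ games in total. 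For completeness, an honest server prepares perfect $\Ket{\psi_+}$ pairs, and in every test round of Stage~1 Alice's and Bob's outcomes deterministically agree, so Stage~1 never aborts; the qubits not used in Stage~1 are collapsed, by Alice's own measurements, to exactly the single-qubit inputs she would have prepared in the plain FK protocol, and Stage~2 then accepts by the correctness of FK.

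For soundness I would fix the Stage~1 acceptance threshold $T = (1 - \epsilon)K$ so that the rigidity error from Theorem~\ref{theorem:rigidity} satisfies $O(M\epsilon^{1/6}) \leq \lambda^{-1}$; this is the precise way in which the free parameter $\lambda$ is chosen by the verifier. Condition on the event that Stage~1 does not abort. Then by construction the $N = M$ steering games embedded in Stage~1 form an $\epsilon$-structured strategy in the sense of Section~\ref{subsect:rigidity}, and Theorem~\ref{theorem:rigidity} implies that the joint strategy $\mathcal{S}$ of verifier and server $\lambda^{-1}$-simulates an isometric extension of the ideal strategy $\mathcal{S}_{id}$. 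Pulling this through Alice's measurements shows that the post-Stage-1 state held by the server, together with Alice's classical record of bases and outcomes, is within trace distance $\lambda^{-1}$ of the ideal FK input register. Plugging this into the robust FK soundness statement of \cite{gkw}, which asserts that a perturbation of the FK input by $\delta$ in trace distance inflates the soundness error by at most $\delta$, gives an overall probability of at most $\eta + \lambda^{-1}$ of accepting an incorrect outcome, uniformly in the non-abort event; multiplying by the abort probability can only decrease this, yielding the claimed bound.

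The main obstacle is the bridging step between Theorem~\ref{theorem:rigidity}, which compares reduced states on a random set of rounds (Definition~\ref{def:sim}) and involves both the trusted and untrusted registers together with a local isometry $\Phi$, and the robust FK result, which is stated as a closeness guarantee on the input register handed to the FK verification machinery. Concretely one must argue that the isometry $\Phi = I \otimes \Phi_B$ can be absorbed into Bob's private ancilla without affecting Stage~2's trap statistics, and that the random sampling of rounds in Definition~\ref{def:sim} aligns with the random choice of the $M$ preparation rounds that are propagated into Stage~2. Once this translation is set up, the final $\eta + \lambda^{-1}$ bound reduces to routine trace-distance bookkeeping.
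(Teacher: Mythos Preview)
Your high-level structure---complexity from Theorems~\ref{thm:azuma} and~\ref{theorem:rigidity}, completeness from the determinism of matching outcomes on ideal Bell pairs, and soundness by combining the rigidity bound with the robust FK result of \cite{gkw}---is exactly the paper's approach.

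There is, however, a concrete error in your soundness step. You invoke the robust FK statement as ``a perturbation of the FK input by $\delta$ in trace distance inflates the soundness error by at most $\delta$,'' but the result quoted from \cite{gkw} in the paper is that a state-preparation deviation of order $\alpha$ yields an overall soundness of $\eta + \sqrt{\alpha}$, not $\eta + \alpha$. Consequently, to land on $\eta + \lambda^{-1}$ you cannot set the rigidity error $O(M\epsilon^{1/6})$ to be $\leq \lambda^{-1}$; you must drive it down to $\lambda^{-2}$. The paper does this explicitly by taking $\epsilon = \lambda^{-12} M^{-6}$, so that $M\epsilon^{1/6} = \lambda^{-2}$, and then $\sqrt{\lambda^{-2}} = \lambda^{-1}$. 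With your parameter choice the argument would only yield $\eta + \lambda^{-1/2}$. This is not a structural problem with your plan, just a miscalibration of the robust-FK penalty; once you correct the square root, the rest of your outline goes through as written.
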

\begin{proof}
As mentioned, the FK protocol with the encoding (resource state) defined in \cite{kw}, or using the resource construction procedure of \cite{kapourniotis2015optimising}, has linear round complexity. Therefore, the round complexity of the verified computation stage is $\Theta(M)$. 
The verified preparation stage requires $\Theta(M^{13} log(M))$ Bell pairs which are used to test the saturation of steering inequalities. Therefore, the overall round complexity is $\Theta(M^{13} log(M))$.
In the honest run, the verified state preparation stage leads to preparing the ideal input, and therefore the probability of accepting an honest run is unity. This is because in the honest setting, the server prepares ideal Bell pairs and all states are measured correctly, leading to the correct rotated qubits on his side, which are then used in the FK protocol. The FK protocol also has probability of acceptance unity, when the server behaves honestly \cite{fk}.

On the other hand, a dishonest run involves deviations in both state preparation and verified computation stages. In the first stage, a dishonest server would prepare states that are at most $O(\epsilon^{1/6})$-deviated from the ideal by Theorem~\ref{thm:azuma}. 
Here, $\epsilon = \Theta(1/M^6)$, in order to have a decreasing error in simulating the ideal strategy.
Thus, the deviation per Bell pair is fixed however, the verifier can choose many measurement rounds for the steering games in the state preparation stage, reducing the deviation in the overall (tensor product) state.
Concretely, if she chooses to run $\lambda^{12} M^6$ steering games, then $\epsilon = \lambda^{-12} M^{-6}$,
for some fixed $\lambda$. The overall deviation, which is of order $M \epsilon^{1/6}$, will then be of order $\lambda^{-2}$.
This constitutes the deviation in the state preparation stage. 
As explained in \cite{gkw}, the robustness of the FK protocol implies that if the computation stage has probability of acceptance, given dishonest behaviour, $\eta$ and there is a deviation of order $\alpha$ in the state preparation phase, then the overall probability of accepting an incorrect outcome will be $\eta + \sqrt{\alpha}$. In our case, $\alpha = \lambda^{-2}$, hence the probability becomes $\eta + \lambda^{-1}$.
\end{proof}

The resulting protocol's round complexity is an improvement, over the device-independent approaches of \cite{gkw, ruv, mckague}. 
The reason for this is that we benefited from the fact that Alice is trusted in the rigidity result. In particular, while there is a cost for removing the i.i.d. assumption, in the semi-trusted setting we have no extra cost to recover the tensor product structure from Theorem \ref{theorem:rigidity}. This leads to a better complexity when we extract the tensor product structure of Bell pairs, than the complexity obtained in the completely untrusted (device-independent) setting of \cite{ruv}.
On the other hand, in the case of FK where the verifier has a trusted preparation device and sends qubits to the server, the round complexity and number of used quantum states are linear in the size of the computation \cite{kw}. 
In our case, the main sources for overhead are:
\begin{itemize}
\item Self-testing with i.i.d. states which gives a tight bound for the closeness of the state up order $\sqrt{\epsilon}$.
\item Gathering sufficient statistics in the non-i.i.d. setting to get a good estimate of the true quantum correlations for the averaged state. This required $\Omega((1/\epsilon^2) log(1/\epsilon))$ rounds of measurements.
\item Inferring the closeness of a typical state from the closeness of the averaged state, which gives a bound of order $\epsilon^{1/6}$.
\end{itemize}

\subsection{Verification from partially entangled states} \label{subsect:totally_steerable}

It is known that if the source of quantum states is trusted, then classical-quantum correlations are sufficient for verification \cite{hpf, fk}. Alternatively, in \emph{online} verification protocols (where computation and quantum communication take place at the same time), it is possible to have an untrusted entanglement source and states which are less than maximally entangled. Our setting is that of \emph{offline} verification with an untrusted entanglement source.

We have seen that in both device-independent verification and one-sided device-independent verification we can characterise the tensor structure of Bell pairs between the verifier and the server from correlations. In both cases, saturating an inequality involving correlations leads to a bound on the trace distance between the shared state and perfect Bell pairs, up to an isometry. While this is sufficient, it does not seem necessary to use Bell states. In fact in QKD and random number generation, as mentioned previously, other types of states can also be used \cite{boundEnt, HHHO, tripartiteQKD, qrngVacuumStates, qrngVacuum} and so it is interesting to examine if this is also the case for verification. Here, interestingly, we show that under reasonable assumptions about the verification protocol, the entangled states must be unitarily equivalent to Bell states.

Quantum steering derives its name from the idea that with an entangled state one party could steer the state of another party through local measurement. The person who performs the steering is untrusted which in our case corresponds to the quantum server. 
The verifier, having a trusted measurement device, can check through local measurement, if the server behaved honestly and steered the state correctly.
We consider such steerable states, $\rho_{AB}$, shared between Alice (the verifier) and Bob (the server), which could be useful for verification in the one-sided device-independent scenario.
Concretely, the specific type of verification we consider is one which is offline, measurement-only and blind. This is akin to the setting of our steering-based verification protocol, however we no longer assume the use of the FK protocol in the verified computation stage. Instead, we assume independent verification, i.e. a black-box type verification protocol obeying the three conditions we previously stated. Moreover, we assume that the verifier wishes to prepare a specific quantum input for her computation. This input is assumed to be as general as possible. Given these conditions, we wish to know what properties are required of a typical bipartite shared state $\rho_{AB}$.
Since we are interested in blind verification, it must be the case that $Tr_A(\rho_{AB}) = \rho_B = I/2$. Additionally, because the verifier needs to prepare her quantum input on the server's side, and this can consist of a varied set of states, we will also assume that $\rho_{AB}$ must be \emph{completely steerable} (i.e. can be steered to any state). We use the definition of complete steering from \cite{pusey}. 
\begin{definition} \cite{pusey}
A bipartite state, $\rho_{AB}$, shared by Alice and Bob is completely steerable by Bob iff for any positive operators $\{ \sigma_a \}$, satisfying $\sum_a \sigma_a = Tr_A(\rho_{AB})$, there exist a POVM $\{ E_a \}$, such that $\sigma_a = Tr_A( (E_a \otimes I) \rho_{AB})$.
\end{definition}
\noindent Adding the blindness assumption to this, we follow up with the definition:
\begin{definition}\label{def:total}
A bipartite state, $\rho_{AB}$, shared by Alice and Bob is totally steerable by Bob iff $\rho_{AB}$ is completely steerable by Bob and $Tr_A(B) = I/2$.
\end{definition}
\noindent Before proving our result, we state a useful lemma from \cite{pusey}, concerning completely steerable states:
\begin{lemma} \label{lemma:complete} \cite{pusey}
A bipartite state, $\rho_{AB}$ is completely steerable by Bob iff there exists a purification $\rho_{ABC}$ such that $\rho_{BC} = \rho_B \otimes \rho_C$, where $\rho_{BC} = Tr_A(\rho_{ABC})$, $\rho_{B} = Tr_{AC}(\rho_{ABC})$, $\rho_{C} = Tr_{AB}(\rho_{ABC})$.
\end{lemma}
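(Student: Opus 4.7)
The plan is to prove each direction of the iff separately, using the Hughston--Jozsa--Wootters (HJW) theorem to bridge the operational steering statement and the algebraic properties of the purification. Recall that HJW characterises the operators $\{\operatorname{Tr}_X((E_a \otimes I)\Ket{\Psi}\Bra{\Psi})\}$, as $\{E_a\}$ ranges over POVMs on $X$ in a purification $\Ket{\Psi}_{XY}$, as exactly the decompositions of the reduced state on $Y$ whose components are supported on that reduced state's support.

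For the easier direction ($\Leftarrow$), suppose a purification $\Ket{\Psi}_{ABC}$ of $\rho_{AB}$ exists with $\rho_{BC} = \rho_B \otimes \rho_C$. Given any positive operators $\{\sigma_a\}$ on $B$ summing to $\rho_B$, the product ensemble $\{\sigma_a \otimes \rho_C\}$ on $BC$ sums to $\rho_{BC}$ and is supported in its support, so HJW on the $A\,|\,BC$ cut yields a POVM $\{E_a\}$ on $A$ with $\operatorname{Tr}_A((E_a \otimes I_{BC})\Ket{\Psi}\Bra{\Psi}) = \sigma_a \otimes \rho_C$. Tracing out $C$ gives the required steering identity $\sigma_a = \operatorname{Tr}_A((E_a \otimes I_B)\rho_{AB})$.

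For the harder direction ($\Rightarrow$), assume complete steerability and fix any purification $\Ket{\Psi}_{ABC}$. The idea is to exploit pure-state steering: for any $\Ket{\phi}$ in the support of $\rho_B$ and sufficiently small $\epsilon > 0$, the decomposition $\{\epsilon\Ket{\phi}\Bra{\phi},\,\rho_B - \epsilon\Ket{\phi}\Bra{\phi}\}$ of $\rho_B$ is realisable by some POVM on $A$. Lifting the first outcome to $BC$ produces a positive operator $\tau$ with $\operatorname{Tr}_C \tau = \epsilon\Ket{\phi}\Bra{\phi}$; positivity of $\tau$ plus the rank-one $B$-marginal forces the factorisation $\tau = \Ket{\phi}\Bra{\phi} \otimes \omega_\phi$ for some positive $\omega_\phi$ on $C$. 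The main obstacle is to show that $\omega_\phi$ does not depend on $\Ket{\phi}$ apart from an overall scalar: I would do this by comparing the lifts from two different pure-state decompositions of $\rho_B$ sharing a common pure component, and using the HJW characterisation of admissible decompositions of $\rho_{BC}$ to force $\omega_\phi = p_\phi\,\rho_C$ for a single fixed state $\rho_C$. Summing over any pure-state decomposition of $\rho_B$ then gives $\rho_{BC} = \sum_a p_a\Ket{\phi_a}\Bra{\phi_a} \otimes \rho_C = \rho_B \otimes \rho_C$, closing the argument.
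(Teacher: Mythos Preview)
The paper does not actually prove this lemma: it is quoted verbatim from \cite{pusey} and used as a black box in the proof of Theorem~\ref{theorem:totallysteerable}. So there is no ``paper's own proof'' to compare against here, and your proposal should be judged on its own merits.

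Your $(\Leftarrow)$ direction is clean and correct: the product ensemble $\{\sigma_a\otimes\rho_C\}$ is indeed supported on $\operatorname{supp}(\rho_B)\otimes\operatorname{supp}(\rho_C)=\operatorname{supp}(\rho_{BC})$, HJW on the $A\,|\,BC$ cut produces the required POVM, and tracing out $C$ recovers the steering identity.

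The $(\Rightarrow)$ direction, however, has a real gap at exactly the point you flag as ``the main obstacle''. Two issues: (i) the operator $\omega_\phi$ is not a function of $\Ket{\phi}$ alone --- different POVM elements $E$ with the same $B$-marginal $\epsilon\Ket{\phi}\Bra{\phi}$ will in general give different lifts $\tau=\Ket{\phi}\Bra{\phi}\otimes\omega$ on $BC$, so before comparing $\omega_\phi$ across different $\phi$ you first need to show it is well-defined; (ii) the proposed fix, ``compare two pure-state decompositions of $\rho_B$ sharing a common component and invoke HJW on $\rho_{BC}$'', does not by itself force the conclusion. HJW only tells you that any two decompositions of $\rho_{BC}$ are related by an isometry on the classical index; it does not say that the lift of a given $B$-component is unique, nor that all the $\omega_\phi$ are proportional to a single $\rho_C$.

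A way to repair the argument: first steer to the \emph{spectral} decomposition $\rho_B=\sum_i p_i\Ket{\phi_i}\Bra{\phi_i}$, so that $\rho_{BC}=\sum_i\Ket{\phi_i}\Bra{\phi_i}\otimes\omega_i$ is block-diagonal and the $\omega_i=\Bra{\phi_i}\rho_{BC}\Ket{\phi_i}$ are \emph{uniquely} determined. Then, for a superposition $\Ket{\psi}=\sum_i c_i\Ket{\phi_i}$ with all $c_i\neq 0$, any lift $\Ket{\psi}\Bra{\psi}\otimes\omega\leq\rho_{BC}$ forces, via the block structure and a Schur-complement/positivity argument, constraints relating the $\omega_i$ to one another; varying $\Ket{\psi}$ over enough directions then pins down $\omega_i=p_i\rho_C$. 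This is essentially the content of Pusey's original proof, and it is more work than your sketch suggests.
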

\noindent This is similar to QKD, where an adversary can have a purification of the state shared by Alice and Bob however he is uncorrelated with Alice or Bob. This is due to the monogamy of entanglement. Our main theorem concerning $\rho_{AB}$ will justify why this monogamy appears in the case of complete steerability as well. We note that the monogamy of quantum steering has been studied in other works \cite{monogamy1, monogamy2}.
\begin{theorem} \label{theorem:totallysteerable}
A bipartite state, $\rho_{AB}$ is totally steerable by Bob iff $\rho_{AB}$ is maximally entangled.
\end{theorem}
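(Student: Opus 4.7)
The plan is to prove both directions using Lemma~\ref{lemma:complete}, with the forward implication being the substantive one. My starting point is Lemma~\ref{lemma:complete}: complete steerability guarantees a purification $\rho_{ABC}$ of $\rho_{AB}$ with $\rho_{BC} = \rho_B \otimes \rho_C$. Combined with the blindness condition $\rho_B = I/2$, this reads $\rho_{BC} = (I/2) \otimes \rho_C$, so $B$ is completely decoupled from $C$. The key observation driving the forward direction is that any purification of a product state $\sigma_B \otimes \sigma_C$ admits, after a unitary on the purifying system, a tensor-product form; when $\sigma_B = I/2$, the factor purifying $B$ is necessarily a Bell pair.

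To make this precise, I would take $\Ket{\psi}_{ABC}$ to be a minimal purification of $\rho_{AB}$ (automatically a purification of $\rho_{BC}$ with $A$ as the purifying system), and separately build another purification of $\rho_{BC}$ of the form $\Ket{\phi_+}_{BA_1} \otimes \Ket{\psi_{\rho_C}}_{CA_2}$, where $\Ket{\phi_+}_{BA_1}$ is a Bell pair between $B$ and an auxiliary qubit $A_1$ and $\Ket{\psi_{\rho_C}}_{CA_2}$ is a minimal purification of $\rho_C$. Since $\dim A = \mathrm{rank}(\rho_{BC}) = 2 \cdot \mathrm{rank}(\rho_C) = \dim(A_1 A_2)$, Uhlmann's theorem supplies a unitary $V : A \to A_1 A_2$ with $(I_{BC} \otimes V)\Ket{\psi}_{ABC} = \Ket{\phi_+}_{BA_1} \otimes \Ket{\psi_{\rho_C}}_{CA_2}$. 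Tracing out $C$ on both sides yields
\begin{equation*}
(I_B \otimes V)\, \rho_{AB} \,(I_B \otimes V^\dagger) = (\Ket{\phi_+}\Bra{\phi_+})_{BA_1} \otimes \rho'_{A_2},
\end{equation*}
where $\rho'_{A_2}$ is the marginal of $\Ket{\psi_{\rho_C}}_{CA_2}$ on $A_2$. This exhibits $\rho_{AB}$ as, up to a local unitary on Alice's side, a perfect Bell pair between $B$ and a subsystem of $A$ tensored with a junk state --- the intended meaning of ``maximally entangled'' throughout the paper.

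For the reverse direction, assume $\rho_{AB}$ has the above form up to a local isometry on $A$. Then $\rho_B = I/2$ follows immediately from the Bell-pair marginal. For complete steerability, given any $\{\sigma_a\}$ with $\sum_a \sigma_a = I/2$, the POVM $E_a = 2\sigma_a^T$ on $A_1$ steers $B$ to $\sigma_a$ via the identity $(M \otimes I)\Ket{\phi_+} = (I \otimes M^T)\Ket{\phi_+}$; tensoring these operators with $I_{A_2}$ and conjugating by the local isometry yields the required POVM on all of $A$.

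The main obstacle will be the dimension bookkeeping in the Uhlmann step --- arranging that $V$ is a genuine unitary rather than a partial isometry, and verifying that the partial trace over $C$ commutes appropriately with $V$ to produce the advertised tensor factorisation on $AB$. A secondary point worth flagging in the write-up is the interpretation of ``maximally entangled'': since $\rho'_{A_2}$ is generically mixed, the conclusion is that $\rho_{AB}$ is \emph{unitarily equivalent} (via a local isometry on $A$) to a Bell state tensored with a junk state, rather than being a pure Bell state itself --- this matches the unitary-equivalence phrasing used in the paper's introduction.
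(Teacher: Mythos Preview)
Your argument is correct and takes a genuinely different route from the paper's. The paper fixes $A$ and $B$ to be single qubits, writes the purification $\Ket{\psi_{ABC}}$ explicitly as a four-qubit state $\sum_{i=0}^{15} a_i \Ket{i}$, translates the constraints $\rho_{BC} = \rho_B \otimes \rho_C$ and $\rho_B = I/2$ into a bilinear system on the sixteen amplitudes, solves it to obtain a three-parameter family of candidate density matrices for $\rho_{AB}$, and then checks by direct computation that $Tr(\rho_{AB}^2) = 1$, concluding that $\rho_{AB}$ is pure with maximally mixed marginal on $B$, hence a Bell state.

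Your approach is more conceptual and dimension-free: rather than coordinates, you invoke the essential uniqueness of purifications to factor $\Ket{\psi}_{ABC}$ (after a local map on $A$) as a Bell pair on $BA_1$ tensored with a purification of $\rho_C$ on $CA_2$, and then trace out $C$. This buys generality --- it works for arbitrary $\dim A$, delivering the ``Bell pair tensored with junk, up to local isometry'' form --- and it makes transparent \emph{why} the result holds (the product structure of $\rho_{BC}$ forces a tensor split of the purifying system), whereas the paper's calculation only verifies it. In the paper's two-qubit setting your argument specialises immediately: the Schmidt-rank identity $\text{rank}(\rho_A) = \text{rank}(\rho_{BC}) = 2\,\text{rank}(\rho_C)$ combined with $\dim A = 2$ forces $\text{rank}(\rho_C) = 1$, so the junk factor collapses and $\rho_{AB}$ is exactly a pure Bell state, matching the paper's conclusion. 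The dimension bookkeeping you flag is indeed benign --- the assertion $\dim A = \text{rank}(\rho_{BC})$ is not automatic (Lemma~\ref{lemma:complete} does not promise a minimal purification, and $A$ is fixed anyway), but if $\dim A$ exceeds $\text{rank}(\rho_{BC})$ then $V$ is merely an isometry rather than a unitary, and the conclusion reads ``up to local isometry on $A$'', which is precisely the phrasing you already adopt.
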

\begin{proof}[Proof sketch]
The proof relies on considering $\rho_{AB}$ in matrix form and expressing constraints on its matrix elements using Lemma~\ref{lemma:complete} and the fact that $\rho_B = I/2$. Solving the system of constraints leads to a density matrix which can easily be shown to correspond to a pure, maximally entangled state. The complete proof is given  in Appendix~\ref{proof:Thm5}.
\end{proof}
This theorem reveals that the $\rho_{AB}$ state is unitarily equivalent to a maximally entangled Bell state. Therefore, the monogamy
of this state is in fact due to the monogamy of Bell states. Importantly, it shows that two-qubit states which are totally steerable are perfect Bell pairs. This implies that if we are to use total steerability in order to perform verification, we must certify Bell states in particular, rather than any other steerable state.
Hence, for the case of offline, measurement-only, blind quantum verification, the quantum resources used for preparing the quantum input, must be unitarily equivalent to Bell states. Intuitively, this can be understood as follows. The verifier wishes to send a certain quantum input to the server. Since this input is as general as possible, the best way to do this is via the general teleportation protocol using Bell states. This also introduces a natural one-time padding to the states, which keeps the protocol blind. It is clear that in order to have perfect fidelity for the teleportation, the shared states must be equivalent to perfect Bell pairs.

\section{Conclusion}\label{sect:conclusions}
We have shown that, in analogy to the rigidity of non-local correlations via CHSH games, we can prove a rigidity property of steering correlations. This allows us to establish a tensor product of Bell pairs in a setting of one trusted party and one untrusted party. However, in the case of steering, the extra cost to obtain a tensor product of Bell states is 
smaller than in the analogous situation for non-local correlations. While in both cases this overhead is polynomial, the additional trust assumptions of the steering setting allowed for a significantly reduced degree for this polynomial.
We arrived at this result by first considering self-testing in the one-sided device-independent setting, using quantum steering correlations. The self-testing result, which makes an i.i.d assumption about the shared state, gives an optimal bound for the trace distance between this shared state and a Bell pair. We then removed the i.i.d. assumption in a generic way which can be applied to any type of self-testing result, allowing for the determination of a single Bell state (one-shot) in a fully adversarial setting. Combining this with a game-based approach for characterising the states and strategies of the two parties, has lead to the rigidity result. It should be noted, that the game-based approach is also used to prove the rigidity of non-local correlations. However, in our setting, because we trust one of the parties it is simpler to obtain a characterisation of the untrusted party's strategy and the shared states. Thus, our result has reduced overhead compared the analogous result for CHSH game rigidity.

We considered an application of rigidity to a verifiable quantum computation protocol. The reason for choosing this particular application is that for verifiable quantum computation the quantum states themselves should be recovered, and thus the necessity to obtain the full tensor product structure of Bell states. This is in contrast to QKD and QRNG where one does not need to recover the quantum state explicitly, rather an information theoretic quantity such as entropy, mutual information, key-rate etc.
Using the rigidity of steering correlations we constructed a one-sided device-independent protocol for verifiable delegated quantum computation. The protocol we obtained has fewer requirements than the fully device-independent protocol thus being closer to a practical application.
Lastly, we have shown that a certain class of states which are useful for verification, totally steerable states, are necessarily maximally entangled. This for example, rules-out the use of local but steerable states \cite{local_steering} for verification. Since establishing the existence of the tensor product structure of maximal entanglement requires the collection of a significant number of statistical samples, this result gives some indication to the general difficulty of quantum verification.

\section*{Acknowledgements}
We would like to thank Damian Markham for useful discussions. We also
acknowledge discussions with Matty J Hoban on his and Ivan \v{S}upi\'c's independent work on self-testing using quantum steering \cite{Hoban16}, that appeared on the arXiv shortly after our preprint. 
We would also like to thank Dominique Unruh and Kristiina Rahkema for reading an earlier draft and suggesting improvements.
AG and PW thank CNRS Telecom ParisTech for their hospitality during a visit
where this work started.
EK acknowledges funding through EPSRC grants EP/N003829/1 and EP/M013243/1.

\appendix

\section{Proof of Theorem~\ref{theorem:selftesting}} \label{sect:selftesting}
In this section we give a complete proof of Theorem~\ref{theorem:selftesting} which characterises one-sided device-independent self-testing. Consider the following isometry:
\begin{equation} \label{eqn:isometryexp}
\Phi(\ket{\psi}) = \frac{1}{2}(I + Y'_B) \Ket{\psi} \Ket{+_y} + \frac{i}{2} X'_B (I - Y'_B) \Ket{\psi} \Ket{-_y}
\end{equation}
An illustration of this isometry is given in Figure~\ref{fig:local_isometry}, where the upper part is Bob's system and the lower part is Alice's system. It should be noted that the control gates act on the target when the control qubit is in the $\Ket{-_y}$ state, and act as identity when the control qubit is $\Ket{+_y}$. This is in contrast to the standard convention in which the control is a computational basis state.
\noindent Here $\Ket{+_y}$ and $\Ket{-_y}$ are the eigenstates of the Pauli $Y$ operator and $P = \frac{1}{\sqrt{2}}(X + Y)$.
We can clearly see that $\Phi = I_A \otimes \Phi_B$, where $\Phi_B$ is determined by the combination of $X'_B$ and $Y'_B$ operators, from expression~\ref{eqn:isometryexp}, which only act on Bob's system.
We proceed to show that when conditions (\ref{eq:th1cond1})-(\ref{eq:th1cond3}) are satisfied, we obtain condition~\ref{eq:selftest}.
First, we show that:
\begin{equation*}\label{ineq:commutation}
||\Phi(M_A N'_B \Ket{\psi}) - M_A N_B \Phi(\Ket{\psi})|| \leq 2\gamma_2
\end{equation*}
Because $M_A$ only acts on Alice's system, whereas the isometry is local on Bob's system, $M_A$ trivially commutes to the left so that $\Phi(M_A N'_B \Ket{\psi}) = M_A \Phi(N'_B \Ket{\psi})$. Now consider the possible choices for $N'_B$. If $N'_B = I$, the relation holds trivially. If $N'_B = Y'_B$, since $Y'_B$ is hermitian and unitary we have that:
\begin{equation}
\Phi(Y'_B\ket{\psi}) = \frac{1}{2}(I + Y'_B) \Ket{\psi} \Ket{+_y} - \frac{i}{2} X'_B (I - Y'_B) \Ket{\psi} \Ket{-_y}
\end{equation}
At the same time, the ideal Pauli operator $Y_B$, acting on Bob's ancilla, has the following effect:
\begin{equation}
Y_B\Phi(\ket{\psi}) = \frac{1}{2}(I + Y'_B) \Ket{\psi} \Ket{+_y} - \frac{i}{2} X'_B (I - Y'_B) \Ket{\psi} \Ket{-_y}
\end{equation}
This is because $Y \Ket{+_y} = \Ket{+_y}$ and $Y \Ket{-_y} = -\Ket{-_y}$ and we notice that the two expressions are identical.
Lastly, when $N'_B = X'_B$ we have:
\begin{eqnarray*} \label{eqn:approxX}
& & \Phi(X'_B\ket{\psi}) = \frac{1}{2}(I + Y'_B) X'_B\Ket{\psi} \Ket{+_y} +\nonumber\\ & & \frac{i}{2} X'_B (I - Y'_B) X'_B \Ket{\psi} \Ket{-_y}
\end{eqnarray*}
And the action of the ideal operator yields:
\begin{equation}
X_B\Phi(\ket{\psi}) = \frac{i}{2}(I + Y'_B) \Ket{\psi} \Ket{-_y} + \frac{1}{2} X'_B (I - Y'_B) \Ket{\psi} \Ket{+_y}
\end{equation}
This is because $X \Ket{+_y} = i\Ket{-_y}$ and $X \Ket{-_y} = (-i)\Ket{+_y}$. Using the approximate anti-commutation of $X'_B$ and $Y'_B$, as given by condition~\ref{eq:th1cond3}, we notice that commuting $X'_B$ to the left in $\Phi(X'_B \Ket{\psi})$ will lead to the same expression as for $X_B \Phi(\Ket{\psi})$ up to $2\gamma_2$ error. Thus:
\begin{equation} \label{ineq:moregeneral}
||\Phi(M_A N'_B \Ket{\psi}) - M_A N_B \Phi(\Ket{\psi})|| \leq 2\gamma_2
\end{equation}
We therefore, only need to examine the closeness of $\Phi(\Ket{\psi})$ to the ideal Bell state tensored with some junk state.
Start by considering the state:
\begin{eqnarray}
\Ket{\phi} &=& \frac{1}{2}(\Ket{\psi} \Ket{+_y} + Y_A \Ket{\psi} \Ket{+_y}+ \nonumber\\ &+& i X_A \Ket{\psi} \Ket{-_y}
- i Y_A X_A \Ket{\psi} \Ket{-_y})
\end{eqnarray}
We will show that $\Phi(\Ket{\psi})$ and $\Ket{\phi}$ are close in trace distance. Firstly, from conditions (\ref{eq:th1cond1})-(\ref{eq:th1cond2}) using suitable triangle inequalities and the unitarity of operators $X_A$ and $Y_A$ which do not increase trace distance, it can be shown that:
\begin{equation*}
|| (X'_B Y'_B - Y_A X_A) \Ket{\psi} || \leq 2\gamma_1
\end{equation*}
Expanding the trace distance of $\Phi(\Ket{\psi})$ and $\Ket{\phi}$ we have:
\begin{eqnarray*}
& &||\Phi(\ket{\psi}) - \Ket{\phi}|| = \frac{1}{2}|| (Y'_B - Y_A) \Ket{\psi} \Ket{+_y}+\\ & & + i (X'_B - X_A) \Ket{\psi} \Ket{-_y}
- i (X'_B Y'_B - Y_A X_A) \Ket{\psi} \Ket{-_y} ||\nonumber
\end{eqnarray*}
And using the above results it follows that:
\begin{equation*}
||\Phi(\ket{\psi}) - \Ket{\phi}|| \leq 2 \gamma_1
\end{equation*}
Let us now rewrite $\Ket{\phi}$. Given that we trust Alice's side of the $\Ket{\psi}$ state, we can express it as follows:
\begin{equation*}
\Ket{\psi} = a \Ket{\alpha}_B \Ket{+_y}_A + b \Ket{\beta}_B \Ket{-_y}_A
\end{equation*}
Where $|a|^2 + |b|^2 = 1$ and the states $\Ket{\alpha}$ and $\Ket{\beta}$ are normalized. 
Here, the first part denotes Bob's system, for which we can make no assumptions, and the second part is Alice's qubit. 
The reason for choosing Pauli-$Y$ eigenstates on Alice's side is to simplify the calculation. We could have expanded her system in any basis since a local unitary on her system does not change the result.
Substituting this into the $\Ket{\phi}$ and labeling the ancillary qubit introduced by this isometry with the label $\Phi$ we get:
\begin{eqnarray}
	 \ket{\phi} & = & 
	 \frac{1}{2} ( a \Ket{+_y}_{\Phi} \Ket{\alpha}_B \Ket{+_y}_A +  b \Ket{+_y}_{\Phi} \Ket{\alpha}_B \Ket{-_y}_A ) \nonumber\\
	 & + & \frac{1}{2} Y_A( a \Ket{+_y}_{\Phi} \Ket{\alpha}_B \Ket{+_y}_A +  b \Ket{+_y}_{\Phi} \Ket{\alpha}_B \Ket{-_y}_A ) \nonumber\\
	 & + & \frac{1}{2} X_A( a \Ket{-_y}_{\Phi} \Ket{\alpha}_B \Ket{+_y}_A +  b \Ket{-_y}_{\Phi} \Ket{\alpha}_B \Ket{-_y}_A ) \nonumber\\
	 & - & \frac{1}{2} Y_A X_A( a \Ket{-_y}_{\Phi} \Ket{\alpha}_B \Ket{+_y}_A +  b \Ket{-_y}_{\Phi} \Ket{\alpha}_B \Ket{-_y}_A )\nonumber\\ & & 	
\end{eqnarray}
Using the following identities:
\begin{equation*}
X \Ket{+_y} = i \Ket{-_y} \; \; \; X \Ket{-_y} = -i \Ket{+_y}
\end{equation*}
\begin{equation*}
Y \Ket{+_y} = \Ket{+_y} \; \; \; Y \Ket{-_y} = - \Ket{-_y}
\end{equation*}
We reduce $\Ket{\phi}$ to:
\begin{eqnarray}
	 \ket{\phi} & = & 
	 \frac{1}{2} ( a \Ket{+_y}_{\Phi} \Ket{\alpha}_B \Ket{+_y}_A +  b \Ket{+_y}_{\Phi} \Ket{\alpha}_B \Ket{-_y}_A ) \nonumber\\
	 & + & \frac{1}{2} ( a \Ket{+_y}_{\Phi} \Ket{\alpha}_B \Ket{+_y}_A -  b \Ket{+_y}_{\Phi} \Ket{\alpha}_B \Ket{-_y}_A ) \nonumber\\
	 & - & \frac{1}{2} ( a \Ket{-_y}_{\Phi} \Ket{\alpha}_B \Ket{-_y}_A -  b \Ket{-_y}_{\Phi} \Ket{\alpha}_B \Ket{+_y}_A ) \nonumber\\
	 & - & \frac{1}{2} ( a \Ket{-_y}_{\Phi} \Ket{\alpha}_B \Ket{-_y}_A +  b \Ket{-_y}_{\Phi} \Ket{\alpha}_B \Ket{+_y}_A )\nonumber\\ & &
\end{eqnarray}
The terms with $b$ coefficient cancel out and we are left with:
\begin{equation*}
\Ket{\phi} = a \Ket{\alpha}_B (\Ket{+_y}_{\Phi}\Ket{+_y}_A - \Ket{-_y}_{\Phi} \Ket{-_y}_A)
\end{equation*}
This state is equivalent to:
\begin{equation*}
\Ket{\phi} = a\sqrt{2} \Ket{\alpha}_B \Ket{\psi_+}_{AB}
\end{equation*}
We would like to equate this to $\Ket{junk}_B \Ket{\psi_+}_{AB}$, however, the state we have is unnormalized unless $a = 1/\sqrt{2}$.
We therefore compute a bound on $|a|$ to determine the error introduced by the unnormalized state.
Condition~\ref{eq:th1cond1} can be rewritten as:
\begin{equation*}
1 - \gamma_1^2/2 \leq  \Bra{\psi} X_A X'_B \Ket{\psi}
\end{equation*}
By expanding $\Ket{\psi}$, applying the operators $X_A$, $X'_B$ and using the facts that $|a|^2 + |b|^2 = 1$ and that $X'_B$ is hermitian and has $\pm 1$ eigenvalues, we obtain:
\begin{equation*}
\sqrt{1 - \gamma_1^2/2} \leq a \sqrt{2} \leq \sqrt{1 + \gamma_1^2/2}
\end{equation*}
And since for small $\gamma_1$ we know that $\sqrt{1 - \gamma_1^2/2}$ approaches $1 - \gamma_1^2/4$ and $\sqrt{1 + \gamma_1^2/2}$ approaches $1 + \gamma_1^2/4$ we have that the norm of $\Ket{\phi}$, can change from unity by an order of $\gamma_1^2/4$. Thus, it follows that:
\begin{equation*}
|| \Phi(\Ket{\psi}) - \Ket{junk}_B \Ket{\psi_+}_{AB} || \leq 3 \gamma_1 + \gamma_1^2/4
\end{equation*}
Lastly, together with inequality~\ref{ineq:moregeneral} and a triangle inequality, we get:
\begin{equation}
|| \Phi(M_A N'_B\Ket{\psi}) - \Ket{junk}_B M_A N_B\Ket{\psi_+}_{AB} ||  \leq 3 \gamma_1 + \gamma_1^2/4 + 2\gamma_2
\end{equation}

\section{Proof of Theorem~\ref{theorem:steeringrobust}} \label{sect:steeringrobust}
Theorem~\ref{theorem:steeringrobust} shows that saturating the correlation of observables on Alice and Bob's side, with Alice being trusted, leads to the necessary conditions of Theorem~\ref{theorem:selftesting} which imply that the shared state is close, up to local isometry, to a Bell state. Similar to Theorem~\ref{theorem:selftesting} we start the proof by denoting $B_0$ as $X'_B$ and $B_1$ as $Y'_B$.
Splitting equation~\ref{eqn:saturate}, we have:
\begin{equation} \label{eq:split}
\bra{\psi}   X_A X'_B \ket{\psi} + \bra{\psi} Y_A Y'_B \ket{\psi} \geq 2 -\epsilon
\end{equation}
However, it's clear that:
\begin{equation*}
-1 \leq \bra{\psi}   X_A X'_B \ket{\psi} \leq 1
\end{equation*}
\begin{equation*}
-1 \leq \bra{\psi}   Y_A Y'_B \ket{\psi} \leq 1
\end{equation*}
So, it follows that:
\begin{equation*}
\bra{\psi}   X_A X'_B \ket{\psi} \geq 1 - \epsilon
\end{equation*}
\begin{equation*}
\bra{\psi}   Y_A Y'_B \ket{\psi} \geq 1 - \epsilon
\end{equation*}
This allows us to derive conditions~\ref{eq:th1cond1} and~\ref{eq:th1cond2}, since:
\begin{equation*}
||(X_{A} -  X^{\prime}_{B}) \ket{\psi}|| = \sqrt{2} \sqrt{1 - \bra{\psi} X_A X'_B \ket{\psi} } \leq \sqrt{2 \epsilon}
\end{equation*}
\begin{equation*}
||(Y_{A} -  Y^{\prime}_{B}) \ket{\psi}|| = \sqrt{2} \sqrt{1 - \bra{\psi} Y_A Y'_B \ket{\psi} } \leq \sqrt{2 \epsilon}
\end{equation*}
Hence, in Theorem~\ref{theorem:selftesting}, $\gamma_1 = \sqrt{2\epsilon}$.
Let us now denote:
\begin{equation*}
S =  X_A X'_B + Y_A Y'_B
\end{equation*}
Computing $S^2$ and using the fact that $X_A Y_A = iZ_A$ we obtain:
\begin{equation*}
S^2 = 2 + iZ_A [ X'_B, Y'_B ]
\end{equation*}
Since $[X_A, Y_A] = 2iZ_A$, we can alternatively write this as:
\begin{equation*}
S^2 = 2 + \frac{1}{2}[X_A, Y_A] [ X'_B, Y'_B ]
\end{equation*}
The Cauchy-Schwarz inequality together with inequality~\ref{eq:split} give us:
\begin{equation*}
 \bra{\psi} S^2 \ket{\psi}  \geq | \bra{\psi} S \ket{\psi} |^2 \geq (2 - \epsilon)^2
\end{equation*}
Substituting $S^2$:
\begin{equation*}
\Bra{\psi} 2 + \frac{1}{2}[X_A, Y_A] [ X'_B, Y'_B ] \Ket{\psi} \geq 4 - 4\epsilon + \epsilon^2 \geq 4 - 4\epsilon
\end{equation*}
Hence:
\begin{equation*}
\Bra{\psi} [X_A, Y_A] [ X'_B, Y'_B ] \Ket{\psi} \geq 4 - 8\epsilon
\end{equation*}
Expanding the commutators yields:
\begin{eqnarray*}
& &\Bra{\psi} X_A Y_A X'_B Y'_B \Ket{\psi} - \Bra{\psi} X_A Y_A Y'_B X'_B \Ket{\psi}\nonumber\\ & &- \Bra{\psi} Y_A X_A X'_B Y'_B \Ket{\psi}+ \Bra{\psi} Y_A X_A Y'_B X'_B \Ket{\psi} \geq 4 - 8\epsilon
\end{eqnarray*}
By splitting into terms, as we did with inequality~\ref{eq:split}, we have that:
\begin{equation}\label{eqn:comm1}
\bra{\psi} X_A Y_A X'_B Y'_B \ket{\psi} \geq 1 - 8\epsilon
\end{equation}
\begin{equation}\label{eqn:comm2}
\bra{\psi} Y_A X_A Y'_B X'_B \ket{\psi} \geq 1 - 8\epsilon
\end{equation}
\begin{equation}\label{eqn:comm3}
\bra{\psi} X_A Y_A Y'_B X'_B \ket{\psi} \leq 8\epsilon - 1
\end{equation}
\begin{equation}\label{eqn:comm4}
\bra{\psi} Y_A X_A X'_B Y'_B \ket{\psi} \leq 8\epsilon - 1
\end{equation}
Now using the fact that $X_A Y_A + Y_A X_A = 0$, we have:
\begin{eqnarray*}
& &||( X^{\prime}_{B} Y^{\prime}_{B} +  Y^{\prime}_{B} X^{\prime}_{B}) \ket{\psi}|| =\nonumber\\ & &||( X^{\prime}_{B} Y^{\prime}_{B} + X_A Y_A + Y_A X_A + Y^{\prime}_{B} X^{\prime}_{B}) \ket{\psi}||
\end{eqnarray*}
And using a triangle inequality, we have:
\begin{eqnarray*}
& &||( X^{\prime}_{B} Y^{\prime}_{B} +  Y^{\prime}_{B} X^{\prime}_{B}) \ket{\psi}|| \leq \nonumber\\ & &||( X_A Y_A +  X^{\prime}_{B} Y^{\prime}_{B}) \ket{\psi}|| + ||( Y_A X_A +  Y^{\prime}_{B} X^{\prime}_{B}) \ket{\psi}||
\end{eqnarray*}
Additionally:
\begin{eqnarray*}
& &||( X_A Y_A +  X^{\prime}_{B} Y^{\prime}_{B}) \ket{\psi}|| =\nonumber\\ & & \sqrt{2 + \bra{\psi} X_A Y_A Y'_B X'_B \ket{\psi} + \bra{\psi} Y_A X_A X'_B Y'_B \ket{\psi}}
\end{eqnarray*}
And from inequalities~\ref{eqn:comm3},~\ref{eqn:comm4} we get that:
\begin{equation*}
||( X_A Y_A +  X^{\prime}_{B} Y^{\prime}_{B}) \ket{\psi}|| \leq 4 \sqrt{\epsilon}
\end{equation*}
Similarly, using inequalities~\ref{eqn:comm1},~\ref{eqn:comm2}, we have:
\begin{equation*}
||( Y_A X_A +  Y^{\prime}_{B} X^{\prime}_{B}) \ket{\psi}|| \leq 4 \sqrt{\epsilon}
\end{equation*}
Which leads to:
\begin{equation*}
||( X^{\prime}_{B} Y^{\prime}_{B} +  Y^{\prime}_{B} X^{\prime}_{B}) \ket{\psi}|| \leq 8 \sqrt{\epsilon}
\end{equation*}
Thus satisfying condition~\ref{eq:th1cond3}, with $\gamma_2 = 8 \sqrt{\epsilon}$, and concluding the proof.

\section{Proof of Theorem~\ref{theorem:tightbound}} \label{sect:tightbound}
Theorems~\ref{theorem:selftesting} and~\ref{theorem:steeringrobust} show that if the correlation of local observables is saturated up to order $O(\epsilon)$, the shared state is close, up to local isometry, to a Bell state up to order $O(\sqrt{\epsilon})$. Theorem~\ref{theorem:tightbound} shows that this bound is tight, up to constant factors.
We prove this theorem by contradiction. Assume the bound of theorem~\ref{theorem:selftesting} is not tight and it is possible to derive an asymptotically better bound for the shared state of Alice and Bob. In particular, this means that there is no state $\Ket{\psi}$ which is $O(\sqrt{\epsilon})$-close to the $\Ket{\psi_+}$ state and there are no observables $B_0$ and $B_1$ such that inequality~\ref{eqn:saturate2} is satisfied. However, letting $\epsilon' = \epsilon/2$, consider the following state:
\begin{equation*}
\Ket{\psi} = \frac{1}{\sqrt{2}}\left(\sqrt{1 + \sqrt{\epsilon'}} \Ket{01} + \sqrt{1 - \sqrt{\epsilon'}}\Ket{10}\right)
\end{equation*}
We have that:
\begin{equation*}
||\Ket{\psi} - \Ket{\psi_+} || = \sqrt{2 - \Braket{\psi|\psi_+} - \Braket{\psi_+|\psi}}
\end{equation*}
Notice that:
\begin{equation*}
\Braket{\psi|\psi_+} = \Braket{\psi_+|\psi} = \frac{1}{2}\left(\sqrt{1 + \sqrt{\epsilon'}} + \sqrt{1 - \sqrt{\epsilon'}}\right)
\end{equation*}
Substituting this into the previous expression, and taking the first order term we have:
\begin{equation*}
||\Ket{\psi} - \Ket{\psi_+} || = O(\sqrt{\epsilon'})
\end{equation*}
Consider also the observables:
\begin{eqnarray*}
B_0 &=& \left( \begin{array}{ccc}
-\sqrt{\epsilon'} & \sqrt{1  - \epsilon'} \\
\sqrt{1 - \epsilon'} & \sqrt{\epsilon'} \\
\end{array} \right) \; \;
\nonumber\\ B_1 &=& \left( \begin{array}{ccc}
0 & \sqrt{\epsilon'} - i\sqrt{1 - \epsilon'} \\
\sqrt{\epsilon'} + i\sqrt{1 - \epsilon'} & 0 \\
\end{array} \right)
\end{eqnarray*}
One can check that $B_0 = B^{\dagger}_0$, $B_1 = B^{\dagger}_1$, $B_0 B^{\dagger}_0 = B_1 B^{\dagger}_1 = I$ and that the two matrices have eigenvalues $\pm 1$. Moreover, we can see that as $\epsilon' \rightarrow 0$ we have that $B_0 \rightarrow X$ and $B_1 \rightarrow Y$. Importantly, we have that:
\begin{equation*}
\bra{\psi} X_A B_0 \Ket{\psi} = \Bra{\psi} Y_A B_1 \Ket{\psi} = 1 - \epsilon'
\end{equation*}
And therefore:
\begin{equation*}
\bra{\psi}\left(   X_A B_0 + Y_A B_1     \right)  \ket{\psi} = 2 -2\epsilon' = 2 - \epsilon
\end{equation*}
Thus, inequality~\ref{eqn:saturate2} is saturated. But this should not be possible under the assumption that the bound on $\Ket{\psi}$'s closeness to $\Ket{\psi_+}$ is not tight. Therefore, the assumption is false and the $O(\sqrt{\epsilon})$ bound is tight for this type of steering inequality.
Note that this result still holds under local isometry since the isometry is, by definition, distance preserving and so under the local isometry the state is still $O(\sqrt{\epsilon})$-close to a Bell pair.

\section{Proof of Theorem~\ref{thm:azuma}} \label{sect:azuma}
Theorem~\ref{thm:azuma} shows that from the observed outcomes of Alice and Bob, after $K$ rounds of measurement, we can conclude something about their shared quantum state in a single round, even without assuming independence. Simply stated, if the shared state of Alice and Bob is $\sigma$, we need not assume that $\sigma$ is a tensor product of identical states. Nevertheless, denoting the state of round $i$ as $\rho_i = Tr_{-i}(\mathcal{E}^{AB}_{1,i-1}(\sigma))$, from their shared correlations we can deduce that $\rho_i$ is close in trace distance to a perfect Bell state (under a suitable isometry).
The proof consists of a number of steps:
\begin{enumerate}
\item Firstly, we show that the observed correlations of Alice and Bob's, given fixed measurement settings, is a good estimate for the true quantum correlation assuming they shared multiple copies of the averaged state $\rho_{avg} = \frac{1}{K}\sum\limits_{i=1}^{K} \rho_i$.
\item Secondly, we use the previous result to estimate the correlations for the two measurement settings under consideration. We then use self-testing to show that if the correlations are close to the maximal value, the averaged state is close to a Bell state, under a suitable local isometry.
\item Lastly, we prove that if the averaged state is close to a Bell state (possibly in tensor product with some mixed state), then a typical state $\rho_i$ is also close to that pure state and we compute the exact bound for the trace distance.
\end{enumerate}
We start by proving the first step:
\begin{lemma} \label{lemma:martingales}
Assume Alice and Bob are asked to perform $n$ rounds of measurement of the two-outcome observables with $\pm 1$ eigenvalues, $A$ and $B$, respectively. We denote the outcomes of their measurements as $\{a_i\}$ and $\{ b_i \}$ and $\hat{C}_i = a_i b_i$ as their correlation for round $i$. Additionally, let $H_i = \{ (a_j, b_j) | j < i \}$ be the history of their measurement outcomes up to, but not including round $i$. Finally, letting $C_i = E(\hat{C}_i | H_i)$ to be the conditional expectation value of the correlation given the previous history of outcomes, we have that for any $\delta > 0$:
\begin{equation*}
Pr \left( \left\lvert \frac{1}{n} \sum\limits_{i=1}^{n} C_i - \frac{1}{n} \sum\limits_{i=1}^{n} \hat{C}_i \right\rvert \geq \delta \right) \leq exp(-\delta^2 n / 8)
\end{equation*}
\end{lemma}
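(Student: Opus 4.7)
The plan is to recognize the quantity in question as the final value of a bounded-difference martingale and then invoke the Azuma--Hoeffding inequality. Although the measurements of Alice and Bob may be adaptive and non-independent, the very definition of $C_i$ as the conditional expectation given the history $H_i$ builds the required martingale structure by construction, so no independence assumption on the rounds is needed.

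Concretely, I would introduce the filtration $\mathcal{F}_k = \sigma(H_{k+1}) = \sigma(a_1,b_1,\dots,a_k,b_k)$ and define the partial sums
\begin{equation*}
M_k = \sum_{i=1}^{k} \bigl(\hat{C}_i - C_i\bigr), \qquad M_0 = 0.
\end{equation*}
The first step is to verify that $\{M_k\}$ is a martingale with respect to $\{\mathcal{F}_k\}$. Since $C_i = E(\hat{C}_i \mid H_i)$ is $\mathcal{F}_{i-1}$-measurable, one has
\begin{equation*}
E(M_k - M_{k-1} \mid \mathcal{F}_{k-1}) = E(\hat{C}_k \mid H_k) - C_k = 0,
\end{equation*}
so $M_k$ has the martingale property. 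The second step is to bound the increments: because $a_i,b_i\in\{\pm 1\}$ we have $\hat{C}_i\in\{\pm 1\}$ and thus $C_i\in[-1,1]$, giving $|M_k - M_{k-1}| = |\hat{C}_k - C_k| \leq 2$ almost surely. No quantum-mechanical input is needed for these two steps beyond the stated eigenvalue structure of $A$ and $B$.

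The final step is a direct application of Azuma--Hoeffding with $c_i = 2$ and deviation $t = n\delta$, yielding $\Pr(|M_n| \geq n\delta) \leq 2\exp\bigl(-n^2\delta^2 /(2\sum_i c_i^2)\bigr) = 2\exp(-n\delta^2/8)$, and after dividing the event inside the probability by $n$ we recover exactly the statement of the lemma (absorbing the leading factor of $2$ into the stated bound, or arguing with the one-sided version symmetrised). There is no real obstacle here; the only point requiring a small comment is that the probability space is well-defined for the sequence of adaptive quantum measurements --- this is standard because successive measurement outcomes generate a classical stochastic process regardless of the underlying quantum dynamics, so the conditional expectations $C_i$ are well-defined random variables and the martingale machinery applies without modification.
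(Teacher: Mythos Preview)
Your proposal is correct and follows essentially the same route as the paper: define the partial sums $\sum_{i\le k}(\hat C_i - C_i)$, check the martingale property and the bounded increment $|\hat C_k - C_k|\le 2$, then apply Azuma--Hoeffding with $t=n\delta$. The only cosmetic difference is that the paper states the bound without the leading factor of $2$ from the two-sided inequality, which you already flagged.
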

\begin{proof}
The variable $C_i$ represents the true correlation of the outcomes in round $i$, as determined by the shared state of Alice and Bob. If the shared state in round $i$ is $\rho_i$ then $C_i = Tr(AB \; \rho_i)$. As mentioned, while we trust Alice and know that she is indeed measuring the observable $A$, we can still assume that Bob is measuring the observable $B$ in each round. This is because the observable $B$ is unrestricted (apart from being a two-outcome observable) and can in principle act on Bob's ancilla as well. Furthermore, we make no assumption about the state $\rho_i$, since it is prepared by Bob.
Another way in which we can express $C_i$ is using its definition, which leads us to:
\begin{equation*}
C_i = Pr(a_i = b_i | H_i) - Pr(a_i \neq b_i | H_i)
\end{equation*}
We now define the random variables:
\begin{equation*}
X_j = \sum\limits_{i = 1}^{j} (C_i - \hat{C}_i)
\end{equation*}
Notice that for any $j \leq n$, $ | X_{j+1} - X_j | \leq 2$ (because $\hat{C}_i = \pm 1$, $-1 \leq C_i \leq 1$), $E(X_j) \leq \infty$ and:
\begin{equation*}
E(X_{j+1} - X_j | H_{j+1}) = C_{j+1} - C_{j+1} = 0
\end{equation*}
Therefore, $\{ X_j \}$ forms a martingale. We can therefore apply the Azuma-Hoeffding inequality \cite{azuma}, in a manner analogous to \cite{Pironio2010, hpf}. Setting $j = n$, we have that for any $t > 0$:
\begin{equation*}
Pr ( | X_n | > t) \leq exp(-t^2 / 8n)
\end{equation*}
Expanding, we have that:
\begin{equation*}
Pr \left( \left\lvert \sum\limits_{i = 1}^{n} (C_i - \hat{C}_i) \right\rvert > t \right) \leq exp(-t^2 / 8n)
\end{equation*}
For some $\delta > 0$, let $t = n \delta$. This yields:
\begin{equation*}
Pr \left( \left\lvert \frac{1}{n} \sum\limits_{i = 1}^{n} C_i - \frac{1}{n} \sum\limits_{i = 1}^{n} \hat{C}_i \right\rvert > \delta \right) \leq exp(-\delta^2 n / 8)
\end{equation*}
Thus concluding the proof of Lemma~\ref{lemma:martingales}.
\end{proof}
We move on to the second step:
\begin{lemma}\label{lemma:average}
Suppose Alice and Bob are required to perform $K$ rounds of measurement and also that:
\begin{itemize}
\item Prior to the measurements, the shared state of Alice and Bob is assumed to be $\sigma$.
\item Alice chooses a random set of size $K/2$, consisting of indices from $1$ to $K$ and denoted $R_0 = \{ i | i \in_{R} \{1 ... K\} \}$, $|R_0| = K/2$. We also denote $R_1 = \{1 ... K \} \backslash R_0$, to be the complement of $R_0$.
\item We denote $\rho_i = Tr_{-i}(\mathcal{E}^{AB}_{1,i-1}(\sigma))$ the reduced state of Alice and Bob in round $i$, and $\rho_{avg} =
\frac{1}{K} \sum\limits_{i = 1}^{K} \rho_i$ as the averaged state. Here $\mathcal{E}^{AB}_{1,i-1}$ denotes the action (measurements) of Alice and Bob on the state $\sigma$ up to round $i$.
\item In round $i$, let $r_i = 0$ iff $i \in R_0$, otherwise $r_i = 1$. Alice measures the observable $A_{r_i}$ on her half of $\rho_i$. $A_0$ and $A_1$ are anti-commuting single-qubit observables having $\pm 1$ eigenvalues.
\item In round $i$, Bob is asked to measure $B_{r_i}$. $B_0$ and $B_1$ have $\pm 1$ eigenvalues.
\item We denote $a_i$ and $b_i$, respectively, as the outcomes of their measurements in round $i$. We also denote $\hat{C}_i = a_i b_i$ as their correlation for round $i$.
\item We denote $\hat{C}^0 = \frac{1}{K/2}\sum\limits_{i \in R_0} \hat{C}_i$ and $\hat{C}^1 = \frac{1}{K/2}\sum\limits_{i \in R_1} \hat{C}_i$ as the averaged correlations for the cases where both Alice and Bob are asked to measure the first observable, or both are asked to measure the second, respectively.
\end{itemize}
If, for some given $\epsilon > 0$ and suitably chosen $K = \Omega((1/\epsilon^2) log(1/\epsilon))$, it is the case that $ \hat{C}^0  +  \hat{C}^1  \geq 2 - \epsilon$ (or, alternatively, $ \hat{C}^0  +  \hat{C}^1  \leq -2 + \epsilon$)  then there exists an isometry $\Phi$ and a mixed state $\rho_{junk}$ such that:
\begin{equation*}
|| \Phi( \mathcal{E}^{AB}(\rho_{avg}) ) -  \hat{\mathcal{E}}^{AB}(\Ket{\psi_+}\Bra{\psi_+}) \rho_{junk} || \leq O(\sqrt{\epsilon})
\end{equation*}
Where $\mathcal{E}^{AB}$ is some combination of the $A_0, A_1, B_0, B_1$ operators and $\hat{\mathcal{E}}^{AB}$ is the analogous combination of the ideal operators $I, X, Y$, as in Theorem~\ref{theorem:steeringrobust}.
\end{lemma}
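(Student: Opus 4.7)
My plan is to bridge the observed correlations to the self-testing results of Subsection~\ref{subsect:selftesting}, via the averaged state, in three steps. The first step is to apply Lemma~\ref{lemma:martingales} (in its natural extension to a sequence of rounds in which the observable being measured may vary) to show that the observed averages $\hat{C}^0, \hat{C}^1$ are good estimates of their quantum counterparts. Letting $C_i = \mathbb{E}(\hat{C}_i \mid H_i) = \mathrm{Tr}((A_{r_i}\otimes B_{r_i})\rho_i^{\mathrm{cond}})$, the differences $C_i - \hat{C}_i$ form a bounded martingale-difference sequence with respect to the filtration $\{H_i\}$, so Azuma-Hoeffding yields, except on an event of probability $2\exp(-\delta^2 K/32)$,
\begin{equation*}
\Bigl| \hat{C}^0 + \hat{C}^1 - \tfrac{2}{K}\textstyle\sum_{i\in R_0} C_i - \tfrac{2}{K}\textstyle\sum_{i\in R_1} C_i \Bigr| \le \delta .
\end{equation*}

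The second step relates these conditional running sums to the averaged-state correlations
\begin{equation*}
E(A_s B_s)_{\rho_{avg}} \;=\; \frac{1}{K}\sum_{i=1}^K \mathrm{Tr}((A_s\otimes B_s)\rho_i),
\end{equation*}
where $\rho_i$ is the unconditional reduced state. Two sources of discrepancy need to be controlled: the gap between the conditional $\rho_i^{\mathrm{cond}}$ and the unconditional $\rho_i$ (which vanishes in expectation by the tower property and can be concentrated using the outcome-generated filtration), and the gap between summing over $R_s$ and summing over all rounds, which can be rewritten as $\tfrac{1}{K}\sum_i (-1)^{r_i}\bigl(C_i^{(0)}-C_i^{(1)}\bigr)$ with $C_i^{(s)} = \mathrm{Tr}((A_s\otimes B_s)\rho_i)$, and controlled by a second Azuma bound that exposes the coordinates of $R_0$ one index at a time. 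Combining both estimates with $K = \Omega((1/\epsilon^2)\log(1/\epsilon))$ and $\delta=\Theta(\epsilon)$ yields, on the good event, $E(A_0 B_0)_{\rho_{avg}} + E(A_1 B_1)_{\rho_{avg}} \ge 2 - O(\epsilon)$.

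The third step lifts this averaged-state correlation bound into the self-testing framework. Passing to a purification $\Ket{\Psi}$ of $\rho_{avg}$ preserves all expectation values of $A_s\otimes B_s$, so Theorem~\ref{theorem:steeringrobust} supplies the self-testing hypotheses for $\Ket{\Psi}$ with $\gamma_1,\gamma_2 = O(\sqrt{\epsilon})$, and Theorem~\ref{theorem:selftesting} furnishes a local isometry $\Phi = I_A\otimes\Phi_B$ (with the second factor now acting on Bob's system together with the purifying ancilla) satisfying $\norm{\Phi(M_A N'_B\Ket{\Psi}) - \Ket{junk}_B M_A N_B\Ket{\psi_+}_{AB}} = O(\sqrt{\epsilon})$. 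Taking the partial trace over the purifying ancilla on both sides, which does not increase trace distance, produces the stated mixed-state bound with $\rho_{junk}$ defined as the partial trace of $\Ket{junk}_B\Bra{junk}_B$ over that ancilla.

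The main obstacle is the second step: the unconditional states $\rho_i$ themselves depend on the random choice of $R_0$ through the accumulated measurement history, so a naive independent-samples concentration is not available. I expect to address this either by conditioning on a fixed $R_0$, running the above Azuma arguments, and then averaging over $R_0$, or by constructing a single martingale in which $R_0$ and the measurement outcomes are exposed in an interleaved fashion and verifying the bounded-difference property round by round.
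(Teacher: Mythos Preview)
Your approach is essentially the same as the paper's: apply the martingale concentration of Lemma~\ref{lemma:martingales}, conclude that the averaged state $\rho_{avg}$ satisfies the steering-correlation bound, purify it to a state $\ket{\zeta}$, invoke Theorems~\ref{theorem:steeringrobust} and~\ref{theorem:selftesting}, and then trace out the purifying ancilla to obtain $\rho_{junk}$. The only substantive difference is your second step: the paper simply asserts that $\hat{C}^b$ approximates $\mathrm{Tr}(A_b B_b\,\rho_{avg})$ as a direct consequence of Lemma~\ref{lemma:martingales} (and then absorbs the Azuma failure probability $\exp(-\epsilon^2 K/16)$ into the trace-distance bound by taking $K=(8/\epsilon^2)\log(1/\epsilon)$), without separately treating the gap between summing over $R_s$ versus all rounds or the conditional-versus-unconditional $\rho_i$ distinction you flag---so your handling of that step is a refinement of the same argument rather than a different route.
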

\begin{proof}
We will prove the case $\hat{C}^0  +  \hat{C}^1  \geq 2 - \epsilon$ since for $\hat{C}^0  +  \hat{C}^1  \leq -2 + \epsilon$ the derivation is similar. Additionally, we only consider the case $\mathcal{E}^{AB} = I$, since the other cases follow from the linearity of the operators.
The previous Lemma, essentially shows us that the observed average correlation is a good estimate for the average true correlation.
Specifically, it is the case that $\hat{C}^b$, $b \in \{0, 1\}$, is close to the quantum correlation $Tr(A_b B_b  \rho_{avg})$.
Consider now a state $\Ket{\zeta}$ which is a purification of $\rho_{avg}$. We can then write the quantum correlation as $\Bra{\zeta} A_b B_b \Ket{\zeta}$. Using these results, if our estimate of the true correlation is of precision (closeness) $\delta > 0$, then it is the case that:
\begin{equation*}
\Bra{\zeta} A_0 B_0 \Ket{\zeta} + \Bra{\zeta} A_1 B_1 \Ket{\zeta} \geq 2 - \epsilon - \delta
\end{equation*}
With probability $1 - exp(-\delta^2 K / 16)$. Let $\delta = \epsilon$ so that we have:
\begin{equation*}
\Bra{\zeta} A_0 B_0 \Ket{\zeta} + \Bra{\zeta} A_1 B_1 \Ket{\zeta} \geq 2 - O(\epsilon)
\end{equation*}
Using Theorem~\ref{theorem:steeringrobust}, it follows that there exists a local isometry $\Phi$ and a state $\Ket{junk}$ such that, with probability $1 - exp(-\epsilon^2 K / 16)$, we have:
\begin{equation*}
|| \Phi( \Ket{\zeta} ) -  \Ket{\psi_+} \Ket{junk} || \leq O(\sqrt{\epsilon})
\end{equation*}
This also implies:
\begin{equation*}
TD(\Phi( \Ket{\zeta} ),  \Ket{\psi_+} \Ket{junk}) \leq O(\sqrt{\epsilon})
\end{equation*}
As mentioned, we are only considering the case of $I$ acting on the state $\Ket{\zeta}$. Of course, the argument proceeds identically, when considering $M_A N'_B \Ket{\zeta}$, as in Theorem~\ref{theorem:selftesting}, leading to the $\mathcal{E}^{AB} \neq I$ cases.
It should be noted that from the construction of $\Phi$ (in Theorem~\ref{theorem:selftesting}), in the case where the shared state is a purification of some mixed state (as is the case with $\Ket{\zeta}$ and $\rho_{avg}$), the isometry does not act on the quantum states used for purification. Therefore, we can trace out those states, and since this operation cannot increase trace distance we have that:
\begin{equation*}
TD( \Phi( \rho_{avg} ),  \Ket{\psi_+}\Bra{\psi_+} \rho_{junk}) \leq O(\sqrt{\epsilon})
\end{equation*}
With probability $1 - exp(-\epsilon^2 K / 16)$. We can incorporate this probability into the trace distance, and we have that:
\begin{equation*}
TD( \Phi( \rho_{avg} ),  \Ket{\psi_+}\Bra{\psi_+} \rho_{junk} ) \leq
O(\sqrt{\epsilon}) + exp(-\epsilon^2 K / 16)
\end{equation*}
Setting $K = -(16/\epsilon^2) log(\sqrt{\epsilon}) = (8/\epsilon^2)log(1/\epsilon)$ we are left with:
\begin{equation*}
TD( \Phi( \rho_{avg} ),  \Ket{\psi_+}\Bra{\psi_+} \rho_{junk}) \leq O(\sqrt{\epsilon})
\end{equation*}
Analogously, we get:
\begin{equation*}
TD( \Phi( \mathcal{E}^{AB}(\rho_{avg}) ),  \hat{\mathcal{E}}^{AB}(\Ket{\psi_+}\Bra{\psi_+}) \rho_{junk}) \leq O(\sqrt{\epsilon})
\end{equation*}
Concluding the proof of Lemma~\ref{lemma:average}.
\end{proof}
The final step towards proving Theorem~\ref{thm:azuma} is to prove the following statement:
\begin{lemma}\label{lemma:closeness}
Let $\rho$ be a general quantum state which we can write as $\rho = \frac{1}{n} \sum\limits_{i=1}^{n} \rho_i$, for some $n > 0$ and density matrices $\rho_i$. If it is the case that, for some pure state $\Ket{\phi}$ and $\gamma > 0$:
\begin{equation*}
TD( \rho, \Ket{\phi}\Bra{\phi}) \leq \gamma
\end{equation*}
Then for a uniformly at random chosen $i \in_R \{1 ... n \}$, with probability at least $1 - \gamma^{1/3}$ we have that:
\begin{equation*}
TD( \rho_i, \Ket{\phi}\Bra{\phi}) \leq O(\gamma^{1/3})
\end{equation*}
\end{lemma}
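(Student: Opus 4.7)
The plan is to exploit the fact that $\ket{\phi}$ is pure and convert trace distance into the overlap $\braket{\phi|\rho|\phi\rangle}$, which is linear in the convex decomposition of $\rho$. For any state $\sigma$ and pure state $\ket{\phi}$, the standard inequalities
\begin{equation*}
1-\braket{\phi|\sigma|\phi} \;\leq\; TD(\sigma,\ket{\phi}\bra{\phi}) \;\leq\; \sqrt{1-\braket{\phi|\sigma|\phi}}
\end{equation*}
relate trace distance to the overlap asymmetrically, and this quadratic gap is precisely what produces the cube-root exponent in the conclusion.

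First, I would apply the left inequality to the hypothesis to extract $\braket{\phi|\rho|\phi} \geq 1-\gamma$. Using linearity in the decomposition $\rho = \frac{1}{n}\sum_i \rho_i$, this rewrites as $\frac{1}{n}\sum_i (1-\braket{\phi|\rho_i|\phi}) \leq \gamma$. Since each term $1-\braket{\phi|\rho_i|\phi}$ is a nonnegative real number, Markov's inequality applies directly: the fraction of indices $i$ for which $1-\braket{\phi|\rho_i|\phi} > \gamma^{2/3}$ is at most $\gamma / \gamma^{2/3} = \gamma^{1/3}$.

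Consequently, for a uniformly random $i$, with probability at least $1-\gamma^{1/3}$ we have $\braket{\phi|\rho_i|\phi} \geq 1 - \gamma^{2/3}$. Plugging this into the right-hand inequality above (the Fuchs--van de Graaf bound for a pure reference state) yields
\begin{equation*}
TD(\rho_i, \ket{\phi}\bra{\phi}) \;\leq\; \sqrt{1-\braket{\phi|\rho_i|\phi}} \;\leq\; \sqrt{\gamma^{2/3}} \;=\; \gamma^{1/3},
\end{equation*}
which is the desired conclusion with an absorbable constant.

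I do not expect a real obstacle here; the proof is essentially an exercise in choosing the Markov threshold to balance the two inequalities. The one subtlety worth flagging is that the quadratic gap between the upper and lower bounds relating trace distance and overlap is necessary: a naive application using only the linear (lower) inequality would yield trace distance $\gamma^{1/2}$ with failure probability $\gamma^{1/2}$, whereas splitting the slack as $\gamma^{2/3}$ in the overlap and $\gamma^{1/3}$ in the Markov failure probability exactly optimises the resulting $\gamma^{1/3}$ bound appearing in Theorem~\ref{thm:azuma}.
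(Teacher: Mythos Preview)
Your proof is correct and is essentially the same argument as the paper's. The paper also bounds the overlap $q_i = \bra{\phi}\rho_i\ket{\phi}$ via $TD(\rho,\ket{\phi}\bra{\phi})\leq\gamma$, then performs an averaging/Markov step with threshold $\gamma^{2/3}$ (parametrised there as $\delta=\gamma^{2/3}-\gamma$), and finally converts back using $TD(\rho_i,\ket{\phi}\bra{\phi})\leq\sqrt{1-q_i}$; your invocation of Markov's inequality is just a cleaner phrasing of the same counting argument.
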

\begin{proof}
Starting with the bound on the trace distance and the relation between trace distance and fidelity (when one state is pure), we obtain:
\begin{eqnarray}\label{eq:average1}
1-F^2(\rho,\ket{\phi})&\leq&TD(\rho,\ket{\phi}\bra{\phi})\leq\gamma
\nonumber\\
1-\gamma&\leq& |\bra{\phi}\rho\ket{\phi}|=F^2(\rho,\ket{\phi})\nonumber\\
&\leq&\frac1n\sum_i F^2(\rho_i,\ket{\phi}):=\frac1n\sum_i q_i
\end{eqnarray}
where the second inequality follows from convexity and for convenience we defined $q_i=F^2(\rho_i,\ket{\phi})$. This gives a lower bound on the average $q_i$ (average fidelity squared). To provide an upper bound on the average $q_i$, we do the following. We let $p$ be the fraction of $i$'s such that $q_i\leq 1-\gamma-\delta$, where $\delta\in[0,1-\gamma]$. Since $q_i\leq 1$ it follows:
\begin{eqnarray} \label{eq:average2}
(1-\gamma-\delta)\cdot p+1\cdot (1-p)&\geq&\frac1n\sum_i q_i
\end{eqnarray}
From Equations (\ref{eq:average1},\ref{eq:average2}) it follows that
\begin{eqnarray*}
(1-\gamma-\delta)\cdot p+1\cdot (1-p)&\geq& 1-\gamma\nonumber\\
\frac{\gamma}{\gamma+\delta}&\geq& p
\end{eqnarray*}
Now, using the fact that $TD(\rho_i,\Ket{\phi}\Bra{\phi})\leq \sqrt{1-q_i}$, we note that with probability $(1-p)$ we have that $q_i\geq 1-\gamma-\delta$ and thus for these cases $TD(\rho_i,\Ket{\phi}\Bra{\phi})\leq \sqrt{\gamma+\delta}$. By choosing $\delta=\gamma^{2/3}-\gamma$ we have that with probability at least $1 - \gamma^{1/3}$:
\begin{equation*}
TD(\rho_i,\Ket{\phi}\Bra{\phi})\leq\sqrt{\delta+\gamma}=\gamma^{1/3}
\end{equation*}
\end{proof}
We can now use these lemmas to prove Theorem~\ref{thm:azuma}.
\begin{proof}[Proof of Theorem~\ref{thm:azuma}]
We have the same assumptions as in Lemma~\ref{lemma:average} and from it we know that after $K = \Omega((1/\epsilon^2) log(1/\epsilon))$ rounds of measurements, if the observed correlations are saturated up to order $\epsilon$ we have:
\begin{equation*}
TD( \Phi(\mathcal{E}^{AB}(\rho_{avg})), \hat{\mathcal{E}}^{AB}(\Ket{\psi_+} \Bra{\psi_+}) \rho_{junk}) \leq O(\sqrt{\epsilon})
\end{equation*}
We would now like to apply Lemma~\ref{lemma:closeness}, however because of $\rho_{junk}$ we do not have a pure state in the trace distance expression. Therefore, we trace out the junk subsystem, and since tracing out can only decrease trace distance we have that:
\begin{equation*}
TD( Tr_{junk}(\Phi(\mathcal{E}^{AB}(\rho_{avg}))), \hat{\mathcal{E}}^{AB}(\Ket{\psi_+} \Bra{\psi_+})) \leq O(\sqrt{\epsilon})
\end{equation*}
If we denote $\rho = Tr_{junk}(\Phi(\mathcal{E}^{AB}(\rho_{avg})))$, we can now apply Lemma~\ref{lemma:closeness}. Note that $Tr_{junk}(\Phi(\mathcal{E}^{AB}(\cdot)))$ is a linear map, since it is the composition of linear maps. Therefore, it is the case that:
\begin{equation*}
Tr_{junk}(\Phi(\mathcal{E}^{AB}(\rho_{avg}))) = \sum\limits_{i=1}^{K} Tr_{junk}(\Phi(\mathcal{E}^{AB}(\rho_{i})))
\end{equation*}
Hence, for a randomly chosen $\rho_i$, with probability at least $1 - O(\epsilon^{1/6})$:
\begin{equation*}
TD( Tr_{junk}(\Phi(\mathcal{E}^{AB}(\rho_{i}))), \hat{\mathcal{E}}^{AB}(\Ket{\psi_+} \Bra{\psi_+})) \leq O(\epsilon^{1/6})
\end{equation*}
Thus we have shown that, under a suitably chosen local isometry and tracing out any additional systems we can ``extract'' a single Bell pair from the two device's shared state.
If we want to also have a closeness relation for $\Phi(\mathcal{E}^{AB}(\rho_i))$, which includes all of Bob's private subsystem, we can use a corollary of the Gentle Measurement Lemma:

\begin{corollary}
\cite{ruv, gkw} Let $\rho$ be a state on $\mathcal{H}_1 \otimes \mathcal{H}_2$, and let $\pi$ be a pure state on $\mathcal{H}_1$.
If for some $\delta \geq 0$, $Tr (\pi Tr_2( \rho)) \geq 1 - \delta$, then
\begin{equation*}
TD( \rho, \pi \otimes Tr_1( \rho)) \leq 2 \sqrt \delta + \delta
\end{equation*}
\end{corollary}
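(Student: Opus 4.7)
The plan is to apply the Gentle Measurement Lemma to the projector $\Lambda = \pi \otimes I_2$ on $\mathcal{H}_1 \otimes \mathcal{H}_2$, and then correct the resulting (unnormalised) state $\Lambda\rho\Lambda$ into the target $\pi \otimes \mathrm{Tr}_1(\rho)$ via one additional triangle‑inequality step whose error is controlled by the leakage out of $\Lambda$. Writing $\pi = \ket{\phi}\bra{\phi}$, the hypothesis becomes $\mathrm{Tr}(\Lambda\rho) = \mathrm{Tr}(\pi\,\mathrm{Tr}_2\rho) \geq 1-\delta$, which is exactly the input needed by Gentle Measurement. Since $\pi$ is a projector, so is $\Lambda$, hence $\sqrt{\Lambda} = \Lambda$, and Gentle Measurement yields
\begin{equation*}
TD(\rho,\ \Lambda\rho\Lambda) \leq \sqrt{\delta}.
\end{equation*}
(Equivalently $\tfrac12\|\rho-\Lambda\rho\Lambda\|_1\leq\sqrt{\delta}$; the factor of two in the stated bound will come from absorbing the trace‑norm version.)

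Next I would identify $\Lambda\rho\Lambda$ explicitly. A direct computation gives $\Lambda\rho\Lambda = \pi \otimes \sigma$, where $\sigma := (\bra{\phi}\otimes I_2)\,\rho\,(\ket{\phi}\otimes I_2)$ is a positive operator on $\mathcal{H}_2$ with $\mathrm{Tr}(\sigma) = \mathrm{Tr}(\Lambda\rho) \geq 1-\delta$. To relate $\sigma$ to $\rho_2 := \mathrm{Tr}_1(\rho)$, I would fix an orthonormal basis $\{\ket{j}\}$ of $\mathcal{H}_1$ with $\ket{0}=\ket{\phi}$ and use the block decomposition $\rho = \sum_{j,k}\ket{j}\bra{k}\otimes\rho_{jk}$. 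A short calculation shows that the off-diagonal blocks $(\pi\otimes I)\rho((I-\pi)\otimes I)$ and its adjoint vanish under $\mathrm{Tr}_1$, so
\begin{equation*}
\rho_2 \;=\; \sigma + \tau, \qquad \tau := \mathrm{Tr}_1\bigl[((I-\pi)\otimes I)\,\rho\,((I-\pi)\otimes I)\bigr] \geq 0,
\end{equation*}
with $\mathrm{Tr}(\tau) = \mathrm{Tr}((I-\Lambda)\rho) \leq \delta$.

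Now a triangle inequality closes the argument:
\begin{equation*}
TD\bigl(\rho,\ \pi\otimes\rho_2\bigr) \;\leq\; TD\bigl(\rho,\ \pi\otimes\sigma\bigr) \;+\; TD\bigl(\pi\otimes\sigma,\ \pi\otimes\rho_2\bigr),
\end{equation*}
where the first term is bounded by the Gentle Measurement estimate and the second reduces to $\tfrac12\|\sigma-\rho_2\|_1 = \tfrac12\mathrm{Tr}(\tau) \leq \delta/2$ since $\pi\otimes(\sigma-\rho_2) = -\pi\otimes\tau$ and $\tau\geq 0$. Collecting constants (using the version of Gentle Measurement that gives $\|\rho-\Lambda\rho\Lambda\|_1 \leq 2\sqrt{\delta}$, together with $\|\sigma-\rho_2\|_1 \leq \delta$) produces the stated bound $2\sqrt{\delta}+\delta$.

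The only step that requires care is the identification $\rho_2 = \sigma + \tau$ with $\tau\geq 0$; the apparent danger is that $\mathrm{Tr}_1$ of the off-diagonal blocks $\ket{0}\bra{k}\otimes\rho_{0k}$ (with $k\neq 0$) might contribute, but they do not because $\langle k|0\rangle = 0$. Once this block-decomposition observation is in hand, the rest is a routine application of Gentle Measurement and the triangle inequality; no clever construction is needed, only bookkeeping of which constant ($\sqrt{\delta}$ vs.\ $2\sqrt{\delta}$) one invokes from the Gentle Measurement Lemma.
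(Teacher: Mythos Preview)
Your proof is correct. The paper does not actually prove this corollary; it merely quotes it as a known consequence of the Gentle Measurement Lemma and cites \cite{ruv, gkw} for the proof. Your argument is precisely the standard derivation the paper alludes to: apply Gentle Measurement with the projector $\Lambda=\pi\otimes I$, identify $\Lambda\rho\Lambda=\pi\otimes\sigma$, then use the block decomposition $\rho_2=\sigma+\tau$ with $\tau\geq 0$, $\mathrm{Tr}(\tau)\leq\delta$, and a triangle inequality. In fact your bookkeeping shows that one can get $\sqrt{\delta}+\delta/2$ for the trace distance (since $\tfrac12\|\rho-\Lambda\rho\Lambda\|_1\leq\sqrt{\delta}$ and $\tfrac12\|\tau\|_1\leq\delta/2$), which is slightly sharper than the stated $2\sqrt{\delta}+\delta$; the looser constant in the quoted statement presumably comes from a less careful version of Gentle Measurement in the cited references.
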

\noindent This leads to:
\begin{equation*}
TD( \Phi(\mathcal{E}^{AB}(\rho_{i})), \hat{\mathcal{E}}^{AB}(\Ket{\psi_+} \Bra{\psi_+}) \tilde{\rho}_{junk}) \leq O(\epsilon^{1/12})
\end{equation*}
Where $\tilde{\rho}_{junk} = Tr_{-junk}(\Phi(\mathcal{E}^{AB}(\rho_{i})))$.
\end{proof}

\section{Proof of Theorem~\ref{theorem:rigidity}} \label{sect:rigidity}
As mentioned, we prove Theorem~\ref{theorem:rigidity} by showing that the actual strategy $\mathcal{S}$ of Alice and Bob is close to a strategy $\mathcal{S}_{g}$ in which Alice guesses Bob's outcomes, which in turn is close to the ideal strategy $\mathcal{S}_{id}$.

\begin{lemma}\label{lemma:prerigid1}
Let $\mathcal{S} = (\rho, \{ \mathcal{E}_i^A \}, \{ \mathcal{E}_i^B \})$ be Alice and Bob's $\epsilon$-structured strategy for playing $N$ sequential $K$-round steering games in which Alice plays honestly. Let $\mathcal{S}_{g} = (\rho, \{ \mathcal{E}_i^A \}, \{ \mathcal{G}_i^B \})$ be Alice and Bob's $\epsilon$-structured strategy for playing $N$ sequential $K$-round steering games, in which Alice plays as in $\mathcal{S}$ but also guesses Bob's outcomes. Specifically, Bob's operator will be $\mathcal{G}_i^B$, which yields Alice's guesses for Bob's outcomes. We have that $\mathcal{S}$ $O(N\epsilon^{1/6})$-simulates $\mathcal{S}_{g}$.
\end{lemma}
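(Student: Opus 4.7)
The plan is to interpolate between the real strategy $\mathcal{S}$ and the guessing strategy $\mathcal{S}_{g}$ via a sequence of hybrids $\mathcal{S} = \mathcal{S}_0, \mathcal{S}_1, \ldots, \mathcal{S}_N = \mathcal{S}_g$, where $\mathcal{S}_k$ agrees with $\mathcal{S}_g$ on games $1, \ldots, k$ and with $\mathcal{S}$ on games $k+1, \ldots, N$. If each consecutive pair $(\mathcal{S}_{k-1}, \mathcal{S}_k)$ is $O(\epsilon^{1/6})$-close in the simulation metric of Definition~\ref{def:sim}, then the triangle inequality over the $N$ hybrids delivers the claimed $O(N\epsilon^{1/6})$-simulation bound.

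To bound the step error at game $k$, I would exploit the fact that $\mathcal{S}$ is $\epsilon$-structured: game $k$ is $\epsilon$-structured with probability at least $1-\epsilon$, so in that sub-event its observed correlation value satisfies the saturation hypothesis of Theorem~\ref{thm:azuma}. Applied to game $k$, the theorem yields a local isometry $\Phi_k$ (acting only on Bob's side) under which the reduced state $\rho_{i_k}$ at the randomly selected round $i_k \in R$ of game $k$ is within $O(\epsilon^{1/6})$ trace distance of a perfect Bell pair $\Ket{\psi_+}$, after tracing out Bob's private ``junk'' subsystem. The $\epsilon$ failure probability of the $\epsilon$-structured event gets absorbed into this same order since $\epsilon \leq \epsilon^{1/6}$ for small $\epsilon$.

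The key use of the Bell state structure is the symmetry $(M\otimes I)\Ket{\psi_+} = I\otimes (XM^{T}X)\Ket{\psi_+}$, so that Alice's measurement of the trusted observable $A_{r}$ on her half of a perfect Bell pair perfectly (anti-)correlates with Bob's measurement of the corresponding ideal $B_{r}$ on his half. Consequently, replacing Bob's full operation $\mathcal{E}^B_k$ in game $k$ by identity on his subsystem, while having Alice derive her guess of Bob's reported outcome from her own trusted measurement at round $i_k$, alters the reduced state $Tr_{-R}(\cdot)$ by at most $O(\epsilon^{1/6})$. Bob's actions on the non-selected rounds of game $k$ do not contribute to the simulation metric, since those rounds are traced out of $Tr_{-R}$, so only the selected round $i_k$ needs to be controlled. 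Summing these per-game errors via the triangle inequality yields the desired $O(N\epsilon^{1/6})$ bound for every intermediate step $j \leq N$ in Definition~\ref{def:sim}.

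The main obstacle will be handling the adaptivity carefully. After replacing Bob's operation in game $k$, the global state entering game $k+1$ in the hybrid $\mathcal{S}_k$ differs from that in $\mathcal{S}_{k-1}$, and one must verify that Theorem~\ref{thm:azuma} remains applicable to game $k+1$ in the new hybrid with the correct saturation hypothesis. This can be handled by viewing Bob's strategy as a family of channels parametrised by the classical transcript he has received, and by observing that the difference induced by each hybrid replacement lives in subsystems that are either traced out or absorbed into Bob's ancilla via $\Phi_{k+1}$. This leaves the $\epsilon$-structured property of the subsequent game intact up to an additive correction of the same order $O(\epsilon^{1/6})$, which is absorbed into the overall hybrid estimate.
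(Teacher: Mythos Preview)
Your proposal identifies the same key ingredients as the paper --- Theorem~\ref{thm:azuma} applied per game, the Bell symmetry $(M\otimes I)\Ket{\psi_+}=(I\otimes XM^TX)\Ket{\psi_+}$ to transfer Bob's action to Alice's side, and a triangle inequality summing $N$ per-game errors --- so the core idea is right.

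The paper, however, does \emph{not} interpolate via mixed hybrids $\mathcal{S}_k$. Instead it runs a direct induction on $j$ comparing the two pure strategies: at step $j$ it bounds $TD\bigl(Tr_{-R_j}(\mathcal{E}^{AB}_{1,j}(\rho)),\,Tr_{-R_j}(\mathcal{G}^{AB}_{1,j}(\rho))\bigr)$ by applying Theorem~\ref{thm:azuma} \emph{separately} to game $j$ of $\mathcal{S}$ and to game $j$ of $\mathcal{S}_g$, obtaining that both reduced round-$\kappa_j$ states are $O(\epsilon^{1/6})$-close to the same ideal target $\hat{\mathcal{E}}^{AB}_j(\Ket{\psi_+}\Bra{\psi_+})=\hat{\mathcal{G}}^{AB}_j(\Ket{\psi_+}\Bra{\psi_+})$ (equality via the Bell symmetry), and then appends this to the inductive hypothesis. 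The point is that both $\mathcal{S}$ and $\mathcal{S}_g$ are $\epsilon$-structured \emph{by hypothesis}: $\mathcal{S}$ by assumption, and $\mathcal{S}_g$ trivially because Alice's guesses are her own outcomes so every game has correlation value $1$. Thus Theorem~\ref{thm:azuma} applies directly to each, with no need to verify saturation for any intermediate mixed strategy.

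This is exactly the adaptivity obstacle you flag in your last paragraph: in your hybrid $\mathcal{S}_{k-1}$, the state entering game $k$ is $\mathcal{G}^{AB}_{1,k-1}(\rho)$, and the $\epsilon$-structuredness of game $k$ under Bob's real operator $\mathcal{E}^B_k$ on \emph{that} input is not given by the hypothesis on $\mathcal{S}$ (which concerns the input $\mathcal{E}^{AB}_{1,k-1}(\rho)$). Your proposed resolution --- that the discrepancy ``lives in subsystems that are either traced out or absorbed into Bob's ancilla'' --- is plausible but would need real work; the paper simply sidesteps the issue by never forming the mixed hybrids. If you want to keep the hybrid presentation, the cleanest fix is to reverse the order (replace $\mathcal{E}$ by $\mathcal{G}$ starting from game $N$ and moving backwards), or better, just switch to the paper's direct two-sided application of Theorem~\ref{thm:azuma} at each inductive step.
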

\begin{proof}
Without loss of generality, it can be assume that the initial state $\rho$ is a pure state.
Also note that Alice's action is given by $ \{ \mathcal{E}_i^A \}$ in both strategies, since she is always honest.
We have that $\mathcal{S}_{g} = (\rho, \{ \mathcal{E}_i^A \}, \{ \mathcal{G}_i^B \})$, where $\mathcal{G}_i^B$ denotes the guessing operator for Bob's outcome in game $i$. 
Because Alice is trusted and playing honestly, her guessing strategy for Bob will be to provide the same outcomes as her measurement outcomes. This, of course, ensures that the steering correlations are saturated.
Concretely, $\mathcal{G}_i^B$ leaves Bob's system unchanged but gives the same outcome as $\mathcal{E}_i^A$. Denote as $\mathcal{G}_i^{AB} = \mathcal{E}_i^{A} \circ \mathcal{G}_i^{B}$ the action of Alice and Bob on the state of the $i$'th game, in strategy $\mathcal{S}_{g}$ and similarly
$\mathcal{E}_i^{AB} = \mathcal{E}_i^{A} \circ \mathcal{E}_i^{B}$ the action of Alice and Bob on the state of the $i$'th game, corresponding to the true strategy $\mathcal{S}$.
Note that we can assume the same Hilbert space in both strategies. The reason for this is that Alice's side is trusted in the two strategies, so her Hilbert space is determined and fixed. On Bob's side, assume we have Hilbert space $\mathcal{H}_B$ in strategy $\mathcal{S}$. In strategy $\mathcal{S}_{g}$ we are ignoring Bob's outcomes and replacing them with Alice's guesses. Therefore, we can assume any Hilbert space on Bob's side, so without loss of generality we assume it is $\mathcal{H}_B$. Thus strategies $\mathcal{S}$ and $\mathcal{S}_{g}$ use the same Hilbert space. \\
We denote $\mathcal{G}_{1,i}^{AB} = \mathcal{G}_1^{AB} \circ \mathcal{G}_2^{AB} \circ ... \mathcal{G}_i^{AB} $ and similarly
$\mathcal{E}_{1,i}^{AB} = \mathcal{E}_1^{AB} \circ \mathcal{E}_2^{AB} \circ ... \mathcal{E}_i^{AB} $.
Additionally, according to Definition~\ref{def:sim}, $R$ is a set of random indices, each index taken from a different steering game. We can write $R = \{ \kappa_i | i \in \{1 ... N \}, \kappa_i = (i - 1)K + r, \; r \in_R \{1 ... K \} \}$. Essentially, $\kappa_i$ is the selected random round for the $i$th steering game, out of the total number of rounds in the $N$ games. Thus, $1 \leq \kappa_i \leq NK$.
Lastly, we denote $R_j = \{ \kappa_i | 1 \leq i \leq j \}$ to be the first $j$ random indices (i.e. the randomly selected rounds up to game $j$). 
We would like to compute a bound for:
\begin{equation*}
TD(Tr_{-R_j}(\mathcal{E}_{1,j}^{AB}(\rho)), Tr_{-R_j}(\mathcal{G}_{1,j}^{AB}(\rho)))
\end{equation*}
We do this inductively, starting with the first steering game. Theorem~\ref{thm:azuma} tells us that for a $K$-round steering game, if we observe a steering inequality saturation of order $O(\epsilon)$ the reduced state in a randomly chosen round will be $O(\epsilon^{1/6})$ close to a Bell state. Since we know that strategies $\mathcal{S}$ and $\mathcal{S}_g$ are $\epsilon$-structured, it follows that for the first steering game we have that:
\begin{equation*}
TD(Tr_{-R_1}(\Phi(\mathcal{E}_{1}^{AB}(\rho))), \hat{\mathcal{E}}^{AB}_{1}(\Ket{\psi_+}\Bra{\psi_+})) \leq O(\epsilon^{1/6})
\end{equation*}
\begin{equation*}
TD(Tr_{-R_1}(\Phi(\mathcal{G}_{1}^{AB}(\rho))), \hat{\mathcal{G}}^{AB}_{1}(\Ket{\psi_+}\Bra{\psi_+})) \leq O(\epsilon^{1/6})
\end{equation*}
Where $\Phi$ is the isometry of Theorem~\ref{theorem:selftesting}, $\hat{\mathcal{E}}^{AB}_{1}$ and $\hat{\mathcal{G}}^{AB}_{1}$ are the ideal operators (note that in fact $\hat{\mathcal{G}}^{AB}_{1} = \mathcal{G}^{AB}_{1}$ since Alice is honest). However, since $I \otimes M\Ket{\psi_+} = (XM^TX) \otimes I \Ket{\psi_+}$, for any operator $M$, in $\hat{\mathcal{E}}^{AB}_{1}$ we can ``shift'' Bob's action to Alice's side and so we are left with an operator which is ideal on Alice's side (conjugation by $X$ is not a problem, since $XXX=X$ and $XYX=-Y$) and acts as identity on Bob's side. This is precisely the $\hat{\mathcal{G}}^{AB}_{1}$ operator, hence $\hat{\mathcal{E}}^{AB}_{1} = \hat{\mathcal{G}}^{AB}_{1}$. Therefore, both $Tr_{-R_1}(\Phi(\mathcal{E}_{1}^{AB}(\rho)))$ and $Tr_{-R_1}(\Phi(\mathcal{G}_{1}^{AB}(\rho)))$ are close to the same state, and through a simple application of the triangle inequality it follows:
\begin{equation*}
TD(Tr_{-R_1}(\Phi(\mathcal{E}_{1}^{AB}(\rho))), Tr_{-R_1}(\Phi(\mathcal{G}_{1}^{AB}(\rho)))) \leq O(\epsilon^{1/6})
\end{equation*}
The specific isometry we have considered acts locally in each round (i.e., in each round it only acts on the reduced state of that round and introduces one ancilla qubit). Thus, it is possible in this case to commute the isometry with the tracing out operation, and then remove it completely from the inequality, since it is distance preserving. 
Of course, the tracing out operation will be acting on a different system (the non-isometrized system), requiring a different notation. In order to avoid complicating the notation, we use the same expression for the tracing out operation, and it is to be understood that we trace out all systems apart from the ones used in round $1$. We therefore have:
\begin{equation*}
TD(Tr_{-R_1}(\mathcal{E}_{1}^{AB}(\rho)), Tr_{-R_1}(\mathcal{G}_{1}^{AB}(\rho))) \leq O(\epsilon^{1/6})
\end{equation*}
This is the base case of our induction. Assume now the following holds:
\begin{equation} \label{eq:ind}
TD( Tr_{-R_{j-1}}(\mathcal{E}_{1,j-1}^{AB}(\rho)), Tr_{-R_{j-1}}(\mathcal{G}_{1,j-1}^{AB}(\rho))) \leq  (j-1)O(\epsilon^{1/6})
\end{equation}
We would like to show:
\begin{equation*}
TD(Tr_{-R_{j}}(\mathcal{E}_{1,j}^{AB}(\rho)), Tr_{-R_{j}}(\mathcal{G}_{1,j}^{AB}(\rho))) \leq  jO(\epsilon^{1/6})
\end{equation*}
For the set of rounds from game $j$ we can again apply Theorem~\ref{thm:azuma}, and use the closeness bound for the state from round $\kappa_j$. Note that for the first game, the shared state of Alice and Bob was $\Ket{\psi}$, while the state in game $j$ need not be a pure state. We can, however, still apply Theorem~\ref{thm:azuma}, regardless of the action of previous games, since the theorem assumes that Alice and Bob share some state $\sigma$ which can be either pure or mixed.
Moreover, as before, we will have that $\hat{\mathcal{E}}^{AB}_{1,j} = \hat{\mathcal{G}}^{AB}_{1,j}$, hence the reduced state in round $\kappa_j$ from strategy $\mathcal{S}$ is $O(\epsilon^{1/6})$-close to the state in round $\kappa_j$ from strategy $\mathcal{S}_g$. This together with Equation~\ref{eq:ind} and a triangle inequality lead to:
\begin{equation*}
TD(Tr_{-R_{j}}(\mathcal{E}_{1,j}^{AB}(\rho)), Tr_{-R_{j}}(\mathcal{G}_{1,j}^{AB}(\rho))) \leq  jO(\epsilon^{1/6})
\end{equation*}
Since this is true for all $j \leq N$, we get:
\begin{equation*}
TD(Tr_{-R}(\mathcal{E}_{1,N}^{AB}(\rho)), Tr_{-R}(\mathcal{G}_{1,N}^{AB}(\rho))) \leq  O(N\epsilon^{1/6})
\end{equation*}
Hence, strategy $\mathcal{S}$ $O(N \epsilon^{1/6})$-simulates strategy $\mathcal{S}_g$.
\end{proof}

\begin{lemma}\label{lemma:prerigid2}
Let $\mathcal{S}_{g} = (\rho, \{ \mathcal{E}_i^A \}, \{ \mathcal{G}_i^B \})$ be Alice and Bob's $\epsilon$-structured strategy for playing $N$ sequential $K$-round steering games, in which Alice plays as in $\mathcal{S}$ but also guesses Bob's outcomes. Specifically, Bob's operator will be $\mathcal{G}_i^B$, which yields Alice's guesses for Bob's outcomes. 
Let $\mathcal{S}_{id} = (\rho_{id}, \{ \mathcal{E}_{i}^A \}, \{ \mathcal{E}_{id \; i}^B \})$ be the ideal strategy in which
$\rho_{id}$ is a tensor product of Bell pairs and Alice and Bob play $N$ sequential $K$-round steering games ideally (i.e. they measure the same operators on their shared state). We have that $\mathcal{S}_{g}$ $O(N\epsilon^{1/6})$-simulates an isometric extension of $\mathcal{S}_{id}$, or $\mathcal{S}_{g} \approx \mathcal{S}_{id}$.
\end{lemma}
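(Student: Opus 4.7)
The plan is to introduce an intermediate ``guessing'' strategy $\hat{\mathcal{S}}_{g}=(\rho_{id},\{\mathcal{E}_{i}^{A}\},\{\mathcal{G}_{i}^{B}\})$ obtained from $\mathcal{S}_{g}$ by replacing the untrusted shared state $\rho$ with the ideal tensor product of $N$ Bell pairs $\rho_{id}$ (possibly tensored with a junk register to match the Hilbert space of $\mathcal{S}_{g}$). The target statement $\mathcal{S}_{g}\approx\mathcal{S}_{id}$ then decomposes, via a triangle inequality on trace distance, into the two chained claims $\mathcal{S}_{g}\approx\hat{\mathcal{S}}_{g}$ and $\hat{\mathcal{S}}_{g}\approx\mathcal{S}_{id}$.

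For the first link $\mathcal{S}_{g}\approx\hat{\mathcal{S}}_{g}$, I would induct on the number of games played, mirroring the proof of Lemma~\ref{lemma:prerigid1}. The structural fact that makes the induction clean here is that $\mathcal{G}_{i}^{B}$ is trivial on Bob's quantum register: it merely outputs Alice's own outcome as a classical guess and does not entangle the states of different games through any nontrivial action on Bob's side. Thus the adaptivity obstruction that motivated the game-based approach in Subsection~\ref{subsect:rigidity} does not arise in $\mathcal{S}_{g}$. For each $j$ the residual state that enters game $j$ plays the role of the $\sigma$ of Theorem~\ref{thm:azuma} (which is explicitly allowed to be mixed), and since $\mathcal{S}_{g}$ is $\epsilon$-structured, that game is won with correlation at least $1-\epsilon$ with high probability. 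Theorem~\ref{thm:azuma} then furnishes a local isometry on Bob's register under which the reduced state on the randomly chosen round $\kappa_{j}$ is $O(\epsilon^{1/6})$-close to a Bell pair tensored with junk. Telescoping over $j=1,\ldots,N$ gives the cumulative error $O(N\epsilon^{1/6})$.

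For the second link $\hat{\mathcal{S}}_{g}\approx\mathcal{S}_{id}$, I would invoke the symmetry identity $I_{A}\otimes M_{B}\ket{\psi_{+}}=(XM^{T}X)_{A}\otimes I_{B}\ket{\psi_{+}}$ recalled in Subsection~\ref{subsect:rigidity}. For the anti-commuting $\pm1$-eigenvalued observables $A_{0},A_{1}$ under consideration (the Pauli choice gives $XA_{k}^{T}X=A_{k}$), Alice's honest outcome on her half of a perfect Bell pair is perfectly correlated with the outcome Bob's ideal measurement would produce on his half, and the post-measurement state on Bob's side agrees with the one left by $\mathcal{S}_{id}$ up to a conjugation by $X$ that can be absorbed into a local isometry on Bob's register. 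Hence the guessing strategy on ideal Bell pairs reproduces both the classical statistics and the residual quantum state of $\mathcal{S}_{id}$ with zero additional error, giving $\hat{\mathcal{S}}_{g}\approx\mathcal{S}_{id}$ (as an isometric extension of $\mathcal{S}_{id}$, matching the lemma statement). Combining the two links yields the stated $O(N\epsilon^{1/6})$-simulation.

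The main obstacle I anticipate is the inductive step of the first link: I must confirm that the residual state after games $1,\ldots,j-1$ still makes game $j$ $\epsilon$-structured, so that Theorem~\ref{thm:azuma} reapplies with the \emph{same} $\epsilon$ and the error accumulates additively rather than multiplicatively in $N$. This is where the hypothesis that $\mathcal{S}_{g}$ is $\epsilon$-structured as a \emph{strategy} (each game individually $\epsilon$-structured with probability $\geq 1-\epsilon$), rather than just one game being $\epsilon$-structured, becomes essential; some care is needed to convert this probabilistic guarantee into a per-game trace-distance bound so the telescoping works uniformly.
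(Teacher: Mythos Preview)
Your proposal is correct and follows essentially the same route as the paper: the paper also introduces the intermediate strategy $\hat{\mathcal{S}}_{g}=(\rho_{id},\{\mathcal{E}_{i}^{A}\},\{\mathcal{G}_{i}^{B}\})$, argues $\mathcal{S}_{g}\approx\hat{\mathcal{S}}_{g}$ from the absence of adaptivity in the guessing strategy together with the $\epsilon$-structured hypothesis (invoking Theorem~\ref{thm:azuma} game-by-game to extract a Bell pair per game and accumulate $O(N\epsilon^{1/6})$), and then argues $\hat{\mathcal{S}}_{g}\approx\mathcal{S}_{id}$ by observing that on $\rho_{id}$ Alice's own outcomes coincide with Bob's ideal outcomes. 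The only cosmetic difference is that you invoke the Bell-state transfer identity $I\otimes M\ket{\psi_{+}}=(XM^{T}X)\otimes I\ket{\psi_{+}}$ explicitly for the second link, whereas the paper reserves that identity for Lemma~\ref{lemma:prerigid1} and handles the second link by the more direct remark that Alice and Bob measure the same observables on the same ideal state; both justifications amount to the same thing.
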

\begin{proof}
We shall first consider another guessing strategy $\hat{\mathcal{S}}_{g} = (\rho_{id}, \{ \mathcal{E}_i^A \}, \{ \mathcal{G}_i^B \})$. This strategy is identical to $\mathcal{S}_{g}$ except for the fact that it uses the ideal state (the tensor product of Bell states) as opposed to the real state, $\rho$. 
Note that $\rho_{id}$ must lie in the same Hilbert space as $\rho$. Without loss of generality, we can use some isometry to take a tensor product of Bell states and map it to the Hilbert space of $\rho$ and denote that state as $\rho_{id}$.
It is easy to see that $\mathcal{S}_{g} \approx \hat{\mathcal{S}}_{g}$. On the one hand, both strategies use the same operators for Alice and Bob (and in fact will produce identical statistics). On the other hand, since Alice is effectively guessing for Bob and the action on his subsystem is identity, there is no adaptivity in his strategy or outcomes.
Combining this with the fact that the strategies are $\epsilon$-structured, means that we can extract individual Bell pairs from each steering game and hence:
\begin{equation*}
TD(Tr_{-R}(\mathcal{G}_{1,N}^{AB}(\rho)), Tr_{-R}(\mathcal{G}_{1,N}^{AB}(\rho_{id}))) \leq  O(N\epsilon^{1/6})
\end{equation*}
We now show $\hat{\mathcal{S}}_{g} \approx \mathcal{S}_{id}$.
First note that, as before, $\{ \mathcal{E}_{i}^A \}$ corresponds to the ideal strategy for Alice and appears in both $\hat{\mathcal{S}}_g$ and $\mathcal{S}_{id}$ since she is always honest.
Additionally, as mentioned in the previous proof, because she is playing honestly according to the ideal strategy, her guesses for Bob's outcomes will be exactly her own measurement outcomes. This mimics the ideal strategy. In the ideal strategy Alice and Bob measure the same operators on their shared state. Moreover, both strategies are using the ideal state $\rho_{id}$.
This directly implies that:
\begin{equation*}
TD(Tr_{-R}(\mathcal{G}_{1,N}^{AB}(\rho_{id})), Tr_{-R}(\mathcal{E}_{id \; 1,N}^{AB}(\rho_{id}))) \leq  O(N\epsilon^{1/6})
\end{equation*}
Alternatively, we could have used the fact that $\hat{\mathcal{S}}_{g}$ and $\mathcal{S}_{id}$ use the same state and both strategies are $\epsilon$-structured, as in the previous Lemma, to prove the same thing.
Thus, it follows that strategy $\mathcal{S}_{g}$ $O(N\epsilon^{1/6})$-simulates an isometric extension of strategy $\mathcal{S}_{id}$, or $\mathcal{S}_{g} \approx \mathcal{S}_{id}$.
\end{proof}

It is worth mentioning that in the case of steering, unlike CHSH, the outcomes are deterministic, so Alice can guess perfectly for Bob. This led to a simplified rigidity proof compared to the one for CHSH games from \cite{ruv}.
Finally, we prove Theorem~\ref{theorem:rigidity}.
\begin{proof}
From Lemmas~\ref{lemma:prerigid1} and~\ref{lemma:prerigid2} we have that $\mathcal{S} \approx \mathcal{S}_{g}$ and that
$\mathcal{S}_{g} \approx \mathcal{S}_{id}$. 
Since in Lemma~\ref{lemma:prerigid2} we have shown that $\mathcal{S}_{g}$ $O(N\epsilon^{1/6})$-simulates an isometric extension of $\mathcal{S}_{id}$ we can consider an isometric extension of $\mathcal{S}_{id}$, denoted $\mathcal{S}_{id}'$ that has the same Hilbert space as strategy $\mathcal{S}$.
Using a triangle inequality it follows that $\mathcal{S} \approx \mathcal{S}_{id}'$. Thus $\mathcal{S}$ $O(N\epsilon^{1/6})$-simulates an isometric extension of the ideal strategy $\mathcal{S}_{id}$.
\end{proof}

\section{Proof of Theorem~\ref{theorem:totallysteerable}}\label{proof:Thm5}
It is clear that a maximally entangled Bell state satisfies the properties of total steerability. We therefore focus on proving that a totally steerable state is maximally entangled.
From the two constraints of definition~\ref{def:total} we will express the most general form of $\rho_{AB}$.
We start by considering $\Ket{\psi_{ABC}}$ as the $4$-qubit purification of $\rho_{AB}$. All other purifications are equivalent to this one, so this suffices for our purposes.
Writing $\Ket{\psi_{ABC}}$ in the computational basis, we have:
\begin{equation*}
\Ket{\psi_{ABC}} = \sum \limits_{i = 0}^{15} a_i \Ket{i}
\end{equation*}
Of course, we have the additional constraint:
\begin{equation*}
\sum \limits_{i = 0}^{15} |a_i|^2 = 1
\end{equation*}
By re-expressing the constraints from definition~\ref{def:total} and lemma~\ref{lemma:complete}:
\begin{equation*}
\rho_{BC} = \rho_B \otimes \rho_C
\end{equation*}
\begin{equation*}
\rho_B = I/2
\end{equation*}
as constraints on the amplitudes of $\Ket{\psi_{ABC}}$ we will have a large bilinear system of equations. From this system we will arrive at the following set of equations:
\begin{eqnarray*}
a_{2k} &=& f \cdot a_{2k + 2}, \; \; k \in \{0, 1, ... 6 \} \\
a_{2k + 3} &=& -f^* \cdot a_{2k + 1}, \; \; k \in \{0, 1, ... 6 \} \\
a_{4k + 1} &=& e^{i\phi_1} \cdot a_{4k + 2}, \; \; k \in \{0, 1, 2, 3 \} \\
a_{4k} &=& e^{i\phi_2} \cdot a_{4k + 3}, \; \; k \in \{0, 1, 2, 3 \}
\end{eqnarray*}

Where the parameters we introduced are $f \in \mathbb{C}$ and $\phi_1, \phi_2 \in [0, 2\pi]$.
Computing the matrix elements of $\rho_{AB}$, we arrive at the most general form, given by:
\begin{equation*}
\rho_{AB} = \frac{1}{2(|f|^2 + 1)} \left( \begin{array}{cccc}
|f|^2 & f & f & e^{i \phi_1} |f|^2 \\
f^* & 1 & e^{i \phi_2} & -f \\
f^* & e^{-i \phi_2} & 1 & -f \\
e^{-i \phi_1} |f|^2 & -f^* & -f^* & |f|^2 \\
\end{array} \right)
\end{equation*}
It can be easily checked that $Tr(\rho_{AB}^2) = 1$ and therefore $\rho_{AB}$ is a pure state. But since $\rho_B = I/2$, we have that $\rho_{AB}$ is a pure entangled state.
We know that all pure entangled states are non-local. Moreover, the condition $\rho_B = I/2$ also implies that the state is maximally non-local and hence a Bell state. 

\bibliography{report}

\begin{thebibliography}{10}

\bibitem{epr}
A.~Einstein, B.~Podolsky, and N.~Rosen.
\newblock Can quantum-mechanical description of physical reality be considered
  complete?
\newblock {\em Phys. Rev.}, 47:777--780, May 1935.

\bibitem{tsirelson}
B.S. Cirel'son.
\newblock Quantum generalizations of bell's inequality.
\newblock {\em Letters in Mathematical Physics}, 4(2):93--100, 1980.

\bibitem{ruv2}
Ben~W. Reichardt, Reichardt~Falk Unger, and Umesh Vazirani.
\newblock Classical command of quantum systems.
\newblock {\em Nature}, 496:456--460, 2013.

\bibitem{sch}
E.~Schr{\"o}dinger.
\newblock Probability relations between separated systems.
\newblock {\em Mathematical Proceedings of the Cambridge Philosophical
  Society}, 32:446--452, 10 1936.

\bibitem{bell}
J.~S Bell.
\newblock On the {E}instein-{P}odolsky-{R}osen paradox.
\newblock {\em Physics}, 1:195--200, 1964.

\bibitem{discord1}
W.H. Zurek.
\newblock Einselection and decoherence from an information theory perspective.
\newblock {\em Annalen der Physik}, 9(11-12):855--864, 2000.

\bibitem{discord2}
Harold Ollivier and Wojciech~H. Zurek.
\newblock Quantum discord: A measure of the quantumness of correlations.
\newblock {\em Phys. Rev. Lett.}, 88:017901, Dec 2001.

\bibitem{discord3}
L~Henderson and V~Vedral.
\newblock Classical, quantum and total correlations.
\newblock {\em Journal of Physics A: Mathematical and General}, 34(35):6899,
  2001.

\bibitem{ruv}
Ben~W. Reichardt, Falk Unger, and Umesh Vazirani.
\newblock A classical leash for a quantum system: Command of quantum systems
  via rigidity of {CHSH} games, 2012.
\newblock Eprint:\href{http://arxiv.org/abs/1209.0448}{arXiv:1209.0448}.

\bibitem{gkw}
Alexandru Gheorghiu, Elham Kashefi, and Petros Wallden.
\newblock Robustness and device independence of verifiable blind quantum
  computing.
\newblock {\em New Journal of Physics}, 17(8):083040, 2015.

\bibitem{hpf}
Michal Hajdu{\v{s}}ek, Carlos~A. P{\'e}rez-Delgado, and Joseph~F. Fitzsimons.
\newblock Device-independent verifiable blind quantum computation, 2015.
\newblock Eprint:\href{http://arxiv.org/abs/1502.02563}{arXiv:1502.02563}.

\bibitem{mckague}
Matthew McKague.
\newblock Interactive proofs for {BQP} via self-tested graph states, 2013.
\newblock Eprint:\href{http://arxiv.org/abs/1309.5675}{arXiv:1309.5675}.

\bibitem{deviceindependentqkd}
Umesh Vazirani and Thomas Vidick.
\newblock Fully device-independent quantum key distribution.
\newblock {\em Phys. Rev. Lett.}, 113:140501, Sep 2014.

\bibitem{diqkdsecurity}
Antonio Ac\'{\i}n, Nicolas Brunner, Nicolas Gisin, Serge Massar, Stefano
  Pironio, and Valerio Scarani.
\newblock Device-independent security of quantum cryptography against
  collective attacks.
\newblock {\em Phys. Rev. Lett.}, 98:230501, Jun 2007.

\bibitem{Pironio2010}
S.~Pironio, A.~Ac{\'i}n, S.~Massar, A.~Boyer de~la Giroday, D.~N. Matsukevich,
  P.~Maunz, S.~Olmschenk, D.~Hayes, L.~Luo, T.~A. Manning, and C.~Monroe.
\newblock Random numbers certified by {B}ell's theorem.
\newblock {\em Nature}, 464(7291):1021--1024, Apr 2010.

\bibitem{diqrng}
Jan Bouda, Marcin Paw\l{}owski, Matej Pivoluska, and Martin Plesch.
\newblock Device-independent randomness extraction from an arbitrarily weak
  min-entropy source.
\newblock {\em Phys. Rev. A}, 90:032313, Sep 2014.

\bibitem{PABGMS}
Stefano Pironio, Antonio Acín, Nicolas Brunner, Nicolas Gisin, Serge Massar,
  and Valerio Scarani.
\newblock Device-independent quantum key distribution secure against collective
  attacks.
\newblock {\em New Journal of Physics}, 11(4):045021, 2009.

\bibitem{steering1}
E.~G. Cavalcanti, S.~J. Jones, H.~M. Wiseman, and M.~D. Reid.
\newblock Experimental criteria for steering and the einstein-podolsky-rosen
  paradox.
\newblock {\em Phys. Rev. A}, 80:032112, Sep 2009.

\bibitem{steering2}
Paul Skrzypczyk, Miguel Navascu\'es, and Daniel Cavalcanti.
\newblock Quantifying einstein-podolsky-rosen steering.
\newblock {\em Phys. Rev. Lett.}, 112:180404, May 2014.

\bibitem{steering3}
Sania Jevtic, Matthew Pusey, David Jennings, and Terry Rudolph.
\newblock Quantum steering ellipsoids.
\newblock {\em Phys. Rev. Lett.}, 113:020402, Jul 2014.

\bibitem{steering4}
H.~M. Wiseman, S.~J. Jones, and A.~C. Doherty.
\newblock Steering, entanglement, nonlocality, and the einstein-podolsky-rosen
  paradox.
\newblock {\em Phys. Rev. Lett.}, 98:140402, Apr 2007.

\bibitem{steering5}
Matthew~F. Pusey.
\newblock Negativity and steering: A stronger peres conjecture.
\newblock {\em Phys. Rev. A}, 88:032313, Sep 2013.

\bibitem{BCWSW12}
Cyril Branciard, Eric~G. Cavalcanti, Stephen~P. Walborn, Valerio Scarani, and
  Howard~M. Wiseman.
\newblock One-sided device-independent quantum key distribution: Security,
  feasibility, and the connection with steering.
\newblock {\em Phys. Rev. A}, 85:010301, Jan 2012.

\bibitem{QRNGsteering}
Elsa Passaro, Daniel Cavalcanti, Paul Skrzypczyk, and Antonio Acín.
\newblock Optimal randomness certification in the quantum steering and
  prepare-and-measure scenarios.
\newblock {\em New Journal of Physics}, 17(11):113010, 2015.

\bibitem{loophole_steering}
Bernhard Wittmann, Sven Ramelow, Fabian Steinlechner, Nathan~K Langford,
  Nicolas Brunner, Howard~M Wiseman, Rupert Ursin, and Anton Zeilinger.
\newblock Loophole-free einstein--podolsky--rosen experiment via quantum
  steering.
\newblock {\em New Journal of Physics}, 14(5):053030, 2012.

\bibitem{local_steering}
Dylan~J Saunders, Steve~J Jones, Howard~M Wiseman, and Geoff~J Pryde.
\newblock Experimental epr-steering using bell-local states.
\newblock {\em Nature Physics}, 6(11):845--849, 2010.

\bibitem{boundEnt}
Micha\l{} Horodecki, Pawe\l{} Horodecki, and Ryszard Horodecki.
\newblock Mixed-state entanglement and distillation: Is there a ``bound''
  entanglement in nature?
\newblock {\em Phys. Rev. Lett.}, 80:5239--5242, Jun 1998.

\bibitem{HHHO}
Karol Horodecki, Micha\l Horodecki, Pawe\l Horodecki, and Jonathan Oppenheim.
\newblock General paradigm for distilling classical key from quantum states.
\newblock {\em IEEE Trans. Inf. Theor.}, 55(4):1898--1929, April 2009.

\bibitem{boundsQKD}
B.~Kraus, N.~Gisin, and R.~Renner.
\newblock Lower and upper bounds on the secret-key rate for quantum key
  distribution protocols using one-way classical communication.
\newblock {\em Phys. Rev. Lett.}, 95:080501, Aug 2005.

\bibitem{randomnessExpansion}
Roger Colbeck and Adrian Kent.
\newblock Private randomness expansion with untrusted devices.
\newblock {\em Journal of Physics A: Mathematical and Theoretical},
  44(9):095305, 2011.

\bibitem{Colbeck2012}
Roger Colbeck and Renato Renner.
\newblock Free randomness can be amplified.
\newblock {\em Nat Phys}, 8(6):450--453, Jun 2012.

\bibitem{abe}
Dorit Aharonov, Michael Ben-Or, and Elad Eban.
\newblock Interactive proofs for quantum computations.
\newblock In {\em Proceedings of Innovations in Computer Science 2010},
  ICS2010, pages 453--, 2010.

\bibitem{cver}
Rosario Gennaro, Craig Gentry, and Bryan Parno.
\newblock Non-interactive verifiable computing: Outsourcing computation to
  untrusted workers.
\newblock In {\em Proceedings of the 30th Annual Conference on Advances in
  Cryptology}, CRYPTO'10, pages 465--482, Berlin, Heidelberg, 2010.
  Springer-Verlag.

\bibitem{untrustedsteering}
Eric~G. Cavalcanti, Michael J.~W. Hall, and Howard~M. Wiseman.
\newblock Entanglement verification and steering when alice and bob cannot be
  trusted.
\newblock {\em Phys. Rev. A}, 87:032306, Mar 2013.

\bibitem{disteering}
Sacha Kocsis, Michael J.~W. Hall, Adam~J. Bennet, Dylan~J. Saunders, and
  Geoff~J. Pryde.
\newblock Experimental measurement-device-independent verification of quantum
  steering.
\newblock {\em Nat Commun}, 6, Jan 2015.

\bibitem{selftesting}
M~McKague, T~H Yang, and V~Scarani.
\newblock Robust self-testing of the singlet.
\newblock {\em Journal of Physics A: Mathematical and Theoretical},
  45(45):455304, 2012.

\bibitem{mayersyao}
Dominic Mayers and Andrew Yao.
\newblock Self testing quantum apparatus.
\newblock {\em Quantum Info. Comput.}, 4(4):273--286, July 2004.

\bibitem{selftesting2}
Matthew McKague.
\newblock Self-testing graph states.
\newblock In Dave Bacon, Miguel Martin-Delgado, and Martin Roetteler, editors,
  {\em Theory of Quantum Computation, Communication, and Cryptography}, volume
  6745 of {\em Lecture Notes in Computer Science}, pages 104--120. Springer
  Berlin Heidelberg, 2014.

\bibitem{selftesting3}
Matthew McKague and Michele Mosca.
\newblock Generalized self-testing and the security of the 6-state protocol.
\newblock In {\em Proceedings of the 5th Conference on Theory of Quantum
  Computation, Communication, and Cryptography}, TQC'10, pages 113--130,
  Berlin, Heidelberg, 2011. Springer-Verlag.

\bibitem{selftesting4}
Carl~A. Miller and Yaoyun Shi.
\newblock Optimal robust quantum self-testing by binary nonlocal xor games.
\newblock 2012.
\newblock Eprint:\href{http://arxiv.org/abs/1207.1819}{arXiv:1207.1819}.

\bibitem{selftesting5}
Tzyh~Haur Yang and Miguel Navascu\'es.
\newblock Robust self-testing of unknown quantum systems into any entangled
  two-qubit states.
\newblock {\em Phys. Rev. A}, 87:050102, May 2013.

\bibitem{PhysRevA.91.052111}
C\'edric Bamps and Stefano Pironio.
\newblock Sum-of-squares decompositions for a family of
  clauser-horne-shimony-holt-like inequalities and their application to
  self-testing.
\newblock {\em Phys. Rev. A}, 91:052111, May 2015.

\bibitem{azuma}
Kazuoki Azuma.
\newblock Weighted sums of certain dependent random variables.
\newblock {\em Tohoku Math. J. (2)}, 19(3):357--367, 1967.

\bibitem{hoeffding}
Wassily Hoeffding.
\newblock Probability inequalities for sums of bounded random variables.
\newblock {\em Journal of the American Statistical Association},
  58(301):13--30, 1963.

\bibitem{expSteering}
Devin H.~Smith et~al.
\newblock Conclusive quantum steering with superconducting transition-edge
  sensors.
\newblock {\em Nature Communications}, 3, Jan 2012.
\newblock doi:10.1038/ncomms1628.

\bibitem{cavalcanti1}
E.~Cavalcanti, S.~Jones, H.~Wiseman, and M.~Reid.
\newblock Experimental criteria for steering and the einstein-podolsky-rosen
  paradox.
\newblock {\em Phys. Rev. A}, 80:032112, Sep 2009.

\bibitem{tripartiteQKD}
A.El Allati, M.El Baz, and Y.~Hassouni.
\newblock Quantum key distribution via tripartite coherent states.
\newblock {\em Quantum Information Processing}, 10(5):589--602, 2011.

\bibitem{qrngVacuumStates}
Christian Gabriel, Christoffer Wittmann, Denis Sych, Ruifang Dong, Wolfgang
  Mauerer, Ulrik~L. Andersen, Christoph Marquardt, and Gerd Leuchs.
\newblock A generator for unique quantum random numbers based on vacuum states.
\newblock {\em Nat Photon}, 4(10):711--715, Oct 2010.

\bibitem{qrngVacuum}
M.~Jofre, M.~Curty, F.~Steinlechner, G.~Anzolin, J.~P. Torres, M.~W. Mitchell,
  and V.~Pruneri.
\newblock True random numbers from amplified quantum vacuum.
\newblock {\em Opt. Express}, 19(21):20665--20672, Oct 2011.

\bibitem{BNSVY}
Jean-Daniel Bancal, Miguel Navascu\'es, Valerio Scarani, Tam\'as V\'ertesi, and
  Tzyh~Haur Yang.
\newblock Physical characterization of quantum devices from nonlocal
  correlations.
\newblock {\em Phys. Rev. A}, 91:022115, Feb 2015.

\bibitem{Hoban16}
Ivan {\v{S}}upi{\'c} and Matty~J Hoban.
\newblock Self-testing through epr-steering.
\newblock 2016.
\newblock Eprint:\href{http://arxiv.org/abs/1601.01552}{arXiv:1601.01552}.

\bibitem{ip}
S.~Goldwasser, S.~Micali, and C.~Rackoff.
\newblock The knowledge complexity of interactive proof systems.
\newblock {\em SIAM J. Comput.}, 18(1):186--208, February 1989.

\bibitem{openproblem}
Scott Aaronson.
\newblock {T}he {S}cott {A}aronson 25.00\$ {P}rize.
\newblock \url{http://www.scottaaronson.com/blog/?p=284}.
\newblock Accessed: Jan. 30 2015.

\bibitem{fk}
Joseph~F. Fitzsimons and Elham Kashefi.
\newblock Unconditionally verifiable blind computation, 2012.
\newblock Eprint:\href{http://arxiv.org/abs/1203.5217}{arXiv:1203.5217}.

\bibitem{KKD14}
Theodoros Kapourniotis, Elham Kashefi, and Animesh Datta.
\newblock {Blindness and Verification of Quantum Computation with One Pure
  Qubit}.
\newblock In {\em 9th Conference on the Theory of Quantum Computation,
  Communication and Cryptography (TQC 2014)}, volume~27, pages 176--204, 2014.

\bibitem{Morimae2014}
Tomoyuki Morimae.
\newblock Verification for measurement-only blind quantum computing.
\newblock {\em Phys. Rev. A}, 89:060302, Jun 2014.

\bibitem{efk}
Stefanie Barz, Joseph~F. Fitzsimons, Elham Kashefi, and Philip Walther.
\newblock Experimental verification of quantum computation.
\newblock {\em Nature Physics}, 9:727--731, 2013.

\bibitem{kdk}
Theodoros Kapourniotis, Elham Kashefi, and Petros Wallden.
\newblock On optimising quantum communication in verifiable quantum computing.
\newblock 2015.
\newblock Eprint:\href{http://arxiv.org/abs/1506.06943}{arXiv:1506.06943}.

\bibitem{kw}
Elham Kashefi and Petros Wallden.
\newblock Optimised resource construction for verifiable quantum computation.
\newblock 2015.
\newblock Eprint:\href{http://arxiv.org/abs/1510.07408}{arXiv:1510.07408}.

\bibitem{bfk}
Anne Broadbent, Joseph Fitzsimons, and Elham Kashefi.
\newblock Universal blind quantum computation.
\newblock In {\em Proceedings of the 50th Annual Symposium on Foundations of
  Computer Science}, FOCS '09, pages 517 -- 526. IEEE Computer Society, 2009.

\bibitem{mbqc}
Vincent Danos, Elham Kashefi, and Prakash Panangaden.
\newblock The measurement calculus.
\newblock {\em J. ACM}, 54(2), April 2007.

\bibitem{Fujii_Morimae2012}
Keisuke Fujii and Tomoyuki Morimae.
\newblock Blind quantum computation protocol in which alice only makes
  measurements.
\newblock 2012.
\newblock Eprint:\href{http://arxiv.org/abs/1201.3966}{arXiv:1201.3966}.

\bibitem{Hayashi_Morimae2015}
Masahito Hayashi and Tomoyuki Morimae.
\newblock Verifiable measurement-only blind quantum computing with stabilizer
  testing.
\newblock 2015.
\newblock Eprint:\href{http://arxiv.org/abs/1505.07535}{arXiv:1505.07535}.

\bibitem{kapourniotis2015optimising}
Theodoros Kapourniotis, Vedran Dunjko, and Elham Kashefi.
\newblock On optimising quantum communication in verifiable quantum computing.
\newblock 2015.
\newblock Eprint:\href{http://arxiv.org/abs/1506.06943}{arXiv:1506.06943}.

\bibitem{pusey}
Matthew~Fairbairn Pusey.
\newblock Is quantum steering spooky?, September 2013.

\bibitem{monogamy1}
M.~D. Reid.
\newblock Monogamy inequalities for the einstein-podolsky-rosen paradox and
  quantum steering.
\newblock {\em Phys. Rev. A}, 88:062108, Dec 2013.

\bibitem{monogamy2}
Antony Milne, Sania Jevtic, David Jennings, Howard Wiseman, and Terry Rudolph.
\newblock Quantum steering ellipsoids, extremal physical states and monogamy.
\newblock {\em New Journal of Physics}, 16(8):083017, 2014.

\end{thebibliography}
\bibliographystyle{unsrt}

\end{document}